\documentclass[10pt,aps,prx,twocolumn,floatfix]{revtex4-2}
\usepackage{amsthm}
\usepackage{graphicx}
\usepackage{dcolumn}
\usepackage{bm}
 \usepackage{amsmath}
\usepackage{physics}
\usepackage{xcolor}

\DeclareMathOperator{\arccoth}{arccoth}

\DeclareMathOperator{\diag}{diag}

\newcommand{\mathbbm}[1]{\text{\usefont{U}{bbm}{m}{n}#1}}
\usepackage[caption=false]{subfig}
\usepackage{ragged2e}

\newtheorem{lemma}{Lemma}
\newtheorem{theorem}{Theorem}
\newtheorem{criterion}{Criterion}

\usepackage{indentfirst}
\usepackage{titlesec}
\usepackage{amssymb}
\titlespacing*{\section}{0pt}{*1.2}{*0.8}       
\titlespacing*{\subsection}{0pt}{*1.0}{*0.6}    

\newtheorem{corollary}{Corollary}
\DeclareMathOperator{\sinc}{sinc}

\begin{document}

\newcommand{\papertitle}{Perturbation theory for quantum information quantities of Gaussian states}
\title{\papertitle{}}
  \author{ Zihao Gong
        and Saikat Guha \\
 Department of Electrical and Computer Engineering, \\ University of Maryland, 8228 Paint Branch Drive, College Park, MD 20742
}

\begin{abstract}
Many quantum-information tasks can benefit from Gaussian quantum resources, while their fundamental performance limits are characterized by quantum information quantities. However, a unified perturbation theory for these quantities directly within the Gaussian-state formalism has been lacking. We develop such a theory for the von Neumann entropy, quantum relative entropy, quantum Chernoff quantity, and quantum fidelity, using perturbations of the moments of a reference Gaussian state. Using general Fréchet-derivative identities, we derive compact second-order expressions for reference-state covariance matrices whose symplectic eigenvalues are all greater than $1/2$. We also show that the perturbative quantum Chernoff quantity is maximized at its symmetric point, yielding the quantum Chernoff exponent (QCE). We further show that symplectic eigenvalues equal to $1/2$ can lead to slower asymptotic vanishing, potentially improving the corresponding fundamental performance limits. For reference-state covariance matrices containing symplectic eigenvalues both equal to and greater than $1/2$, we identify sufficient conditions under which the former determine the leading covariance-matrix contributions, allowing the remaining modes to be traced out without changing the leading-order asymptotic scaling or the coefficient. Applying the framework to quantum illumination and incoherent-light sub-Rayleigh object discrimination, we recover known asymptotic QCEs directly within the Gaussian-state formalism. In the latter problem, the slower asymptotic vanishing explains the known quadratic scaling of the QCE with the source separation.
\end{abstract}
  \maketitle

  \section{Introduction}






Gaussian states play a central role in continuous variable quantum information theory \cite{Braunstein2005,WANG2007,weedbrook2012Gaussian,Adesso2014}. They include experimentally accessible quantum states in quantum optics such as coherent, thermal, squeezed, and two-mode squeezed states \cite{Henning2016Squeezing15db,Warwick2002Entanglement}. Despite living in an infinite-dimensional Hilbert space, Gaussian states admit a compact analytical description through their mean vectors and covariance matrices \cite{ferraro2005gaussianstatescontinuousvariable}. This finite-dimensional phase-space representation supports closed-form expressions for many fundamental quantum information quantities, including the von Neumann entropy (VNE) \cite{weedbrook2012Gaussian}, quantum relative entropy (QRE) \cite{MarkQRE,Kaushik2018Renyi}, quantum Chernoff exponent (QCE) \cite{stefanoQCB}, and quantum fidelity (QF) \cite{LeonardoQF}.

These quantum information quantities characterize the fundamental limits across various quantum information processing tasks. The VNE plays a central role in quantum communications \cite{Holevo1998}, where Gaussian states and entanglement have been used to enhance continuous-variable communication protocols \cite{Braunstein2000Densecoding,Haowei2020Com}. In quantum hypothesis testing, the QRE and QCE determine the optimal asymptotic error exponents for asymmetric and symmetric multi-copy state discrimination, respectively \cite{Ogawa2000Steinlemma,Audenaert2007QCB,Bykeli2016QCB}, while Gaussian entanglement can improve discrimination performance in noisy and lossy settings \cite{SiHui2008QI,Saikat2009discrimination}. QRE also characterizes the fundamental limit of quickest quantum change-point detection \cite{Marco2003change,Bash2025Change}, where Gaussian resources can reduce detection latency \cite{Saikat2025change,Zihao2025change}. Finally, QF is related to the quantum Fisher information (QFI) and thus the quantum Cram\'er--Rao bound \cite{Braunstein1994,LeonardoQF,Helstrom1969QFI}, and Gaussian probes can enhance parameter-estimation precision in various active-sensing problems \cite{Caves1981,Zihao2023loss,gardner2026optimal}.

Despite the availability of closed-form expressions, evaluating quantum information quantities of Gaussian states can be analytically cumbersome because they often involve symplectic eigenvalues and the corresponding symplectic eigen-decompositions. When Gaussian states differ only by small perturbations of their moments from those of a {\em reference state}, perturbation theory provides a natural way to obtain the leading-order terms of these quantities directly from the mean and covariances of the reference state and the perturbations thereof. A perturbative framework based on Fréchet derivatives has recently been developed for the VNE, QRE, QCE, and QF \cite{Grace2022}, but it was formulated for finite-dimensional density operators. Applications to Gaussian-state problems therefore must rely on finite-dimensional Fock-space truncations \cite{lu2018quantum,Michael2022identigy} or approximations based on phase-space quasiprobability representations \cite{AliCox2024}. This thereby motivates our perturbative framework formulated directly within the Gaussian-state formalism that is unfettered from any such approximation.

We develop such a framework by expressing the relevant information quantities of Gaussian states as primary matrix functions of the Gaussian-state moments and applying Fréchet expansions, which provide a matrix generalization of the Taylor expansion. In the finite-dimensional density-operator setting \cite{Grace2022}, the matrix functions are evaluated on Hermitian operators, for which Fréchet-derivative representations are provided by the classical Daleckiĭ–Kreĭn theorem \cite{daletskii1965integration}. In contrast, the corresponding Gaussian matrix functions are generally evaluated on non-Hermitian but diagonalizable matrices. We therefore employ generalized divided-difference representations of Fréchet derivatives for diagonalizable matrices \cite{higham2008functions,Rubensson2024}. Building on these representations, we further derive general trace identities for Fréchet derivatives of primary matrix functions, some of which generalize identities developed in \cite{Grace2022}, extending the mathematical tools available for perturbative analysis.

In this paper, building on these mathematical identities, we derive compact second-order perturbative expressions for the VNE, QRE, quantum Chernoff quantity, and QF, as compact functions of the perturbations of the mean vectors and covariance matrices. These expressions apply to reference covariance matrices whose symplectic eigenvalues are all greater than $1/2$. We also show that the perturbative quantum Chernoff quantity is maximized at $s=1/2$, yielding the perturbative QCE. We then identify slower asymptotic vanishing associated with reference-state symplectic eigenvalues equal to $1/2$ for all four information quantities. Furthermore, we establish sufficient conditions under which modes with symplectic eigenvalues greater than $1/2$ can be traced out without changing the leading-order asymptotic coefficients of the covariance-matrix contributions. Finally, we apply our framework to two well-known contemporary use cases: (1) quantum illumination for remote target detection and classification~\cite{AliCox2024} and (2) incoherent-light sub-Rayleigh optical source discrimination ~\cite{krovi2016attainingquantumlimitpassive,lu2018quantum,Michael2022identigy}, recovering known asymptotic QCEs directly within the Gaussian-state formalism. In the latter problem, our aforesaid result on the slower asymptotic vanishing explains the known quadratic scaling of the QCE with source separation. Our framework presented in this paper can be applied to various other problems and use cases, such as multimode squeezing based quantum probes for beam deflectometry~\cite{He2026} and opto-mechanical imaging~\cite{Wilson2025}, Gaussian-entanglement-enhanced quickest change detection of channel parameters~\cite{Saikat2025change,Zihao2025change}, and quantum channel discrimination~\cite{Pirandola2019,Nuradha2025}.

The paper is organized as follows. Sec.~\ref{sec: preliminary} introduces the mathematical preliminaries.
Sec.~\ref{sec: perturbative formula} derives the perturbative formulas and verifies their accuracy. Sec.~\ref{sec:boundary} analyzes slower asymptotic vanishing and develops the perturbative-analysis procedure. Sec.~\ref{sec:applications} applies the framework to quantum illumination and sub-Rayleigh source discrimination.
 Sec.~\ref{sec:conclusion} concludes the paper.

  \section{Preliminaries}\label{sec: preliminary}

\subsection{Notation}
Throughout this paper, a scalar function $f:\mathbb{C}\to\mathbb{C}$ is extended to its corresponding primary matrix function. If matrix $A\in\mathbb{C}^{N\times N}$ is diagonalizable with eigendecomposition
\begin{align}
A = U\Lambda U^{-1},
\qquad
\Lambda = \operatorname{diag}(\lambda_1,\ldots,\lambda_N),
\label{eq}
\end{align}
the corresponding primary matrix function is defined as
\begin{align}
f(A) \!=\! U f(\Lambda) U^{-1},\
f(\Lambda) \!= \!\operatorname{diag}\left(f(\lambda_1),\ldots,f(\lambda_N)\right).
\end{align}
We use the same symbol $f$ for the scalar function and its associated primary matrix function.

We use the standard asymptotic notation $f(x)=O(g(x))$, $x\to x_0$, to mean $\limsup_{x\to x_0}\left|f(x)/g(x) \right|<\infty$, $f(x)=o(g(x))$, $x\to x_0$, to mean
$\lim_{x\to x_0}f(x)/g(x)=0$, and $f(x)=\Theta(g(x))$, $x\to x_0$, to mean that there exist
constants $c_1,c_2>0$ such that
$c_1|g(x)|
\le
|f(x)|
\le
c_2|g(x)|$
for $x$ sufficiently close to $x_0$.
The limiting point $x_0$ will be clear from the context.
The first and second derivatives of the scalar function $f$ are denoted by $f'(x)$ and $f''(x)$, respectively. 

For a matrix $A$, its $kl$-th entry is denoted by $(A)_{kl}$. For a vector $d$, its $k$-th element is denoted by $(d)_k$, whereas $d_k$ denotes the mean vector of the $k$-th state.

  \subsection{Gaussian state formalism}
  Consider an $n$-mode bosonic system with annihilation operators $\left\{\hat{a}\right\}_{k=1}^n$ satisfying $[\hat{a}_k,\hat{a}_l^{\dagger} ] = \delta_{kl}$. We define the quadrature operator vector $\hat{R} = \left( \hat{q}_1,\dots,\hat{q}_n,\hat{p}_1,\dots,\hat{p}_n \right)$, where $\hat{q}_k = \frac{\hat{a}_k+\hat{a}_k^{\dagger}}{\sqrt{2}}$, $\hat{p}_k = \frac{\hat{a}_k-\hat{a}_k^{\dagger}}{\sqrt{2}i} $, and $i$ is the imaginary unit.
  The quadrature operators satisfy the canonical commutation relations
\begin{equation}
[\hat R_k,\hat R_l] = i \Omega_{kl},
\end{equation}
where
\begin{equation}
\Omega =
\begin{pmatrix}
0 & I_n\\
-I_n & 0
\end{pmatrix},
\end{equation}
is the symplectic form and $I_n$ is the $n\times n$ identity matrix.

An $n$-mode Gaussian state $\hat\rho$ of this system is completely characterized by its mean vector $d$ and covariance matrix $\sigma$, whose entries are defined by
\begin{align}
(d)_k &= \langle \hat R_k \rangle_{\hat{\rho}}\\
(\sigma)_{kl}
&=
\frac12
\left\langle
\{\hat R_k-(d)_k,\hat R_l-(d)_l\}
\right\rangle_{\hat{\rho}},
\end{align} 
where $\langle\hat{A}\rangle_{\hat{\rho}} = \trace \left(\hat{A}\hat{\rho}\right)$, and $\{\hat{A},\hat{B}\} = \hat{A}\hat{B}+\hat{B}\hat{A}$. 
Physical states satisfy the uncertainty relation that  $ \sigma + \frac{i}{2}\Omega $ is positive semi-definite.
In this paper, a Gaussian state with mean vector $d_k$ and covariance matrix
$\sigma_k$ is denoted by $\hat{\rho}_k(\sigma_k,d_k)$, or simply
by $\hat{\rho}_k$.

By Williamson's theorem, every  covariance matrix of a Gaussian state admits the symplectic decomposition
\begin{align}
 &\sigma_k = S_k \Lambda_k S_k^T, \label{eq:Williamson decomposition 1}\\
\Lambda_k =
\begin{bmatrix}
D_k & 0\\
0 & D_k
\end{bmatrix}&,
\qquad 
D_k=\diag\left(\nu_1^{(k)},\ldots,\nu_n^{(k)}\right). \label{eq:Williamson decomposition 2}
\end{align}
Here, $S_k$ is a symplectic matrix satisfying $S_k\Omega S_k^T=\Omega$, and $D_k$ contains the symplectic eigenvalues of $\sigma_k$. For a physical covariance matrix, $\nu_l^{(k)}\ge 1/2$. Equivalently, the eigenvalues of $\Omega\sigma_k$ are

$$
\left\{i\nu_1^{(k)},\ldots,i\nu_n^{(k)},-i\nu_1^{(k)},\ldots,-i\nu_n^{(k)}\right\}.
$$
Each eigenvalue pair $\pm i\nu_j$ of $\Omega\sigma$ corresponds to the Gaussian mode with symplectic eigenvalue $\nu_j$ in the Williamson decomposition of $\sigma$.

\subsection{Fr\'echet derivatives}

We briefly introduce the Fr\'echet derivatives and perturbative expansions used throughout this work.

\textit{Definition of Fr\'echet derivatives}---
Let $f$ be a matrix function defined in a neighborhood of $A$. The first Fr\'echet derivative of $f$ at $A$ with perturbation $E$ is the unique linear map $D^{(1)}_{f(A)}[E]$ satisfying  \cite{bhatia2013matrix,higham2008functions}
\begin{align}
\lim_{\|E\|\to 0}
\frac{
\|f(A+E)-f(A)- D_{f(A)}^{(1)}[E]\|
}{
\|E\|
}
=0,
\end{align}
where $\| \cdot \|$ denotes a matrix norm. The $k$th Fr\'echet derivative of $f$ at $A$ with perturbation $ [E_1,\dots,E_k] $ can be defined recursively as the unique multilinear function $D^{(k)}_{f(A)}[E_1,\dots,E_k]$ such that \cite{higham2014higher}
\begin{align}
    &D^{(k-1)}_{f(A+E_k)}[E_1,\dots,E_{k-1}] - D^{(k-1)}_{f(A)}[E_1,\dots,E_{k-1}] \nonumber \\
    &\qquad \qquad - D^{(k)}_{f(A)}[E_1,\dots,E_k] = o(\|E_k\|).
\end{align}
When all perturbations are identical, i.e., $E_k = E, \forall k$,  we use the abbreviation
\begin{align}
D^{(k)}_{f(A)}[E]
:=
D^{(k)}_{f(A)}[E,\ldots,E].
\end{align}

\textit{G\^ateaux-derivative representation}---
For a sufficiently smooth function $f$, the  Fr\'echet derivatives can equivalently be expressed through directional derivatives as \cite{higham2014higher}
\begin{align}
   &D^{(k)}_{f(A)}[E_1,\dots,E_k] \nonumber \\ 
&\qquad=
\left.
\frac{\partial^k}{\partial t_1\cdots\partial t_k}
f\left(A+\sum_{j=1}^k t_jE_j\right)
\right|_{t_1=\cdots=t_k=0}. \label{eq: directional definition}
\end{align}

\textit{Taylor expansion in Fr\'echet derivatives}---
The Fr\'echet derivatives provide a matrix analogue of the Taylor expansion and underpin the perturbative analysis developed in Sec.~III. For sufficiently smooth $f$,
up to order $K$, the Fr\'echet expansion is given by \cite[Th.~1]{Higham2010}
\begin{equation}
    f(A+E)
    =
    f(A)
    +
    \sum_{k=1}^{K}
    \frac{1}{k!}
    D_{f(A)}^{(k)}[E]
    +
    R_K(A,E),\label{eq:Frechet expansion}
\end{equation}
where $R_K(A,E)$ denotes the higher-order remainder.

\textit{Divided-difference representation}---
Fr\'echet derivatives of matrix functions can be expressed explicitly in terms of divided differences of the underlying scalar function. For Hermitian matrices, such representations are provided by the classical Dalecki\u{\i}--Kre\u{\i}n (DK) theorem \cite{daletskii1965integration}. For diagonalizable but non-Hermitian matrices, the DK theorem is extended to yield an explicit divided-difference representation of the first Fr\'echet derivative \cite[Corollary~3.12]{higham2008functions}, while conditions for the existence and continuity of higher-order Fr\'echet derivatives are established in \cite[Th.~3.5]{higham2014higher}. More generally, explicit divided-difference representations for higher-order Fr\'echet derivatives of an $N$-by-$N$ diagonalizable matrix are derived in \cite[Eq.~4.1]{Rubensson2024}, under the sufficient condition that $f$ is $(k+1)N-1$ times continuously differentiable for the existence and continuity of the $k$-th Fr\'echet derivative \cite[Th.~3.1]{Rubensson2024}. 

Since the matrix functions arising in the Gaussian-state formalism are generally evaluated at non-Hermitian but diagonalizable matrices, these generalized representations are applicable to the perturbative analysis developed in this work. Furthermore, since all matrix functions considered in this work are analytic on the relevant spectral domains, the required  continuity conditions are automatically satisfied. We therefore do not emphasize these conditions in the subsequent derivations. We summarize below the first- and second-order divided-difference representations that will be used throughout the paper.

\textit{First- and second-order formulas}
---
For a diagonal matrix $\Lambda = \operatorname{diag}\left(\lambda_1,\dots,\lambda_N\right)$, 
the first divided difference form of $f$ with respect to $ \Lambda $ is defined as 
\begin{align}
     & f^{[1]}[\lambda_k,\lambda_l]  =  
 \begin{cases}
 \dfrac{f(\lambda_k)-f(\lambda_l)}{\lambda_k-\lambda_l}, & \lambda_k\neq \lambda_l\\[6pt]
  f'(\lambda_k), & \lambda_k =\lambda_l
\end{cases}.
\label{eq:first_dd}
\end{align}
The second divided difference of $f$ with respect to $ \Lambda $ is defined recursively by
\begin{align}
      &f^{[2]}[\lambda_k,\lambda_l,\lambda_m]\nonumber\\
    &\qquad= \begin{cases}
       \frac{f^{[1]}[\lambda_k,\lambda_l] -f^{[1]}[\lambda_l,\lambda_m] }{\lambda_k - \lambda_m}
       &\ \lambda_k\ne\lambda_m \\
        \frac{f^{\prime}(\lambda_k) -f^{[1]}[\lambda_k,\lambda_l] }{\lambda_k - \lambda_l} &\ \lambda_k = \lambda_m \ne \lambda_l \\
        \frac{1}{2} f^{\prime \prime}\left( \lambda_k \right)  &\ \lambda_k=\lambda_m =\lambda_l
   \end{cases}. \label{eq:second_dd}
\end{align}
Let $A$ be diagonalizable, with eigendecomposition $A = U \Lambda U^{-1}$ and 
define
\begin{align}
    \tilde{E} = U^{-1}EU.
\end{align}
The first Fr\'echet derivative of $f$ admits the divided-difference
representation
\begin{align}
D^{(1)}_{f(A)}[E] =  U
    \left(
    f^{[1]}[\Lambda]\circ \tilde{E}
    \right)
    U^{-1}, \label{eq:First Frechet derivative Hadmard product}
\end{align}
where $\circ$ denotes the Hadamard product (entry-wise product) and the $kl$-th entry of $f^{[1]}[\Lambda]$ yields
\begin{align}
    \left(f^{[1]}[\Lambda]\right)_{kl}
    =
    f^{[1]}[\lambda_k,\lambda_l].
\end{align}
Similarly, the second Fr\'echet derivative is given by
\begin{align}
    D^{(2)}_{f(A)}[E]
    =
    2U M U^{-1},
\end{align}
where the $km$-th entry of matrix $M$ is
\begin{align}
    (M)_{km}
    =
    \sum_{l=1}^{N}
    f^{[2]}[\lambda_k,\lambda_l,\lambda_m]\,
    (\tilde{E})_{kl}(\tilde{E})_{lm}.
\end{align}
\textit{Spectral-projector form}---For the derivations in the appendices, it is often convenient to use an equivalent spectral-projector representation. Define
\begin{align}
    P_j=Ue_j e_j^\dagger U^{-1},
\end{align}
where  $e_j\in\mathbb{C}^N$ denote the $j$-th standard basis vector, whose $j$-th entry is one and all other entries are zero. 
Then
\begin{align}
    A=\sum_{j=1}^N \lambda_j P_j,
\quad
I_N=\sum_{j=1}^N P_j,
\quad
P_jP_k=\delta_{jk}P_j. \label{eq:A = P}
\end{align}
Using the spectral projectors, the first Fr\'echet derivative can equivalently be written as
\begin{align}
    D^{(1)}_{f(A)}[E]
=
\sum_{j,k=1}^N
f^{[1]}[\lambda_j,\lambda_k]P_jEP_k. \label{eq:D1f P}
\end{align}
Similarly, the second Fr\'echet derivative takes the form
\begin{align}
    D^{(2)}_{f(A)}[E]
=
2\sum_{j,k,l=1}^N
f^{[2]}[\lambda_j,\lambda_k,\lambda_l] 
P_jEP_kEP_l. \label{eq:D2f P}
\end{align}
The eigenbasis and spectral-projector forms are equivalent; we use whichever representation is more convenient.

\section{Perturbative formulas for information quantities of Gaussian states} \label{sec: perturbative formula}
In this section, we derive second-order perturbative formulas for four
Gaussian quantum information quantities. Throughout this section, we assume that all symplectic eigenvalues of the reference covariance matrix are greater than $1/2$, ensuring that the matrix functions and their derivatives appearing in the perturbative expansions are regular. The case involving symplectic eigenvalues equal to $1/2$ is analyzed in Sec.~\ref{sec:boundary}.
\subsection{Von Neumann entropy}
For a general quantum state $\hat{\varrho}$, the VNE is defined as
\begin{align}
\mathcal{S}(\hat{\varrho})
=
-\trace \left(\hat{\varrho}\log\hat{\varrho}\right).
\end{align}
The VNE of an $n$-mode Gaussian state $\hat{\rho}(\sigma,d)$ depends only on
the symplectic eigenvalues $\{\nu_k\}_{k=1}^n$ of the covariance matrix $\sigma$ \cite{weedbrook2012Gaussian}:
\begin{equation}
\mathcal{S}(\hat{\rho})
=
\sum_{k=1}^n
g(\nu_k),
\label{eq:SvN}
\end{equation}
where
\begin{equation}
g(x)
=
\left(x+\tfrac12\right)
\log\left(x+\tfrac12\right)
-
\left(x-\tfrac12\right)
\log\left(x-\tfrac12\right).\label{eq:g}
\end{equation}
To express the VNE in a form suitable for Fr\'echet expansion, we use the fact that the eigenvalues of $\Omega\sigma$ are $\{\pm i\nu_k\}_{k=1}^n$. Define the scalar function
\begin{align}
f_{\rm VN}(x)=g\left(\sqrt{-x^2}\right).
\end{align}
For $\sigma = \sigma_0 + \Delta\sigma$, define $A = \Omega \sigma_0$ and $E = \Omega\Delta \sigma$.
The VNE of $\hat{\rho}$ can be written as
\begin{align}
\mathcal{S}(\hat{\rho})
=
\frac{1}{2}
\trace\left[
f_{\rm VN}(A+E)
\right].
\end{align}

\begin{theorem}[Perturbative expansion of the von Neumann entropy]
\label{thm:VNE-perturbation}
Let $\hat{\rho}$ be the Gaussian state introduced above,  with reference
covariance matrix $\sigma_0$ whose symplectic eigenvalues are all
greater than $1/2$. Then
the second-order perturbative approximation is
\begin{align}
\mathcal{S}_{\rm pert}^{(2)}(\hat{\rho})
&=
\frac{1}{2}
\trace\left[
f_{\rm VN}(A)
\right]
+
\frac{1}{2}
\trace\left[
E f_{\rm VN}'(A)
\right]
\nonumber \\
&\qquad \qquad
+
\frac{1}{4}
\trace\left[
E
D_{f_{\rm VN}'(A)}^{(1)}[E]
\right]. \label{eq:result VNE}
\end{align}
\end{theorem}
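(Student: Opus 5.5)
We apply the Fréchet expansion \eqref{eq:Frechet expansion} with $K=2$ to $f_{\rm VN}$ at $A=\Omega\sigma_0$ and take the trace, so that
\begin{align}
\mathcal{S}(\hat\rho)=\tfrac12\trace f_{\rm VN}(A)+\tfrac12\trace D^{(1)}_{f_{\rm VN}(A)}[E]+\tfrac14\trace D^{(2)}_{f_{\rm VN}(A)}[E]+\tfrac12\trace R_2 .
\end{align}
Dropping the remainder defines $\mathcal{S}^{(2)}_{\rm pert}$. It then remains to identify the two trace terms with those in \eqref{eq:result VNE}. First we check regularity. $A$ is diagonalizable with eigenvalues $\pm i\nu_k$. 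On these points $-x^2=\nu_k^2>0$, so $\sqrt{-x^2}=\nu_k>1/2$. Near each eigenvalue, $f_{\rm VN}$ is therefore analytic: $-x^2$ stays near $\nu_k^2$, the square root is analytic there, and $g$ is analytic on $\mathrm{Re}\,x>1/2$. This justifies the divided-difference representations \eqref{eq:D1f P} and \eqref{eq:D2f P}, as stated in the preliminaries.

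For the first-order term, use \eqref{eq:D1f P}. Taking the trace and using cyclicity with $P_kP_j=\delta_{jk}P_j$, only the diagonal terms survive:
\begin{align}
\trace D^{(1)}_{f(A)}[E]=\sum_j f^{[1]}[\lambda_j,\lambda_j]\trace(P_jE)=\sum_j f'(\lambda_j)\trace(P_jE)=\trace\left[E f'(A)\right].
\end{align}
Multiplying by $1/2$ gives the second term of \eqref{eq:result VNE}.

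For the second-order term, take the trace of \eqref{eq:D2f P}. Cyclicity forces $l=j$:
\begin{align}
\trace D^{(2)}_{f(A)}[E]=2\sum_{j,k}f^{[2]}[\lambda_j,\lambda_k,\lambda_j]\trace(P_jEP_kE).
\end{align}
The key identity needed is $f^{[2]}[\lambda_j,\lambda_k,\lambda_j]=\tfrac12 (f')^{[1]}[\lambda_j,\lambda_k]$. For $\lambda_j\neq\lambda_k$, the middle case of \eqref{eq:second_dd} gives
\begin{align}
f^{[2]}[\lambda_j,\lambda_k,\lambda_j]=\frac{f'(\lambda_j)-f^{[1]}[\lambda_j,\lambda_k]}{\lambda_j-\lambda_k}.
\end{align}
This is \emph{not} equal to $\tfrac12\,[f'(\lambda_j)-f'(\lambda_k)]/(\lambda_j-\lambda_k)$ term by term. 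Equality holds only after symmetrizing in $j\leftrightarrow k$, which is allowed because $\trace(P_jEP_kE)$ is symmetric in $(j,k)$ by cyclicity. Averaging over the pair $(j,k)$ and $(k,j)$ and using $f^{[1]}[\lambda_j,\lambda_k]=f^{[1]}[\lambda_k,\lambda_j]$ gives
\begin{align}
\tfrac12\left(f^{[2]}[\lambda_j,\lambda_k,\lambda_j]+f^{[2]}[\lambda_k,\lambda_j,\lambda_k]\right)=\tfrac12\,\frac{f'(\lambda_j)-f'(\lambda_k)}{\lambda_j-\lambda_k}=\tfrac12\,(f')^{[1]}[\lambda_j,\lambda_k].
\end{align}
For $\lambda_j=\lambda_k$ the identity is immediate, since $\tfrac12 f''(\lambda_j)=\tfrac12 (f')^{[1]}[\lambda_j,\lambda_j]$.

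Combining these,
\begin{align}
\trace D^{(2)}_{f(A)}[E]=\sum_{j,k}(f')^{[1]}[\lambda_j,\lambda_k]\trace(EP_jEP_k)=\trace\left[E\,D^{(1)}_{f'(A)}[E]\right],
\end{align}
where the last step applies \eqref{eq:D1f P} to $f'$. Multiplying by $1/4$ yields the final term of \eqref{eq:result VNE}.

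The main obstacle I expect is this symmetrization step for the second divided difference. One has to handle degenerate eigenvalues carefully, since the $\pm i\nu_k$ come in pairs and may coincide across modes. One also has to confirm that the recursive definition \eqref{eq:second_dd} is consistent, so that the confluent case matches the limit of the non-confluent one. Since $f_{\rm VN}$ is analytic near the spectrum, I would handle the degenerate cases by continuity, perturbing to distinct eigenvalues and taking limits. Finally, I would remark that $\trace R_2=o(\|E\|^2)$ by the cited Fréchet expansion, which justifies calling the result the second-order approximation.
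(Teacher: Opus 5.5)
Your proposal is correct and follows the paper's route. The paper obtains Theorem~\ref{thm:VNE-perturbation} from the second-order Fr\'echet expansion of $\frac12\trace f_{\rm VN}(A+E)$, and the trace simplifications are the identities of Lemmas~\ref{lemma: trace} and~\ref{lemma: trace 2} specialized to the constant function $1$; your inline symmetrization of $f^{[2]}[\lambda_j,\lambda_k,\lambda_j]$ is exactly the $h_{k,l}$ computation in the proof of Lemma~\ref{lemma: trace 2}. The limiting argument you anticipate for degenerate eigenvalues is not needed: the divided differences are defined case by case on eigenvalue values, so your symmetrization already handles $\lambda_j=\lambda_k$ with $j\neq k$ directly.
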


The VNE depends only on the covariance matrix and is independent of
the mean vector. Its perturbative formula follows directly from the
Fr\'echet expansion of $f_{\rm VN}$ around $A$ in the direction $E$.

\subsection{Quantum relative entropy}

The QRE between two quantum states
$\hat{\varrho}_1$ and $\hat{\varrho}_2$ is defined as
\begin{align}
D(\hat{\varrho}_2\|\hat{\varrho}_1)
=
\trace\left[
\hat{\varrho}_2
\left(
\log\hat{\varrho}_2-\log\hat{\varrho}_1
\right)
\right].
\end{align}
For two $n$-mode Gaussian states
$\hat{\rho}_1(\sigma_1,d_1)$ and
$\hat{\rho}_2(\sigma_2,d_2)$, the QRE is given by \cite{MarkQRE}
\begin{align}
\!D(\hat{\rho}_2\|\hat{\rho}_1)\!=\!
\frac{1}{2}
\left[
\log\left(\frac{Z_1}{Z_2}\right)
-
\trace\left(\Delta G \sigma_2\right)
+
\delta_d^T G_1\delta_d
\right],
\label{eq:QRE general}
\end{align}
where
$Z_k=\det\left(\sigma_k+\frac{i}{2}\Omega\right)$,
$\Delta G  =G_2-G_1$,
$G_k=2i\Omega\,\operatorname{arccoth}\left(2i\sigma_k\Omega\right)$,
and $\delta_d=d_1-d_2$.

To express the Gaussian QRE in a form suitable for Fr\'echet expansion, let
\begin{align}
\sigma_k=\sigma_0+\Delta\sigma_k,
\
A=\Omega\sigma_0,
\
E_k=\Omega\Delta\sigma_k,
\ k=1,2, \label{eq:perturbtive setup}
\end{align}
and define
\begin{align}
f_{\rm RE}(x)
=
2i\arccoth(2ix). \label{eq:fRE}
\end{align}
With these definitions, the Gaussian QRE can be rewritten entirely in terms of matrix functions of $A+E_k$ as
\begin{align}
D(\hat{\rho}_2\|\hat{\rho}_1)
&=
\frac{1}{2}
\left[
\log\left(
\frac{
\det\left(A+E_1-\frac{i}{2}I\right)}
{
\det\left(A+E_2-\frac{i}{2}I\right)}
\right)
\right.
\nonumber\\
&\
+\trace\left[
\left(
f_{\rm RE}(A+E_1)
-
f_{\rm RE}(A+E_2)
\right)
(A+E_2)
\right]
\nonumber\\
&\ \left.
+
\delta_d^T
f_{\rm RE}(A+E_1)
\Omega\delta_d
\right].
\label{eq:QRE A E}
\end{align}
The derivation of \eqref{eq:QRE A E} from the standard Gaussian QRE expression in
\eqref{eq:QRE general} is provided in
Appendix~\ref{appendix sub:QRE CM}.

\begin{theorem}[Perturbative expansion of the QRE]
\label{thm:QRE-perturbation}
Let $\hat{\rho}_1$ and $\hat{\rho}_2$  be  the two Gaussian states introduced above,  with reference
covariance matrix $\sigma_0$ whose symplectic eigenvalues are all
greater than $1/2$.
Then the covariance-matrix contribution vanishes at zeroth and first-order in the Fr\'echet expansion, and the second-order perturbative approximation is
\begin{align}
D_{\rm pert}^{(2)}(\hat{\rho}_2\|\hat{\rho}_1)
&=
-\frac{1}{4}
\trace\left[
(E_1-E_2)
D^{(1)}_{f_{\rm RE}(A)}[E_1-E_2]
\right]
\nonumber\\
&\quad+
\frac{1}{2}
\delta_d^T f_{\rm RE}(A)\Omega\delta_d.
\label{eq:QRE}
\end{align}
\end{theorem}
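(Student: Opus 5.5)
The plan is to expand each of the three terms of \eqref{eq:QRE A E} around $A$. I would treat the covariance-matrix part (the first two terms) and the mean-vector part (the third term) separately. Rather than expanding the $\log\det$ and trace terms one by one and matching many second-order pieces, I would reorganize the covariance part around a single scalar function whose first derivative is simple.

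\emph{Step 1: reorganize the covariance part.} For a matrix $X$ of the form $\Omega\sigma$, define
\begin{align}
G(X)=\trace\left[\log\left(X-\tfrac{i}{2}I\right)\right]+\trace\left[f_{\rm RE}(X)\,X\right],
\end{align}
and write $F_k=f_{\rm RE}(A+E_k)$. Using $\log\det=\trace\log$ and the splitting $\trace[(F_1-F_2)(A+E_2)]=\trace[F_1(A+E_1)]-\trace[F_1(E_1-E_2)]-\trace[F_2(A+E_2)]$, the covariance part of $2D$ becomes
\begin{align}
G(A+E_1)-G(A+E_2)-\trace\left[F_1(E_1-E_2)\right].
\end{align}

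\emph{Step 2: compute the first derivative of $G$.} By \eqref{eq:D1f P}, $\trace[D^{(1)}_{f(X)}[E]\,B]=\trace[f'(X)EB]$ whenever $B$ commutes with $X$. Hence
\begin{align}
\frac{d}{dt}G(X+tE)\Big|_{t=0}=\trace\left[\left(h(X)+f_{\rm RE}(X)\right)E\right],
\end{align}
where $h(x)=(x-\tfrac i2)^{-1}+xf_{\rm RE}'(x)$. A direct computation with $f_{\rm RE}'(x)=-4/(1+4x^2)$ gives $h(x)=2i/(1+4x^2)$. The key observation is that $\trace[h(X)E]=0$ whenever $X=\Omega\sigma$ and $E=\Omega\Delta\sigma$ with $\sigma,\Delta\sigma$ symmetric. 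Indeed, this trace equals $2i\,\trace[M\Delta\sigma]$ with $M=(I+4\Omega\sigma\Omega\sigma)^{-1}\Omega$. Using $\Omega^T=-\Omega$ and the push-through identity, $M^T=-M$, so its trace against a symmetric matrix vanishes. Thus along the path $E_t=E_2+t(E_1-E_2)$, which stays within this structured class, the derivative of $G$ is $\trace[f_{\rm RE}(A+E_t)(E_1-E_2)]$. The covariance part therefore equals
\begin{align}
\int_0^1\trace\left[\left(f_{\rm RE}(A+E_t)-f_{\rm RE}(A+E_1)\right)(E_1-E_2)\right]dt.
\end{align}
This immediately shows that the zeroth- and first-order contributions vanish.

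\emph{Step 3: extract the second-order term.} I expand $f_{\rm RE}(A+E_t)-f_{\rm RE}(A+E_1)=D^{(1)}_{f_{\rm RE}(A)}[E_t-E_1]+O(\|E\|^2)$ using \eqref{eq:Frechet expansion}, with $E_t-E_1=-(1-t)(E_1-E_2)$. Integrating $(1-t)$ over $[0,1]$ gives $1/2$, so the covariance part of $2D$ is
\begin{align}
-\tfrac12\trace\left[(E_1-E_2)\,D^{(1)}_{f_{\rm RE}(A)}[E_1-E_2]\right]+O(\|E\|^3).
\end{align}
With the overall prefactor $\tfrac12$, this gives the $-\tfrac14$ term in \eqref{eq:QRE}. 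For the mean part, $\delta_d$ is itself a first-order quantity, so $\delta_d^T f_{\rm RE}(A+E_1)\Omega\delta_d=\delta_d^Tf_{\rm RE}(A)\Omega\delta_d$ plus higher order. This yields the second term of \eqref{eq:QRE}.

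\emph{Main obstacle.} The step I expect to be delicate is Step 2: establishing that $\trace[h(X)E]$ vanishes for all $X$ along the path, not just at $X=A$. This needs $A+E_t$ to remain of the form $\Omega\times$(symmetric matrix), together with careful handling of the branch of $\arccoth$ and $\log(X-\tfrac i2)$ on the spectrum $\{\pm i\nu\}$. The assumption that all $\nu>1/2$ keeps $X\mp\tfrac i2$ invertible and $f_{\rm RE}$ analytic near the spectrum, so the Fréchet expansions and their remainders are legitimate.
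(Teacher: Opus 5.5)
Your proof is correct, and it takes a genuinely different route from the paper's. The paper expands the three terms of \eqref{eq:QRE A E} separately to second order around $A$: the $\log\det$ terms via Lemma~\ref{lemma:Frechet log det A}, and the trace term via Lemmas~\ref{lemma: trace} and~\ref{lemma: trace 2}, hence with second Fr\'echet derivatives. It removes the first-order part using the antisymmetry of $(A^2+\frac14 I)^{-1}\Omega$ at the single point $A$. Then, working in the eigenbasis of $A$, it shows separately that the second-order $\log\det$ terms cancel the $xf'_{\rm RE}$ terms, and that the remaining $f_{\rm RE}$ terms assemble into a square in $E_1-E_2$ by symmetry of the divided differences. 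You instead observe that the same antisymmetry holds at every point $\Omega\sigma_t$ of the segment. Consequently, on structured directions your potential $G$ has gradient exactly $f_{\rm RE}$, and the covariance part becomes the Bregman-type remainder $\int_0^1\trace[(f_{\rm RE}(A+E_t)-f_{\rm RE}(A+E_1))(E_1-E_2)]\,dt$. This buys a shorter proof that needs only first Fr\'echet derivatives and no eigenbasis bookkeeping, and it makes the vanishing of the zeroth- and first-order terms immediate. It also yields an exact remainder: writing $f_{\rm RE}(A+E_t)-f_{\rm RE}(A+E_1)=-\int_t^1 D^{(1)}_{f_{\rm RE}(A+E_u)}[E_1-E_2]\,du$ turns the covariance part of $2D$ into $-\int_0^1 u\,\trace[(E_1-E_2)D^{(1)}_{f_{\rm RE}(A+E_u)}[E_1-E_2]]\,du$, so the error is in fact $O(\|E\|\,\|E_1-E_2\|^2)$. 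Your viewpoint also explains the paper's second-order cancellation conceptually. With $h(x)=(x-\frac{i}{2})^{-1}+xf'_{\rm RE}(x)$, the combination $-\frac14\trace[E_k(A-\frac{i}{2}I)^{-1}E_k(A-\frac{i}{2}I)^{-1}]+\frac14\trace[E_kD^{(1)}_{(xf'_{\rm RE})(A)}[E_k]]$ equals $\frac14\frac{d}{dt}\trace[h(A+tE_k)E_k]$ at $t=0$. That is the derivative of the first-order identity along a structured direction, hence zero. The paper's route, in exchange, uses only local information at $A$ and stays within the uniform expansion machinery it reuses for the Chernoff and fidelity formulas. Yours needs the whole segment to remain in the structured class, away from symplectic eigenvalue $1/2$, with a continuous branch of $\log$ along it. All of this is harmless for small perturbations, since $\sigma_t=(1-t)\sigma_2+t\sigma_1$ is symmetric and close to $\sigma_0$.
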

The Fr\'echet expansion and the simplification leading to
Theorem~\ref{thm:QRE-perturbation} are provided in
Appendix~\ref{appendix:gaussian_qi frechet QRE}.

\subsection{Quantum Chernoff quantity}
For two quantum states $\hat\varrho_1$ and $\hat\varrho_2$, the quantum Chernoff quantity at parameter $s\in[0,1]$ is defined as
\begin{align}
\xi_s(\hat{\varrho}_1,\hat{\varrho}_2)
=
-\log\left(
\trace\left[
\hat{\varrho}_1^s\hat{\varrho}_2^{1-s}
\right]
\right),
\qquad
s\in[0,1].
\end{align}
The QCE is obtained by maximizing the quantum Chernoff
quantity over $s$:
\begin{align}
\xi(\hat{\varrho}_1,\hat{\varrho}_2)
=
\max_{s\in[0,1]}
\xi_s(\hat{\varrho}_1,\hat{\varrho}_2).
\end{align}
For two $n$-mode Gaussian states
$\hat{\rho}_1(\sigma_1,d_1)$ and
$\hat{\rho}_2(\sigma_2,d_2)$,
the quantum Chernoff quantity admits a closed-form expression in terms
of their symplectic eigenvalues and mean vectors \cite{stefanoQCB}.
Ref.~\cite{stefanoQCB} adopts a covariance-matrix convention in which a pure Gaussian state has symplectic eigenvalue $1$, whereas in our convention the corresponding symplectic eigenvalue is $1/2$. We therefore rewrite the Gaussian quantum Chernoff quantity formula in our convention as
\begin{align}
\xi_s(\hat{\rho}_1,\hat{\rho}_2)
=
-\log\left(
Q_s^{(1)}Q_s^{(2)}
\right),
\label{eq:QCE}
\end{align}
where
\begin{align}
Q_s^{(1)}
&=
\frac{
2^n
\prod_{k=1}^n
G^{(s)}(\nu_k^{(1)})
G^{(1-s)}(\nu_k^{(2)})
}{
\sqrt{
\det\left[
V_1(s)+V_2(1-s)
\right]
}
},
\label{eq:QCE 1}
\\
Q_s^{(2)}
&=
\exp\left[
- 
\delta_d^T
\left(
V_1(s)+V_2(1-s)
\right)^{-1}
\delta_d
\right],
\\
G^{(s)}(x)
&=
\frac{1}{
\left(x+\frac{1}{2}\right)^s
-
\left(x-\frac{1}{2}\right)^s
},
\label{eq: QCE Gs}
\\
F^{(s)}(x)
&=
\frac{
\left(x+\frac{1}{2}\right)^s
+
\left(x-\frac{1}{2}\right)^s
}{
\left(x+\frac{1}{2}\right)^s
-
\left(x-\frac{1}{2}\right)^s
},
\label{eq: QCE Fs}
\\
\delta_d
&=
d_1-d_2,
\\
V_1(s)
&=
S_1F^{(s)}(\Lambda_1)S_1^T \label{eq: QCE V_1} \\
V_2(1-s)
&=
S_2F^{(1-s)}(\Lambda_2)S_2^T.
\label{eq: QCE V_2}
\end{align}
To express the Gaussian quantum Chernoff quantity in a form suitable for Fr\'echet expansion, we introduce the functions
\begin{align}
f_{\rm CE}^{(s)}(A)
&=
G^{(s)}\left(\sqrt{-A^2}\right),
\\
g_{\rm CE}^{(s)}(A)
&=
-A
\left(\sqrt{-A^2}\right)^{-1}
F^{(s)}\left(\sqrt{-A^2}\right), \label{eq:QCE gs}
\end{align}
and define $\sigma_k$, $A$, and $E_k$ as in \eqref{eq:perturbtive setup}.
Then the covariance-matrix contribution to the quantum Chernoff
quantity can be written as
\begin{align}
-\log Q_s^{(1)}
&=\frac{1}{2}
\log\det\left[
g_{\rm CE}^{(s)}(A+E_1)
+
g_{\rm CE}^{(1-s)}(A+E_2)
\right]
\nonumber\\
&\quad
-\frac{1}{2}
\log\det\left[
f_{\rm CE}^{(1-s)}(A+E_2)
\right]
\nonumber\\
&\quad
-\frac{1}{2}
\log\det\left[
f_{\rm CE}^{(s)}(A+E_1)
\right]-n\log 2,
\label{eq:QCE A E CM}
\end{align}
while the mean-vector contribution is
\begin{align}
-\log Q_s^{(2)}
&=
\delta_d^T
\left[
\Omega
\left(
g_{\rm CE}^{(s)}(A+E_1)
\right.\right.\nonumber\\
&\left.\left.\qquad \qquad+
g_{\rm CE}^{(1-s)}(A+E_2)
\right)
\right]^{-1}
\delta_d.
\label{eq:QCE A E d}
\end{align}
The derivation of \eqref{eq:QCE A E CM} and
\eqref{eq:QCE A E d} is
provided in Appendix~\ref{appendix:sub gaussian AE form}. 
For convenience, define
\begin{align}
    R_{\rm CE}^{(s)}(A) = g_{\rm CE}^{(s)} +  g_{\rm CE}^{(1-s)}.
\end{align}

\begin{theorem}[Perturbative expansion of the quantum Chernoff quantity]
\label{thm:QCE-perturbation}
Let $\hat{\rho}_1$ and $\hat{\rho}_2$  be  the two Gaussian states introduced above,  with reference
covariance matrix $\sigma_0$ whose symplectic eigenvalues are all
greater than $1/2$. Then the covariance-matrix contribution vanishes at zeroth and first-order in the Fr\'echet expansion, and the second-order perturbative quantum Chernoff quantity is
\begin{align}
\xi_{{\rm pert},s}^{(2)}
(\hat{\rho}_1,\hat{\rho}_2)
=
\xi_{{\rm CE,CM},s}^{(2)}
+
\xi_{{\rm CE},d,s},
\label{eq:QCE perturb final}
\end{align}
where
\begin{align}
\xi_{{\rm CE,CM},s}^{(2)}
&=
\frac{1}{4}
\trace\Big[
\left( 1/R_{\rm CE}^{(s)}\right)(A)
D_{g_{\rm CE}^{(s)}(A)}^{(1)}[E_1-E_2]
\nonumber\\
&\quad \times
\left(1/R_{\rm CE}^{(s)}\right)(A)
D_{g_{\rm CE}^{(1-s)}(A)}^{(1)}[E_1-E_2]
\Big],
\label{eq:QCE perturb CM}
\end{align}
and
\begin{align}
\xi_{{\rm CE},d,s}
=
 - \delta_d^T
\left(1/ R_{\rm CE}^{(s)}\right)(A) \Omega
\delta_d.
\label{eq:QCE perturb d}
\end{align}
\end{theorem}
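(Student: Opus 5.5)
We expand each term of \eqref{eq:QCE A E CM} and \eqref{eq:QCE A E d} in $E_1,E_2$ around $A=\Omega\sigma_0$, keeping terms up to second order, and use the symmetry of the setup to simplify.

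\emph{Mean-vector term.} $\delta_d$ is itself a first-order perturbation and the bracket is a quadratic form in $\delta_d$. It therefore suffices to evaluate the inverse at $E_1=E_2=0$. Since $g_{\rm CE}^{(s)}(A)+g_{\rm CE}^{(1-s)}(A)=R_{\rm CE}^{(s)}(A)$ is a primary matrix function of $A$, and $\Omega$ is invertible, we get $[\Omega R_{\rm CE}^{(s)}(A)]^{-1}=(1/R_{\rm CE}^{(s)})(A)\,\Omega^{-1}$. Using $\Omega^{-1}=-\Omega$ then gives \eqref{eq:QCE perturb d}.

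\emph{Covariance-matrix term.} Write $\log\det=\trace\log$. The $f_{\rm CE}$ terms then become $-\frac12\trace\log f^{(s)}_{\rm CE}(A+E_1)-\frac12\trace\log f^{(1-s)}_{\rm CE}(A+E_2)$, i.e.\ traces of primary functions $h_s(x)=\log G^{(s)}(\sqrt{-x^2})$.

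For the first term, $\log\det[g^{(s)}(A+E_1)+g^{(1-s)}(A+E_2)]$, expand the argument as
\[
R+D_s[E_1]+D_{1-s}[E_2]+\tfrac12 D^{(2)}_s[E_1]+\tfrac12 D^{(2)}_{1-s}[E_2],
\]
writing $D_s=D^{(1)}_{g^{(s)}_{\rm CE}(A)}$. Then apply
\[
\log\det(R+X)=\log\det R+\trace(R^{-1}X)-\tfrac12\trace(R^{-1}XR^{-1}X)+\dots
\]
The trace of a first Fréchet derivative collapses to a diagonal term: $\trace(\phi(A)D^{(1)}_{f(A)}[E])=\trace(\phi(A)f'(A)E)$, from the projector form \eqref{eq:D1f P} with $P_jP_k=\delta_{jk}P_j$ and cyclicity.

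\emph{Zeroth and first order.} At zeroth order, $\frac12\log\det R_{\rm CE}^{(s)}(A)-\frac12\log\det f^{(s)}f^{(1-s)}-n\log2$ must vanish. This holds eigenvalue-by-eigenvalue, since it is just $\xi_s(\hat\rho_0,\hat\rho_0)=0$: on the $\pm i\nu$ eigenvalues, $g^{(s)}$ equals $\mp iF^{(s)}(\nu)$, and the scalar identity $F^{(s)}+F^{(1-s)}=4G^{(s)}G^{(1-s)}\cdot(\ldots)$ that makes $\trace\hat\rho^s\hat\rho^{1-s}=1$ gives it. At first order, the coefficient of $E_1$ is $\frac12\trace[(R^{-1}g^{(s)\prime}-h_s')(A)E_1]$. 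I would show the scalar function $R^{-1}g^{(s)\prime}-h_s'$ vanishes on the spectrum, or alternatively argue directly that $\xi_s\ge0$ with equality at $E_1=E_2$, so the gradient is zero.

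\emph{Second order.} The second-order terms form a quadratic form $Q(E_1,E_2)$. I would avoid computing every second divided-difference term explicitly. Instead, since $\xi_s$ vanishes identically on the diagonal $E_1=E_2=E$ (identical states), $Q(E,E)=0$ for all $E$ and $Q$ is symmetric in the appropriate sense. Hence $Q(E_1,E_2)$ depends only on $E_1-E_2$, so it suffices to compute $Q$ at $(E_1,E_2)=(E,0)$ and $(0,E)$ and the cross terms.

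The only cross term between $E_1$ and $E_2$ arises from $-\frac14\trace(R^{-1}D_s[E_1]R^{-1}D_{1-s}[E_2])\cdot 2$; all other second-order pieces are pure $E_1E_1$ or $E_2E_2$. Writing $Q=aE_1^2+bE_2^2+cE_1E_2$ with $a+b+c=0$ as quadratic forms, the dependence only on $E_1-E_2$ forces $Q(E_1,E_2)=-\tfrac{c}{2}(E_1-E_2)^2$. The cross coefficient is
\[
-\tfrac12\trace\big(R^{-1}D_s[E_1]R^{-1}D_{1-s}[E_2]\big),
\]
so polarization yields exactly
\[
\tfrac14\trace\big(R^{-1}D_s[E_1-E_2]R^{-1}D_{1-s}[E_1-E_2]\big),
\]
which is \eqref{eq:QCE perturb CM}.

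The main obstacle is justifying that the quadratic form is a function of $E_1-E_2$ only. Vanishing on the diagonal alone gives $a+b+c=0$ for the associated bilinear forms, not full dependence on the difference. I would therefore verify directly, via the trace identity above and the projector representation, that the $E_1E_1$ part equals $-\frac12$ of the cross bilinear form evaluated at $(E_1,E_1)$ (and likewise for $E_2$). Concretely, this reduces to the scalar identity relating $f^{[2]}$ for $h_s$ and $g^{(s)}$ to products of $f^{[1]}$ divided by $R$, checked on eigenvalue pairs $\pm i\nu_j,\pm i\nu_k$.
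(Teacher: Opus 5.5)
Your treatment of the mean-vector term and of the zeroth order agrees with the paper. The scalar identity you half-recall is $F^{(s)}+F^{(1-s)}=2G^{(s)}G^{(1-s)}$, but $\xi_s(\hat\rho_0,\hat\rho_0)=0$ already suffices. Your positivity argument for the first order is a valid shortcut: $\xi_s\ge 0$ by H\"older, it vanishes at $\Delta\sigma_1=\Delta\sigma_2=0$, and this is an interior point of the real space of symmetric perturbations because ``symplectic eigenvalues $>1/2$'' is an open condition. So the linear term vanishes without the paper's identity $(1/R_{\rm CE}^{(s)})(g_{\rm CE}^{(s)})'=(\log f_{\rm CE}^{(s)})'$.

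The gap is at second order. As you concede, $Q(E,E)=0$ does not make $Q$ a function of $E_1-E_2$. Your fallback, ``verify directly,'' is exactly the substantive part of the paper's proof, and you do not carry it out. The paper does three things there. It applies $\operatorname{tr}(f(A)D^{(2)}_{g(A)}[E])=\operatorname{tr}(E\,D^{(1)}_{(fg')(A)}[E])-\operatorname{tr}(D^{(1)}_{f(A)}[E]\,D^{(1)}_{g(A)}[E])$ with $f=1/R_{\rm CE}^{(s)}$. It uses the scalar identity above to cancel the $D^{(2)}\log\det f_{\rm CE}$ terms. It uses $D^{(1)}_{(1/R)(A)}[E]=-(1/R)(A)\,D^{(1)}_{R(A)}[E]\,(1/R)(A)$ to reach $\mathcal{B}_s(E_1,E_1)+\mathcal{B}_s(E_2,E_2)-2\mathcal{B}_s(E_1,E_2)$, where $\mathcal{B}_s(X,Y)=\tfrac14\operatorname{tr}[(1/R_{\rm CE}^{(s)})(A)D^{(1)}_{g_{\rm CE}^{(s)}(A)}[X](1/R_{\rm CE}^{(s)})(A)D^{(1)}_{g_{\rm CE}^{(1-s)}(A)}[Y]]$. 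Even granting all that, your polarization step silently needs $\mathcal{B}_s(X,Y)=\mathcal{B}_s(Y,X)$. Trace cyclicity only gives $\mathcal{B}_s(X,Y)=\mathcal{B}_{1-s}(Y,X)$; the paper proves the symmetry separately, from the projector representation and $g^{[1]}(\lambda_k,\lambda_l)=g^{[1]}(\lambda_l,\lambda_k)$.

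Both holes close with the tool you already used.
\begin{itemize}
\item Take $\delta_d=0$ and regard $Q$ as a real quadratic form on pairs $(\Delta\sigma_1,\Delta\sigma_2)$ of real symmetric matrices.
\item Since the zeroth- and first-order terms vanish and $\xi_s\ge0$, expanding along $t(\Delta\sigma_1,\Delta\sigma_2)$ gives $0\le t^2Q+O(t^3)$, so $Q\ge0$. Also $Q(E,E)=0$.
\item For a positive-semidefinite form, $Q(w)=0$ puts $w$ in the radical, since $Q(w+tv)=2tB(w,v)+t^2Q(v)\ge0$ for all real $t$.
\item Write $Q(E_1,E_2)=\alpha(E_1,E_1)+\beta(E_2,E_2)+\gamma(E_1,E_2)$, with $\alpha,\beta$ symmetric and $\gamma=-2\mathcal{B}_s$ the explicit cross term you identified. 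All three are real on this space, being built from values of $Q$.
\item Test the radicality of $(E,E)$ against $(F,0)$ and $(0,F)$. This gives $2\alpha(E,F)=-\gamma(F,E)$ and $2\beta(E,F)=-\gamma(E,F)$ for all $E,F$.
\item Hence $\gamma$ is symmetric and $\alpha(E,E)=\beta(E,E)=-\tfrac12\gamma(E,E)$. Therefore $Q=-\tfrac12\gamma(E_1-E_2,E_1-E_2)=\mathcal{B}_s(E_1-E_2,E_1-E_2)$, which is \eqref{eq:QCE perturb CM}.
\end{itemize}
This route needs no second divided differences and uses only the explicit cross term. The price is importing positivity of the exact Chernoff quantity. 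The paper's computation stays at the level of matrix functions and yields Fr\'echet trace identities that it reuses elsewhere (e.g.\ for the QRE).
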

The Fr\'echet expansion and the simplification leading to
Theorem~\ref{thm:QCE-perturbation} are provided in
Appendix~\ref{appendix:gaussian_qi frechet QCE}.
\begin{corollary}[Perturbative quantum Chernoff exponent]
\label{cor:QCE-s-half}
The second-order perturbative quantum Chernoff quantity in Theorem \ref{thm:QCE-perturbation} is maximized at
$s=\frac{1}{2}$. Therefore, evaluating the perturbative quantity at $s=1/2$ yields the perturbative QCE:
\begin{align}
\xi_{\rm QCE}^{(2)}
=
\xi_{{\rm pert},1/2}^{(2)}
\label{eq:QCE s half}
\end{align}
\end{corollary}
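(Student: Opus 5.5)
Our strategy is to diagonalize everything in the common eigenbasis of $A=\Omega\sigma_0$ and reduce both terms of Theorem~\ref{thm:QCE-perturbation} to sums of scalar functions of $s$. Each such scalar function is symmetric under $s\mapsto 1-s$ and concave (or at least unimodal) on $[0,1]$. Symmetry is immediate for the mean-vector term: $R_{\rm CE}^{(s)}=g_{\rm CE}^{(s)}+g_{\rm CE}^{(1-s)}$ is invariant under $s\leftrightarrow 1-s$. The covariance term is a trace of a product of two matrices with the roles of $g^{(s)}$ and $g^{(1-s)}$ exchanged, so cyclicity of the trace gives invariance under $s\leftrightarrow 1-s$. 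It then suffices to show that each term, as a function of $s$, attains its maximum over $[0,1]$ at the symmetric point. The cleanest route is to show that each elementary coefficient is minimized or maximized, in the appropriate sense, at $s=1/2$.

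\textbf{Step 1: spectral reduction.} Write $A=U\Lambda U^{-1}$ with eigenvalues $\pm i\nu_j$, $\nu_j>1/2$. Since $-A^2$ has eigenvalues $\nu_j^2$, the eigenvalue of $g_{\rm CE}^{(s)}(A)$ at $\lambda=\pm i\nu$ is $h_s(\lambda)=\mp i F^{(s)}(\nu)$. The mean term becomes
\[
-\sum_{j,k}\frac{(U^{-1}\Omega\delta_d)_j\,(\delta_d^TU)_j}{h_s(\lambda_j)+h_{1-s}(\lambda_j)},
\]
and the sign structure combines the $\pm$ pairs into a positive weighted sum of $1/(F^{(s)}(\nu_j)+F^{(1-s)}(\nu_j))$. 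Using \eqref{eq:D1f P}, the covariance term becomes
\[
\frac14\sum_{k,l}\frac{h_s^{[1]}[\lambda_k,\lambda_l]\,h_{1-s}^{[1]}[\lambda_l,\lambda_k]}{R_s(\lambda_k)R_s(\lambda_l)}\,\tilde E_{kl}\tilde E_{lk},
\]
where $\tilde E=U^{-1}(E_1-E_2)U$. Pairing $(k,l)$ with the conjugate indices should make the weights $\tilde E_{kl}\tilde E_{lk}$ real and nonnegative, as in the Hermitian case in \cite{Grace2022}.

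\textbf{Step 2: scalar inequalities.} Write $a=\nu+1/2$, $b=\nu-1/2$, and set $x=(a/b)^s$, so that $F^{(s)}(\nu)=\coth(s\theta)$ with $\theta=\tfrac12\log(a/b)>0$. The mean coefficient is then $1/(\coth(s\theta)+\coth((1-s)\theta))$. Since $\coth$ is convex on $(0,\infty)$, the sum $\coth(s\theta)+\coth((1-s)\theta)$ is convex and symmetric in $s$, hence minimized at $s=1/2$. This maximizes the mean contribution at $s=1/2$.

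For the covariance term, each product of divided differences is an explicit hyperbolic expression in $s\theta_k$, $(1-s)\theta_k$, $s\theta_l$, $(1-s)\theta_l$. For example, with $\lambda_k=i\nu_k$ and $\lambda_l=-i\nu_l$, the numerator involves $\coth(s\theta_k)+\coth(s\theta_l)$. The goal is to write each pair coefficient $c_{kl}(s)$ as a symmetric function of $s$ that is log-concave, or of the form $\phi(s)\phi(1-s)$ with $\log\phi$ concave. Either form yields its maximum at $s=1/2$ by the AM--GM or Jensen argument. In the diagonal case $k=l$ with equal $\nu$, the coefficient reduces to $\frac{F^{(s)\prime}F^{(1-s)\prime}}{(F^{(s)}+F^{(1-s)})^2}$, which can be checked directly.

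\textbf{Main obstacle.} The hardest step is the off-diagonal coefficients $c_{kl}(s)$ with $\nu_k\neq\nu_l$ in Step 2, together with confirming the nonnegativity of the weights in Step 1. For the off-diagonal case, the first thing to try is to express $c_{kl}(s)$ via the hyperbolic identity $\coth x+\coth y=\sinh(x+y)/(\sinh x\sinh y)$. This should factor $c_{kl}(s)$ into terms like
\[
\frac{\sinh(s(\theta_k\pm\theta_l))\,\sinh((1-s)(\theta_k\pm\theta_l))}{\sinh\theta_k\sinh\theta_l},
\]
each of which is manifestly symmetric and maximized at $s=1/2$, since $\sinh(su)\sinh((1-s)u)=\tfrac12[\cosh u-\cosh((1-2s)u)]$. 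Summing these nonnegatively weighted, individually $s=1/2$-maximized terms gives the corollary.
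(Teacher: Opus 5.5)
Your route is different from the paper's and can be completed, but the sign claim in Step~1 is false. The weights $\widetilde E_{kl}\widetilde E_{lk}$ are not nonnegative, and pairing $(k,l)$ with conjugate indices does not change their sign.

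Take the normalization $U=\sigma_0^{-1/2}V$, $V$ unitary, used in Appendix~\ref{app:proof mixed pure theorem}. Then $U^{-1}=U^\dagger\sigma_0$ and $U^\dagger\sigma_0\Omega=\Gamma U^\dagger$, so $\widetilde E=\Gamma M$ with $M=U^\dagger(\Delta\sigma_1-\Delta\sigma_2)U$ Hermitian. Hence $\widetilde E_{kl}\widetilde E_{lk}=\lambda_k\lambda_l|M_{kl}|^2$, which is negative whenever $\lambda_k,\lambda_l$ have imaginary parts of the same sign. If the coefficients were positive there, those terms would be minimized at $s=1/2$, and your summation would fail.

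The fix is to absorb $\lambda_k\lambda_l$ into the prefactor. At $\lambda=\pm i\nu$ one has $\lambda/R_s(\lambda)=-\nu/[\coth(s\theta)+\coth((1-s)\theta)]<0$. Also, $h_s^{[1]}[\lambda_k,\lambda_l]$ and $h_{1-s}^{[1]}[\lambda_k,\lambda_l]$ are negative reals for both same-sign and opposite-sign pairs. Carrying out your $\coth$ identities with $\coth\theta=2\nu$ then gives exactly
\[
\xi^{(2)}_{{\rm CE,CM},s}=\frac14\sum_{k,l}\cosh\theta_k\cosh\theta_l\,\frac{\sinh(su_{kl})\sinh((1-s)u_{kl})}{\sinh^2u_{kl}}\,|M_{kl}|^2 .
\]
Here $u_{kl}=\theta_k-\theta_l$ for same-sign pairs and $u_{kl}=\theta_k+\theta_l$ for opposite-sign pairs, and the ratio is $s(1-s)$ when $u_{kl}=0$. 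For the mean term, $\xi_{{\rm CE},d,s}=\sum_j\nu_j\,\frac{\sinh(s\theta_j)\sinh((1-s)\theta_j)}{\sinh\theta_j}\,|u_j^\dagger\delta_d|^2$, with $u_j$ the $j$th column of $U$. Since $\sinh(su)\sinh((1-s)u)=\tfrac12[\cosh u-\cosh((1-2s)u)]$, every summand is nonnegative, symmetric, concave, and maximal at $s=1/2$. That proves the corollary. Your predicted $s$-dependence was right; only the sign bookkeeping needed repair.

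By contrast, the paper's proof does no $s$-dependent computation. It imports concavity of the exact $\xi_s$ in $s$ (Lemma~20 of \cite{Beigi2026}). It transfers this to $\xi^{(2)}_{{\rm pert},s}$ through the pointwise limit of $\xi_s(\epsilon)/a(\epsilon)$, and combines it with the $s\leftrightarrow1-s$ symmetry that you also identified. That argument is short and structural. However, it rests on an external result and on identifying this limit with the second-order formula at every $s$.

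Your route, once the signs are fixed, is self-contained and derives concavity instead of borrowing it. It also gives a stronger termwise statement and an explicit, manifestly nonnegative closed form for the perturbative Chernoff quantity.
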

For finite-dimensional quantum states, the second-order perturbative quantum Chernoff quantity was also shown to be maximized at
$s=\frac{1}{2}$ \cite{Grace2022}. Corollary~\ref{cor:QCE-s-half} establishes
the corresponding result for Gaussian states. The proof is provided in
Appendix~\ref{app:QCE concavity}.

\subsection{Quantum fidelity}

The QF between two quantum states $\hat{\varrho}_1$ and
$\hat{\varrho}_2$ is defined as
\begin{align}
F(\hat{\varrho}_1,\hat{\varrho}_2)
=
\trace
\sqrt{
\sqrt{\hat{\varrho}_1}\,
\hat{\varrho}_2\,
\sqrt{\hat{\varrho}_1}
}.
\end{align}
For two $n$-mode Gaussian states
$\hat{\rho}_1(\sigma_1,d_1)$ and
$\hat{\rho}_2(\sigma_2,d_2)$, the QF can be written as
\cite{LeonardoQF}
\begin{align}
F(\hat{\rho}_1,\hat{\rho}_2)
=
F_0
\exp\left[
-\frac{1}{4}
\delta_d^{T}
(\sigma_1+\sigma_2)^{-1}
\delta_d
\right],
\label{eq: QF}
\end{align}
where $\delta_d=d_1-d_2$. Here, $F_0$ contains the covariance-matrix contribution and is given by 
\begin{align}
F_0^4
=
\frac{
\det
\left[
\left(
\sqrt{4\mathbb{I}+(V_{\rm aux}\Omega)^{-2}}
+2\mathbb{I}
\right)
V_{\rm aux}
\right]
}{
\det(\sigma_1+\sigma_2)
},
\label{eq: QF F_0}
\end{align}
where
\begin{align}
V_{\rm aux}
=
\Omega^T
(\sigma_1+\sigma_2)^{-1}
\left(
\frac{\Omega}{4}
+
\sigma_2\Omega\sigma_1
\right).
\label{eq: QF V aux}
\end{align}
To apply the perturbative expansion, define $\sigma_k$, $A$, and $E_k$ as in \eqref{eq:perturbtive setup}, and define
\begin{align}
f_{\rm F}(x)
=
\left(
\sqrt{4+x^{-2}}+2
\right)x.
\end{align}
Then the covariance-matrix contribution can be expressed as
\begin{align}
F_0^4
=
\frac{
\det\left[f_{\rm F}(\Omega V_{\rm aux})\right]
}{
\det(2A+E_1+E_2)
},
\label{eq: QF CM rewritten}
\end{align}
where
\begin{align}
\Omega V_{\rm aux}
&=
(2A+E_1+E_2)^{-1} \nonumber\\
&\qquad \times
\left[
A^2-\frac{I}{4}+E_2A+AE_1+E_2E_1
\right].
\label{eq: QF Vaux main}
\end{align}
The mean-vector contribution can be written as
\begin{align}
\exp\left[
-\frac{1}{4}
\delta_d^T
(2A+E_1+E_2)^{-1}
\Omega\delta_d
\right].
\label{eq: QF mean rewritten}
\end{align}
The detailed derivation of Eqs.~\eqref{eq: QF CM rewritten}--\eqref{eq: QF mean rewritten} is provided in Appendix~\ref{appendix:sub gaussian_qi AE}.

\begin{theorem}[Perturbative expansion of the QF]
\label{thm:QF-perturbation}
Let $\hat{\rho}_1$ and $\hat{\rho}_2$  be  the two Gaussian states introduced above,  with reference
covariance matrix $\sigma_0$ whose symplectic eigenvalues are all
greater than $1/2$. Then the covariance-matrix contribution is unity at zeroth order and vanishes at first order in the Fr\'echet expansion. The second-order perturbative QF is
\begin{align}
F_{\rm pert}(\hat{\rho}_1,\hat{\rho}_2)
=
1
&-
\frac{1}{4}
\trace\left[
X(E_1-E_2)
\right] \nonumber\\
&\qquad\qquad
-
\frac{1}{4}
\delta_d^T
(2A)^{-1}
\Omega
\delta_d,
\label{eq: QF pert final}
\end{align}
where $X$ is the solution of
\begin{align}
E_1-E_2
=
X+4AXA.
\label{eq: QF X equation main}
\end{align}
\end{theorem}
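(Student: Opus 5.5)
The plan is to treat the mean-vector factor and the covariance factor $F_0$ separately and then multiply the two expansions. \emph{Mean-vector part.} Here $\delta_d$ is itself a first-order perturbation, so the exponent in \eqref{eq: QF mean rewritten} is already second order. I will replace $(2A+E_1+E_2)^{-1}$ by $(2A)^{-1}$, which has an $O(\|E\|)$ error and so contributes only at third order. Expanding the exponential to first order gives the term $-\frac14\delta_d^T(2A)^{-1}\Omega\delta_d$. Products of this term with the covariance corrections are of higher order, so to second order the two factors combine additively.

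\emph{Covariance part: reduction to one variable.} I will work with $\log F_0=\frac14\{\trace\log f_{\rm F}(B)-\trace\log(2A+E_1+E_2)\}$, where $B=\Omega V_{\rm aux}$ is given by \eqref{eq: QF Vaux main}. I will use two structural facts and avoid brute-force expansion as far as possible.
\begin{itemize}
\item[(i)] Fidelity is symmetric under $\rho_1\leftrightarrow\rho_2$, i.e.\ under $E_1\leftrightarrow E_2$.
\item[(ii)] Fidelity equals $1$ whenever $\sigma_1=\sigma_2$, i.e.\ $F_0\equiv1$ on the diagonal $E_1=E_2$.
\end{itemize}
Taking $E_1=E_2=0$ in (ii) gives zeroth order equal to $1$. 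At first order, write the linear term as $L_1(E_1)+L_2(E_2)$. Symmetry forces $L_1=L_2$, and vanishing on the diagonal forces $2L_1=0$, so the first-order term vanishes. At second order, write the quadratic form in the variables $\Sigma=E_1+E_2$ and $\Delta=E_1-E_2$. Vanishing on $\Delta=0$ kills the $\Sigma\Sigma$ block. Symmetry under $\Delta\to-\Delta$ kills the $\Sigma\Delta$ cross block. Hence the quadratic form depends only on $\Delta$, and I may set $E_1=\Delta/2$ and $E_2=-\Delta/2$. Then $\det(2A+E_1+E_2)=\det(2A)$ is constant, and
\[
B=(2A)^{-1}\bigl(A^2-\tfrac{I}{4}\bigr)+\tfrac14 A^{-1}[A,\Delta]-\tfrac18A^{-1}\Delta^2 .
\]

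\emph{Second-order computation.} Let $h=\log f_{\rm F}$. I will expand $\trace h(B)$ around $B_0=(2A)^{-1}(A^2-I/4)$, which is a function of $A$ and so shares its eigenbasis, with eigenvalues $\mu_k=(\lambda_k^2-1/4)/(2\lambda_k)$ where $\lambda_k=\pm i\nu_k$. I will use the trace identities $\trace D^{(1)}_{h(B_0)}[E]=\trace[h'(B_0)E]$ and the divided-difference form \eqref{eq:D2f P}. Two contributions arise.
\begin{itemize}
\item The linear perturbation $\frac14A^{-1}[A,\Delta]$ has zero diagonal in the eigenbasis, since $(\widetilde{[A,\Delta]})_{kk}=0$. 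It therefore contributes only through the second Fréchet derivative, as $\sum_{k,l}h^{[1]}[\mu_k,\mu_l]\cdot(\ldots)|\tilde\Delta_{kl}|$-type terms.
\item The term $-\frac18A^{-1}\Delta^2$ contributes through $\trace[h'(B_0)A^{-1}\Delta^2]$.
\end{itemize}
Collecting both, the result should take the form $\sum_{k,l}w(\lambda_k,\lambda_l)\tilde\Delta_{kl}\tilde\Delta_{lk}$.

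On the other side, in the eigenbasis the equation $\Delta=X+4AXA$ is solved entrywise by $\tilde X_{kl}=\tilde\Delta_{kl}/(1+4\lambda_k\lambda_l)$. This gives $\trace[X\Delta]=\sum_{k,l}\tilde\Delta_{kl}\tilde\Delta_{lk}/(1+4\lambda_k\lambda_l)$. Since $\nu_k>1/2$, the denominator $1+4\lambda_k\lambda_l$ never vanishes, so $X$ exists and is unique. The proof then reduces to the scalar identity
\[
w(\lambda_k,\lambda_l)=-\frac{1}{1+4\lambda_k\lambda_l}.
\]

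\emph{Main obstacle.} I expect the hardest step to be verifying this scalar identity. It requires simplifying $h^{[1]}$ and $h^{[2]}$ for $h(x)=\log\bigl[(\sqrt{4+x^{-2}}+2)x\bigr]$ at $\mu_k,\mu_l$, and the cases $\lambda_l=\pm\lambda_k$ and $\lambda_l\neq\pm\lambda_k$ must be handled separately. My first attempt will use the substitution $\sqrt{4+\mu^{-2}}=(\lambda^2+1/4)/(\lambda^2-1/4)\cdot 2$ up to a branch-sign check. This rationalizes $f_{\rm F}(\mu_k)$ into a simple function of $\lambda_k$, namely $\propto(\lambda_k+\tfrac12)^2/\lambda_k$ up to sign, which should make the divided differences telescope. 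As a fallback, I will check the result against the known Gaussian QFI/Bures-metric formula, which involves the same operator $X+4AXA$, since $F_{\rm pert}$ with $\Delta=\epsilon\,\partial\sigma$ must reproduce $1-\frac{\epsilon^2}{8}\,\mathrm{QFI}$.
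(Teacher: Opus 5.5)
Your proposal is correct. On the covariance part it takes a genuinely different route from the paper; the mean-vector part is treated the same way.

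\textbf{How the paper proceeds.} The paper never reduces the number of variables. It expands $(2A+E_1+E_2)^{-1}$ and $\Omega V_{\rm aux}=r_{\rm F}(A)+\Delta_{\rm F}^{(1)}+\Delta_{\rm F}^{(2)}$ for general $E_1,E_2$. It then checks by direct algebra that the linear term of $\frac14\log\det f_{\rm F}$ equals $\frac18\operatorname{tr}[A^{-1}(E_1+E_2)]$, which cancels the denominator's linear term. Finally it writes the whole second-order term entrywise in the eigenbasis of $A$, with coefficients $C_1,C_2,C_3$ mixing $E_1E_1$, $E_1E_2$ and $E_2E_2$ products. Only after $k\leftrightarrow l$ symmetrization does it find that everything collapses onto $E_1-E_2$.

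\textbf{What you do instead.} You use exchange symmetry and $F_0=1$ on the diagonal to know this in advance. You then evaluate on $E_1=-E_2=\Delta/2$, where $\det(2A+E_1+E_2)$ is frozen and the first-order part of $B$ is a commutator with no component inside any eigenspace of $A$. This gives a much shorter computation, and the scalar identity you isolate does hold:
\begin{itemize}
\item With $h'(\mu_k)=8\lambda_k/(1+4\lambda_k^2)$ and $(h')^{[1]}[\mu_k,\mu_l]=64\lambda_k\lambda_l(1-4\lambda_k\lambda_l)/[(1+4\lambda_k^2)(1+4\lambda_l^2)(1+4\lambda_k\lambda_l)]$, the $-\frac18A^{-1}\Delta^2$ term contributes $-(1+2\lambda_k^2+2\lambda_l^2)/[(1+4\lambda_k^2)(1+4\lambda_l^2)]$ after symmetrization.
\item The commutator term contributes $-2(\lambda_k-\lambda_l)^2(1-4\lambda_k\lambda_l)/[(1+4\lambda_k^2)(1+4\lambda_l^2)(1+4\lambda_k\lambda_l)]$.
\item The two numerators add up to exactly $(1+4\lambda_k^2)(1+4\lambda_l^2)$. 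Hence $w=-1/(1+4\lambda_k\lambda_l)$ as the coefficient in $\operatorname{tr}h(B)$, and the prefactor $\frac14$ in $\log F_0$ yields $-\frac14\operatorname{tr}[X(E_1-E_2)]$.
\end{itemize}
Only $\lambda_k=\lambda_l$ is special, since the commutator entries vanish there. The case $\lambda_l=-\lambda_k$ needs no separate treatment, because $\mu_l=-\mu_k\neq\mu_k$.

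\textbf{What each route buys.} The paper's brute-force route is algebraic: it works for arbitrary matrices $E_1,E_2$. It never assumes that the closed form, with its square-root branches, inherits the exchange symmetry of the fidelity. Your symmetry argument is justified only on physical directions $E_k=\Omega\Delta\sigma_k$ with real symmetric $\Delta\sigma_k$. That is, however, exactly the setting of the theorem.

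\textbf{Two remarks.}
\begin{itemize}
\item Your heuristic for the rationalization is off. It gives $f_{\rm F}(\mu_k)=2\lambda_k$ exactly, i.e.\ $f_{\rm F}(r_{\rm F}(x))=2x$, not something proportional to $(\lambda_k+\frac12)^2/\lambda_k$. Also, the logarithms carried by $h^{[1]}$ drop out only once $h^{[2]}[\mu_k,\mu_l,\mu_k]$ is symmetrized into $\frac12(h')^{[1]}[\mu_k,\mu_l]$, as in Lemma~\ref{lemma: trace 2} with $f=1$.
\item The identity $f_{\rm F}(r_{\rm F}(x))=2x$ lets you drop the symmetry argument altogether, combining your economy with the paper's generality. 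Recenter at $M=A+\frac12(E_1+E_2)$. Then $2A+E_1+E_2=2M$, and
$B=r_{\rm F}(M)+\frac14M^{-1}[M,\Delta]-\frac18M^{-1}\Delta^2$ holds exactly. Consequently:
\begin{itemize}
\item the zeroth-order terms cancel identically, via $\log\det f_{\rm F}(r_{\rm F}(M))=\log\det(2M)$;
\item the term linear in $\Delta$ vanishes for every $M$, because $h'(r_{\rm F}(M))$ commutes with $M$;
\item replacing $M$ by $A$ in the quadratic coefficient costs only third order.
\end{itemize}
\end{itemize}
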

The first-order cancellation and
second-order simplification are provided in
Appendix~\ref{appendix:gaussian_qi frechet QF}.


QF is related to the QFI through the Bures metric \cite{Braunstein1994,LeonardoQF}. 
This relation is reflected in our perturbative QF formula. In particular, the covariance-matrix contribution in \eqref{eq: QF pert final} is determined by the matrix $X$ in \eqref{eq: QF X equation main}, whose Lyapunov-type equation has the same structure as that appearing in Gaussian QFI formulas \cite{monras2013phase}. 

For numerical evaluation, the Lyapunov-type equation for $X$ can be converted into a linear system using the vectorization identity
\begin{align}
\operatorname{vec}(AXB)
=
(B^T\otimes A)\operatorname{vec}(X),
\end{align}
which gives
\begin{align}
\operatorname{vec}(X)
=
\left[
I+4(A^T\otimes A)
\right]^{-1}
\operatorname{vec}(E_1-E_2).
\label{eq:QF X vectorized}
\end{align}
Thus, $X$ can be obtained by \eqref{eq:QF X vectorized} and substituted directly into the perturbative fidelity formula \eqref{eq: QF pert final}.

\textit{Summary}---These formulas evaluate the perturbative information quantities directly from the reference-state moments and their perturbations, without requiring the symplectic decomposition of the perturbed covariance matrices.

\subsection{Numerical verification}\label{subsec:numerical verification}
\begin{figure}[ht]
    \centering
    \includegraphics[width=0.9\linewidth]{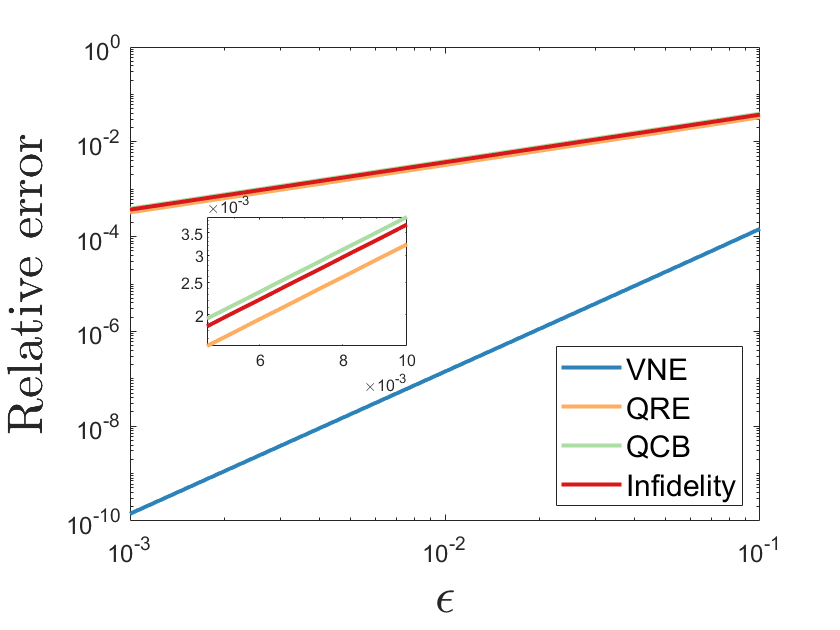} 
        \caption{Relative error $R_Q=|Q-Q_{\rm pert}^{(2)}|/|Q|$ of the second-order perturbative approximations as a function of the covariance-matrix perturbation norm $\epsilon$. Each point is averaged over $10^5$ independent random perturbation realizations. The VNE exhibits $R_Q=O(\epsilon^3)$, while the QRE, QCE, and quantum infidelity $1-F$ exhibit $R_Q=O(\epsilon)$, consistent with the predicted perturbative approximation errors.
}\label{fig:1} 
\end{figure}

In this subsection, we numerically verify the perturbative expressions derived above. To isolate the accuracy of the covariance-matrix contributions, we set all mean vectors to zero. We choose the reference covariance matrix as
\begin{align}
\sigma_0 = S D S^T,
\end{align}
where
\begin{align}
D = \diag(0.8,1.2,0.8,1.2),
\end{align}
and
\begin{align}
S=
\begin{bmatrix}
\cos\theta & -\sin\theta & 0 & 0\\
\sin\theta & \cos\theta & 0 & 0\\
0 & 0 & \cos\theta & -\sin\theta\\
0 & 0 & \sin\theta & \cos\theta
\end{bmatrix},
\end{align}
with $\theta=0.4$.
For the VNE, we consider the perturbed covariance matrix
\begin{align}
\sigma=\sigma_0+\epsilon\Delta\sigma,
\end{align}
whereas for the QRE, QCE, and QF, we consider
\begin{align}
\sigma_k=\sigma_0+\epsilon\Delta\sigma_k,
\qquad k=1,2.
\end{align}
We take the perturbation matrices $\Delta\sigma$ and $\Delta\sigma_k$ to be symmetric, with their entries generated from independent  zero-mean Gaussian random variables $\mathcal N(0,1)$. They are subsequently normalized such that
\begin{align}
\|\Delta\sigma\|=1,
\qquad
\|\Delta\sigma_1-\Delta\sigma_2\|=1,
\end{align}
where $\|\cdot\|$ denotes the operator norm.
Thus, $\epsilon$ directly specifies the norm of the covariance-matrix perturbation for the VNE and the norm of the covariance-matrix difference for the QRE, QCE, and QF.

To quantify the accuracy of the perturbative formulas, we define the relative error
\begin{align}
R_Q
=
\frac{
\left|Q-Q_{\rm pert}^{(2)}\right|
}{
|Q|
},
\end{align}
where $Q$ denotes the corresponding exact quantum information quantity, with
\begin{align}
    Q \in
\left\{
\mathcal{S}(\hat{\rho}),
D(\hat{\rho}_2\|\hat{\rho}_1),
\xi(\hat{\rho}_1,\hat{\rho}_2),
1-F(\hat{\rho}_1,\hat{\rho}_2)
\right\}.
\end{align}
For the QF, both the exact and perturbative quantities are evaluated in terms of the quantum infidelity $1-F(\hat{\rho}_1,\hat{\rho}_2)$. Each point in Fig.~\ref{fig:1} is obtained by averaging over $10^5$ independent random perturbation realizations.

For the VNE, the perturbative approximation error scales as
$\left|
Q-Q_{\rm pert}^{(2)}
\right|
=
O(\epsilon^3)$ according to \eqref{eq:Frechet expansion}, 
while $Q=O(1)$. Therefore,
$R_Q=O(\epsilon^3)$,
consistent with the slope of approximately  $3$ observed in Fig.~\ref{fig:1}. For the QRE, QCE, and quantum infidelity, the exact quantities scale as $O(\epsilon^2)$, whereas the perturbative approximation errors remain $O(\epsilon^3)$. Hence,
$R_Q=O(\epsilon)$,
consistent with the slopes close to $1$ observed in Fig.~\ref{fig:1}.

\section{Analytical framework for perturbative covariance-matrix analysis}
\label{sec:boundary}
In Sec.~\ref{sec: perturbative formula}, we derived perturbative formulas for reference covariance matrices whose symplectic eigenvalues are all greater than $1/2$. When some symplectic eigenvalues approach $1/2$, the perturbative coefficients can become singular, leading to slower asymptotic vanishing. In this section, we first characterize this behavior for reference states approaching purity. We then consider reference covariance matrices with symplectic eigenvalues both greater than and equal to $1/2$, and identify sufficient conditions under which the modes with symplectic eigenvalues greater than $1/2$ can be traced out without changing the leading perturbative contributions. Finally, we summarize a procedure for applying these results and verify the predicted scaling numerically.

\subsection{Slower asymptotic vanishing for pure reference states}
\label{subsec: pure scaling}
To characterize this singular behavior, we first consider an
$n$-mode zero-mean Gaussian reference state with covariance matrix
\begin{align}
\sigma_{0} =
\left(
\frac{1}{2}+\tau
\right)I_{2n},
\qquad
\tau\rightarrow 0,
\label{eq:singularity sigma0}
\end{align}
where $\tau$ is an auxiliary parameter.  We assume that the covariance-matrix perturbation
for the VNE satisfies $\|E\|=\Theta(\epsilon)$, with
$\epsilon\rightarrow0$. For the QRE, quantum Chernoff quantity, and QF,
the perturbations $E_1$ and $E_2$ satisfy
$\|E_1-E_2\|=\Theta(\epsilon)$. We first keep $\epsilon$ and $\tau$
independent and determine the joint scaling of the perturbative terms.

For the VNE, the first- and second-order covariance-matrix
contributions scale as
\begin{align}
S_{\rm VN}^{(1)}
&=
O\left(
\epsilon |\log\tau|
\right),
&
S_{\rm VN}^{(2)}
&=
O\left(
\epsilon^2\tau^{-1}
\right).
\label{eq:VN singular scaling}
\end{align}
For the QRE, quantum Chernoff quantity, and QF, the leading
covariance-matrix contributions satisfy
\begin{align}
D_{\rm RE,CM}^{(2)},
\quad
\xi_{{\rm CE,CM},s}^{(2)},
\quad
1-F_{\rm pert}
=
O\left(
\epsilon^2\tau^{-1}
\right).
\label{eq:singularity common scaling}
\end{align}
The detailed derivations of
Eqs.~\eqref{eq:VN singular scaling}--\eqref{eq:singularity common scaling}
are provided in Appendix~\ref{app:singularity}.

These singular coefficients do not imply that the underlying quantum
information quantities diverge. Instead, they indicate that the
asymptotic scaling with the covariance-matrix perturbation changes when the symplectic eigenvalues approach $1/2$. In accordance with the basic premise of perturbation theory, the auxiliary reference state should remain asymptotically close to the perturbed state. For a covariance-matrix perturbation of order $\Theta(\epsilon)$ acting on the modes with eigenvalues $1/2$, we therefore take
\begin{align}
\tau = \Theta(\epsilon).
\label{eq:tau epsilon scaling}
\end{align}
Substituting \eqref{eq:tau epsilon scaling} into
\eqref{eq:VN singular scaling} and
\eqref{eq:singularity common scaling}, the VNE contributions change
from
\begin{align}
S_{\rm VN}^{(1)} &= O(\epsilon),
&
S_{\rm VN}^{(2)} &= O(\epsilon^2),
\end{align}
when all the reference state symplectic eigenvalues are greater than $1/2$ to
\begin{align}
S_{\rm VN}^{(1)} &= O(\epsilon|\log\epsilon|),
&
S_{\rm VN}^{(2)} &= O(\epsilon),
\end{align}
for a pure reference state. Likewise, the covariance-matrix contributions to the
QRE, quantum Chernoff quantity, and quantum infidelity change from
\begin{align}
D_{\rm RE,CM}^{(2)},
\quad
\xi_{{\rm CE,CM},s}^{(2)},
\quad
1-F_{\rm pert}
=
O(\epsilon^2)
\end{align}
when all the reference state symplectic eigenvalues are greater than $1/2$ to
\begin{align}
D_{\rm RE,CM}^{(2)},
\quad
\xi_{{\rm CE,CM},s}^{(2)},
\quad
1-F_{\rm pert}
=
O(\epsilon)
\end{align}
for a pure reference state. Thus, reference covariance matrices with symplectic eigenvalues equal to $1/2$ exhibit a slower vanishing rate under small covariance-matrix perturbations than those whose symplectic eigenvalues are greater than $1/2$.

This slower asymptotic vanishing can improve the fundamental limits of quantum-information tasks governed by these quantities. In particular, a nonzero leading contribution to the QCE proportional to $\epsilon$, rather than $\epsilon^2$, yields a larger asymptotic discrimination exponent for sufficiently small $\epsilon$. This behavior is illustrated by the sub-Rayleigh optical source discrimination problem in Sec.~\ref{subsec:application 2}.

\subsection{Leading perturbations with symplectic eigenvalues both greater than and equal to $1/2$}
\label{subsec:leading pure modes}

We considered reference covariance matrices whose symplectic eigenvalues are all greater than  $1/2$  in Sec.~\ref{sec: perturbative formula} and all equal to  $1/2$  in Sec.~\ref{subsec: pure scaling}. When $\sigma_0$ contains both types of symplectic eigenvalues, the leading asymptotic behavior depends not only on the perturbation magnitude but also on the modes on which the perturbation acts. Perturbations on modes with symplectic eigenvalue  $1/2$  can exhibit the slower asymptotic vanishing and may dominate even when subleading in norm, whereas perturbations confined to modes with symplectic eigenvalues greater than $1/2$ retain the regular scaling.

Let $n_{\rm g}$ and $n_{\rm e}$ denote the numbers of modes with symplectic eigenvalue greater than $1/2$ and those equal to $1/2$, respectively, with
\begin{align}
n_{\rm g}+n_{\rm e}=n.
\end{align}
Let $\{\nu_j\}_{j=1}^{n_{\rm g}}$ be the symplectic eigenvalues greater than $1/2$,
\begin{align}
\nu_j>\frac{1}{2},
\qquad
j=1,\ldots,n_{\rm g}.
\end{align}
Without loss of generality, we group the two kinds of symplectic eigenvalues as follows. Let
\begin{align}
A=\Omega \sigma_0 = U\Gamma U^{-1},
\end{align}
where the eigenvalues are ordered as
\begin{align}
\Gamma
=
\operatorname{diag}
\left(
i\Lambda_{\rm g},
\frac{i}{2}I_{n_{\rm e}},
-i\Lambda_{\rm g},
-\frac{i}{2}I_{n_{\rm e}}
\right),
\label{eq:mixed pure Gamma}
\end{align}
with
\begin{align}
\Lambda_{\rm g}
=
\operatorname{diag}
(\nu_1,\ldots,\nu_{n_{\rm g}}).
\end{align}

For the VNE, we write the covariance-matrix perturbation in the
eigenbasis of $A$ as
\begin{align}
\widetilde E
=
U^{-1}EU
=
\begin{pmatrix}
\widetilde E_{11} & \widetilde E_{12}
& \widetilde E_{13} & \widetilde E_{14}\\
\widetilde E_{21} & \widetilde E_{22}
& \widetilde E_{23} & \widetilde E_{24}\\
\widetilde E_{31} & \widetilde E_{32}
& \widetilde E_{33} & \widetilde E_{34}\\
\widetilde E_{41} & \widetilde E_{42}
& \widetilde E_{43} & \widetilde E_{44}
\end{pmatrix}.
\label{eq:tilde E mixed pure blocks}
\end{align}
For the QRE, quantum Chernoff quantity, and QF, we similarly define
\begin{align}
\Delta\widetilde E
=
U^{-1}(E_1-E_2)U,
\end{align}
and partition $\Delta\widetilde E$ into the same $4\times4$ block
structure as in \eqref{eq:tilde E mixed pure blocks}. Since $U$ is
independent of $\epsilon$, the norm of the perturbation matrices in the eigenbasis of $A$ remains as
\begin{align}
\|\widetilde E\|=\Theta(\epsilon),
\qquad
\|\Delta\widetilde E\|=\Theta(\epsilon).
\end{align}
Although the overall perturbations scale as $\Theta(\epsilon)$, individual blocks may vanish more rapidly or identically. For example, if only the $11$ and $33$ blocks of $\widetilde E$ or $\Delta \widetilde{ E}$ are nonzero, the second-order covariance-matrix contributions involve only modes whose reference symplectic eigenvalues are greater than $1/2$. Therefore, there is no slower vanishing of these covariance matrix contributions, and they can be evaluated using the perturbative formulas in Sec.~\ref{sec: perturbative formula} restricted to those modes. 

In contrast, the $22$ and $44$ blocks of $\widetilde E$ and $\Delta \widetilde{ E}$ correspond to perturbations within the modes whose reference symplectic eigenvalue $1/2$. The perturbative coefficients associated with these blocks exhibit the strongest singular behavior as the symplectic eigenvalues approach $1/2$, so their contributions can dominate those involving modes with symplectic eigenvalues greater than $1/2$. We therefore identify sufficient conditions for this dominance rather than classifying all possible relative block scalings.

\begin{criterion}
\label{thm:leading pure mode perturbation}
Consider the reference matrix $A$ and the perturbations
$\widetilde E$ and $\Delta\widetilde E$ defined above. For the VNE,
suppose
\begin{align}
\left|
\operatorname{tr}\widetilde E_{22}
-
\operatorname{tr}\widetilde E_{44}
\right|
=
\Theta(\epsilon). \label{eq:criterion 1}
\end{align}
For the QRE, quantum Chernoff quantity, and QF, suppose
\begin{align}
\max\left\{
\|\Delta\widetilde E_{22}\|,
\|\Delta\widetilde E_{44}\|
\right\}
=
\Theta(\epsilon).  \label{eq:criterion 2}
\end{align}
Then, within the second-order perturbative analysis, the leading perturbative covariance-matrix contribution is determined by the modes with symplectic eigenvalues $1/2$, while contributions from modes with symplectic eigenvalues $\nu_j>1/2$ are strictly not
asymptotically leading (and can therefore be ignored in the perturbative regime). Consequently, tracing out the modes whose
symplectic eigenvalues $\nu_j>1/2$ does not change the coefficients of the leading
perturbative scaling.
\end{criterion}
The proof is given in Appendix~\ref{app:proof mixed pure theorem}.  Condition \eqref{eq:criterion 1} prevents cancellation of the leading first-order VNE contribution, whereas condition \eqref{eq:criterion 2} ensures that at least one of the corresponding perturbation blocks contributes at leading order for the QRE, quantum Chernoff quantity, and quantum infidelity.

For this reduced state, we evaluate the VNE variation, QRE, quantum
Chernoff quantities, and quantum infidelity directly from their
definitions and extract their leading asymptotic terms. These terms
also determine the leading asymptotic behavior of the corresponding
quantities for the full state. This reduction simplifies the theoretical analysis and may inform receiver design by identifying the modes relevant at leading order.
Alternatively, standard matrix eigenvalue perturbation theory \cite[Sec.~V]{stewart1990matrix} may be applied to approximate the perturbed symplectic eigenvalues of the reduced pure state. However, the resulting approximations to the quantum information quantities are not guaranteed to reproduce their leading asymptotic scalings or coefficients.

If the leading-order perturbation does not act within the eigenspaces
associated with symplectic eigenvalues equal to $1/2$, different
perturbation blocks may compete at the leading asymptotic order. In
this regime, no universal reduction is available, and the problem must
be analyzed case by case using the blockwise singularity analysis
developed in Appendix~\ref{app:proof mixed pure theorem}.

\subsection{Procedure for leading-order perturbative analysis}
\label{subsec:perturbative procedure}

We summarize a practical procedure for obtaining the leading
perturbative behavior of the information quantities of Gaussian states considered
in this work.

\begin{enumerate}
    \item \textit{Identify the leading contributions.}
    Determine the asymptotic scaling of the mean-vector and
    covariance-matrix perturbations. For the QRE, quantum Chernoff
    exponent, and QF, compare the corresponding contributions and retain
    only those that can contribute at leading order. The VNE depends
    only on the covariance matrix.

    \item \textit{Reference covariance matrix with all symplectic eigenvalues greater than $1/2$.}
    When the covariance matrix contribution is relevant, 
    we first take the limit $\epsilon\to0$ and determine the symplectic
    eigenvalues of the reference covariance matrix. If all symplectic
    eigenvalues are greater than $1/2$, the perturbative formulas in
    Sec.~\ref{sec: perturbative formula} apply directly and give both
    the leading scaling and coefficient.

   \item \textit{Reference covariance matrix with symplectic eigenvalues equal to $1/2$.} 
If all reference symplectic eigenvalues equal to $1/2$, we identify slower asymptotic vanishing and evaluate the corresponding quantum information quantities directly from their definitions. 
If the reference covariance matrix contains symplectic eigenvalues
both equal to and greater than $1/2$, and Criterion~\ref{thm:leading pure mode perturbation} is satisfied, the
modes whose symplectic eigenvalues are greater than $1/2$ may be
traced out without changing the leading perturbative scaling. 
The reduced-state information quantities can then be evaluated directly to extract their leading terms. When Criterion~\ref{thm:leading pure mode perturbation} is not satisfied, the competing perturbation blocks must be analyzed case by case using Appendix~\ref{app:proof mixed pure theorem}.
\end{enumerate}

\subsection{Numerical verification}
\label{sec:numerical pure boundary}

\begin{figure*}[t]
    \centering
    \subfloat[Full formula]{
        \includegraphics[width=0.49\textwidth]{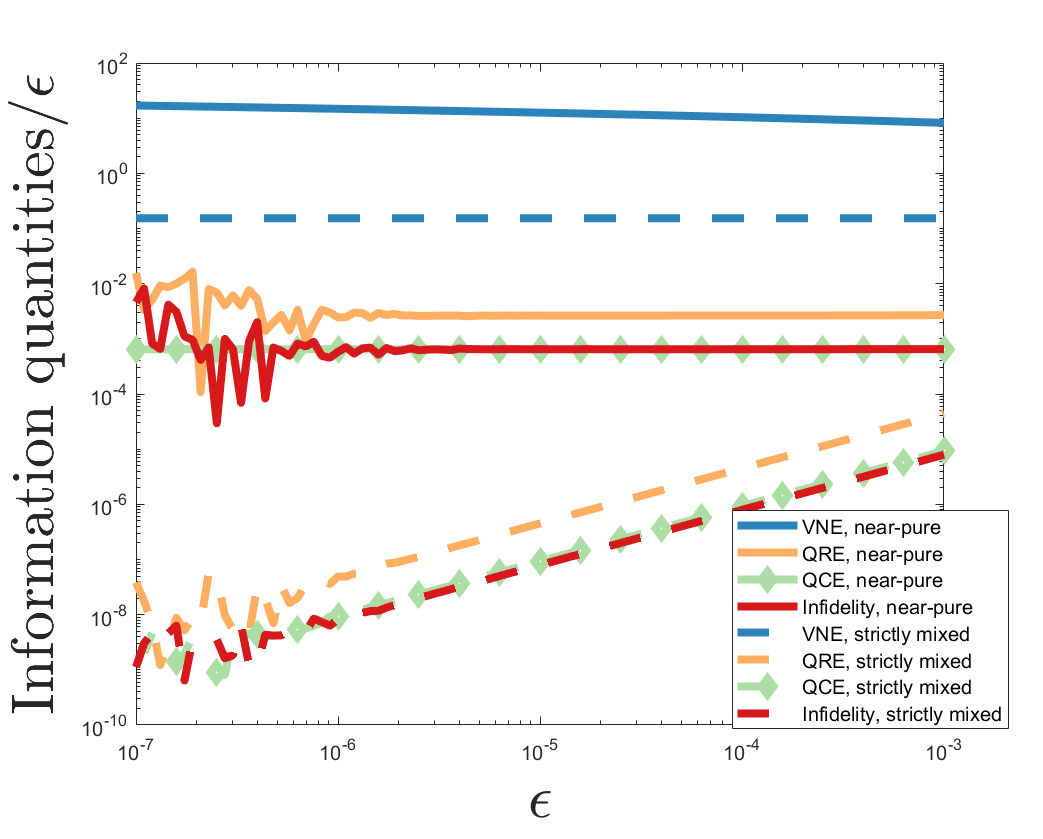}
        \label{fig:sub1}
    }
    \hfill
    \subfloat[perturbative formula]{
        \includegraphics[width=0.48\textwidth]{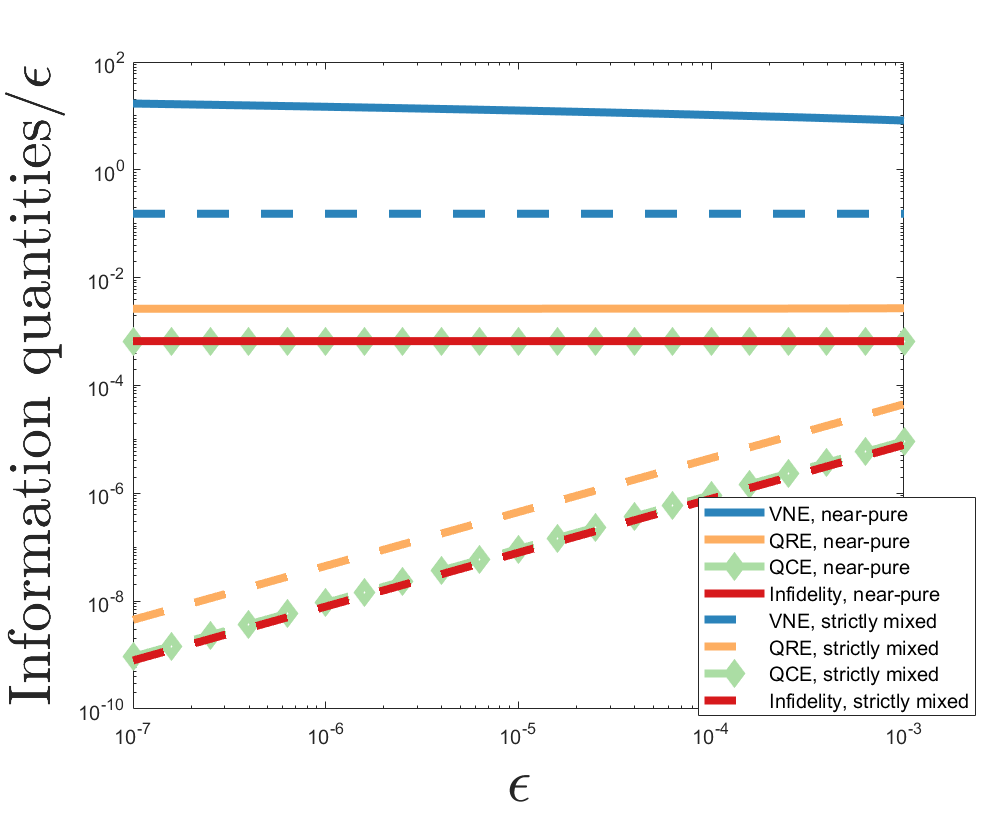}
        \label{fig:sub2}
    }

    \caption{Scaling of the information quantities of Gaussian states for reference covariance
matrices with symplectic eigenvalues greater than $1/2$ and equal to
$1/2$ as a function of the covariance-matrix perturbation norm
$\epsilon$. The plotted quantities are normalized by $\epsilon$. Each
point is averaged over $10^5$ independent random perturbation
realizations. (a) Direct evaluation using the full Gaussian formulas.
(b) Evaluation using the perturbative formulas derived in
Sec.~\ref{sec: perturbative formula}. When all symplectic eigenvalues
of the reference covariance matrix are greater than $1/2$, the VNE
variation scales as $O(\epsilon)$, while the QRE, QCE, and quantum
infidelity scale as $O(\epsilon^2)$. For the pure reference state, approximated by reference covariance matrices with symplectic eigenvalues approaching $1/2$, the VNE variation scales as $O(\epsilon|\log\epsilon|)$, while the QRE, QCE, and quantum infidelity scale as $O(\epsilon)$. The agreement between the two figures confirms the
predicted change in asymptotic scaling. }
    \label{fig:2}
\end{figure*}
We next numerically verify the predicted change in asymptotic scaling by considering reference states approaching purity. We use the same random perturbation matrices and normalization procedure as in Sec.~\ref{subsec:numerical verification}. To compare reference covariance matrices with symplectic eigenvalues greater than $1/2$ and approaching  $1/2$, we consider
\begin{align}
    \sigma_{0,\rm mix} = 0.6 I_4, \qquad
    \sigma_{0, \rm p} = \left(\frac{1}{2}+\epsilon\right) I_4 ,
\end{align}
respectively. Here $\epsilon \to 0$.

We consider the following quantities: VNE variation $ \Delta S_{\rm VN}$,
QRE $ D(\hat{\rho}_2\Vert\hat{\rho}_1) $, QCE $\xi(\hat{\rho}_1,\hat{\rho}_2)$, and quantum infidelity $1-F(\hat{\rho}_1,\hat{\rho}_2)$.
Figure~\ref{fig:2} plots these information quantities divided by $\epsilon$.

For perturbations of  $\sigma_{0,\rm mix}$, the VNE variation
$\Delta S_{\rm VN}=O(\epsilon)$, whereas the QRE, QCE, and quantum 
infidelity scale as $O(\epsilon^2)$. Consequently,
$\Delta S_{\rm VN}/\epsilon$ approaches a constant, while the
remaining normalized quantities scale as $O(\epsilon)$.
For a pure reference state, the VNE variation exhibits the
logarithmically enhanced scaling
  $  \Delta S_{\rm VN}
    =
    O\!\left(\epsilon|\log\epsilon|\right)$,
whereas the QRE, QCE, and quantum infidelity scale as $O(\epsilon)$.
Their normalized values therefore remain constant as
$\epsilon\rightarrow0$, while
$\Delta S_{\rm VN}/\epsilon=O(|\log\epsilon|)$.

Figure~\ref{fig:sub1}  evaluates the
information quantities directly from the full Gaussian formulas,
whereas Fig.~\ref{fig:sub2} evaluates the
corresponding perturbative formulas derived in Sec.~\ref{sec: perturbative formula}. Each point is averaged over $10^5$ independent random perturbation realizations.
The two evaluations in Fig.~\ref{fig:2} exhibit the same asymptotic behavior, confirming the slower vanishing of the information quantities when the reference state is pure.

For $\epsilon \le 10^{-6}$, the direct evaluations in
Fig.~\ref{fig:sub1} begin to exhibit
numerical fluctuations. In this regime, the information quantities become small and the  errors from the eigen-decomposition and symplectic-decomposition calculations become comparably large. In contrast, the perturbative approximation error continues to decrease with  $\epsilon$,  demonstrating the numerical advantage of the perturbative formulas derived in Sec.~\ref{sec: perturbative formula}.

\section{Applications}\label{sec:applications}

In this section, we apply the perturbative-analysis procedure developed in Sec.~\ref{subsec:perturbative procedure} to quantum illumination and sub-Rayleigh optical source discrimination, recovering known asymptotic QCEs directly within the Gaussian-state formalism.

\subsection{Quantum illumination with coherent and TMSV inputs}

As the first application, we revisit the quantum-illumination problem
considered by Cox {\em et al.} in \cite{AliCox2024}, in which one seeks to discriminate among a library of weakly reflecting targets using either coherent-state or two-mode-squeezed-vacuum (TMSV) states in the presence of thermal background noise. We denote by $\bar n$ the mean photon number per mode transmitted by the optical probe, by $\bar n_{\rm B}$ the mean photon number injected by the thermal background per mode, and by $\kappa^{(k)}$ the target reflectivity under hypothesis $k$. Cox {\em et al.} further defined $C^{(k)}$ as a real-valued vector of spatial-mode overlaps between the receiver-pupil-truncated spatial mode excited by object $k$ and the spatial functions corresponding to the elements of a fixed spatial-mode basis that contains the excitations corresponding to all the object hypotheses in the library.
The original analysis in \cite{AliCox2024} applied the perturbative framework of  \cite{Grace2022}, which was developed for finite-dimensional quantum states.
Here, we show that the same asymptotic quantum Chernoff exponents can be recovered directly from the Gaussian-state description using the perturbative QCE derived in Theorem~\ref{thm:QCE-perturbation} and Corollary~\ref{cor:QCE-s-half}. 

Following \cite{AliCox2024}, we consider the asymptotic regime
\begin{align}
\kappa^{(k)}
&\in
o\left(
\frac{\bar{n}}{\bar{n}_{\rm B}}
\right),
\quad
\bar{n}
\in
O\left(
e^{-\bar{n}_{\rm B}}
\right),
\quad
1\in o(\bar{n}_{\rm B}), \label{eq:limit regime}
\end{align}
For the coherent-state input, the leading perturbative contribution
arises from the mean-vector difference between the two hypotheses.
Applying Theorem~\ref{thm:QCE-perturbation} at
$s=\frac{1}{2}$ gives
\begin{align}
\xi_{C}^{(2)}
\simeq
\frac{\bar{n}}{4\bar{n}_{\rm B}}
\left\|
\sqrt{\kappa^{(1)}}\,C^{(1)}
-
\sqrt{\kappa^{(2)}}\,C^{(2)}
\right\|_F^2,
\label{eq:Ali coherent main}
\end{align}
where $\|A\|_{\rm F}$ denotes the Frobenius norm,
defined by
\begin{align}
\|A\|_{\rm F}
=
\sqrt{\trace\left(A^\dagger A\right)}.
\end{align}
For the TMSV input (with the transmitter retaining the idler mode while sending the signal mode of the TMSV to illuminate the target), the mean vectors are zero and the Chernoff exponent
is determined entirely by the covariance-matrix contribution. Since the reference state has symplectic eigenvalues all greater than $1/2$, applying Theorem~\ref{thm:QCE-perturbation} at
$s=\frac{1}{2}$ gives
\begin{align}
\xi_{Q}^{(2)}
\simeq
\frac{\bar{n}}{\bar{n}_{\rm B}}
\left\|
\sqrt{\kappa^{(1)}}\,C^{(1)}
-
\sqrt{\kappa^{(2)}}\,C^{(2)}
\right\|_F^2.
\label{eq:Ali TMSV main}
\end{align}
Therefore, 
\begin{align}
\xi_{Q}^{(2)}
=
4\xi_{C}^{(2)},
\end{align}
showing the factor-of-four enhancement in the quantum Chernoff
exponent provided by the TMSV input.
Furthermore, $\xi_{Q}^{(2)}$ retains the usual quadratic scaling
with the covariance-matrix difference when all reference symplectic eigenvalues are greater than $1/2$. The derivations of
\eqref{eq:Ali coherent main} and
\eqref{eq:Ali TMSV main}, as well as this quadratic scaling, are
provided in Appendix~\ref{appendix:Ali}.

\subsection{Sub-Rayleigh imaging} \label{subsec:application 2}

As the second application, we revisit the sub-Rayleigh object
discrimination problem studied by Krovi {\em et al.}~\cite{krovi2016attainingquantumlimitpassive} and Lu {\em et al.} \cite{lu2018quantum}. 
The task is to determine whether the received optical field originates from a single point source or from two closely spaced point sources radiating incoherently with respect to one another, situated at a known Rayleigh-unit separation $\mu$. 
In \cite{lu2018quantum}, the weak-source approximation was applied to the received thermal states, under which the output state contains only the zero- and one-photon components and therefore has finite-dimensional support. Applying the finite-dimensional perturbation theory of \cite{Grace2022} will yield the perturbative quantum Chernoff exponent, albeit whose validity will hold true only under the weak-source approximation.

Since our perturbation theory applies to Gaussian states, we can use the exact analysis of this problem developed by Krovi {\em et al.}~ \cite{krovi2016attainingquantumlimitpassive}, where the quantum state under each hypotheses---for the one-versus-two source discrimination problem---was expressed as a multimode correlated zero-mean thermal state (i.e., Gaussian states). This analysis, performed directly within the Gaussian-state formalism, thereby eliminates the need to invoke the weak-source approximation. Let  $N_0$ denote the mean photon number of the received quantum state, and $\mu$ denote the ratio of the angular separation under the two-source hypothesis at the image plane to the width of the point-spread function. 
The corresponding covariance matrices are given by
\begin{align}
V^{(1)}
&=
\begin{bmatrix}
V_1^{(1)} & 0\\
0 & V_1^{(1)}
\end{bmatrix},
&
V^{(2)}
&=
\begin{bmatrix}
V_1^{(2)} & 0\\
0 & V_1^{(2)}
\end{bmatrix},
\end{align}
where
\begin{align}
V_1^{(1)}
&=
\frac{1}{2}
\begin{bmatrix}
2N_0+1 & 0 & 0\\
0 & 1 & 0\\
0 & 0 & 1
\end{bmatrix},
\\
V_1^{(2)}
&=
\frac{1}{2}
\begin{bmatrix}
N_0A_0^2+1 & 0 & N_0A_0B\\
0 & 2\kappa+1 & 0\\
N_0A_0B & 0 & N_0B^2+1
\end{bmatrix},
\end{align}
and
\begin{align}
A_0
&=
\sqrt{2}\sinc(\mu),
\\
B
&=
\sqrt{
\left[
1+\sinc(2\mu)-2\sinc^2(\mu)
\right]
}\\
\kappa & = \frac{N_0}{2}[1-\sinc(2\mu)],
\end{align}
where $\sinc(x) = \sin(x)/x$.
In the sub-Rayleigh regime $\mu\to0$, the two covariance matrices approach the same reference covariance matrix, given by
\begin{align}
V_1^{(1)},V_1^{(2)}
\rightarrow
\operatorname{diag}\left(
N_0+\frac{1}{2},
\frac{1}{2},
\frac{1}{2}
\right).
\end{align}
The first mode has symplectic eigenvalue greater than $1/2$ while the remaining two modes have symplectic eigenvalues equal to $1/2$. Meanwhile, since the overall perturbation is 
\begin{align}
\left\|V^{(1)}-V^{(2)}\right\|=\Theta(\mu^2),
\end{align}
and the perturbation on second and third modes is
\begin{align}
    \left\|\diag\left(\kappa, N_0 B^2/2\right)\right\| = \Theta(\mu^2),
\end{align}
the condition for applying Criterion~\ref{thm:leading pure mode perturbation} is satisfied. We can therefore trace out the first mode. The reduced reference state is a pure state, and the third-mode perturbation scales as $N_0B^2=\Theta(\mu^4)$ and is negligible at leading order compared with the second-mode perturbation $\kappa=\Theta(\mu^2)$. We therefore trace out the  third mode as well and retain only the second mode. The remaining single-mode covariance matrices are
\begin{align}
    V^{(1)}_{\rm rem} = \frac{1}{2} I_2, \qquad V^{(2)}_{\rm rem} = \left(\frac{1}{2}+\kappa\right) I_2.
\end{align}
The corresponding states are therefore the vacuum state and a thermal state with mean photon number $\kappa$. By definition, the QCE of the remaining single-mode problem is
\begin{align}
\xi_{\rm rem}
&=
\!\max_{s\in[0,1]}
\! - \log
\left[
\bra{0}
\sum_{k=0}^{\infty}
\frac{\kappa^{k(1-s)}}
{(1+\kappa)^{(1+k)(1-s)}}
\ket{k}\bra{k}
\ket{0}
\right]
\nonumber\\
&=
\log(1+\kappa)
\nonumber\\
&=
\frac{N_0}{3}\mu^2+O(\mu^4).
\end{align}
Here, $\xi_{\rm rem}$ is maximized at $s = 0$, which does not contradict  Corollary \ref{cor:QCE-s-half}, since the latter applies to reference states whose symplectic eigenvalues are all greater than $1/2$.
Using $\delta(d)=\sinc(2\mu)$ in the notation of \cite{lu2018quantum}, the small-separation expansion of its Eq.~(5) yields the same leading term $N_0\mu^2/3$. Thus, a straightforward single-mode calculation reproduces the known asymptotic coefficient, without invoking the weak-source approximation. 

This reduction also suggests a simple receiver design: by retaining only the second mode and performing photon-number-resolving detector, we achieve the leading asymptotic QCE of the original discrimination problem.

This quadratic scaling outperforms the  $O(\mu^3)$ scaling achieved by conventional direct imaging in the image plane (a suboptimal receiver for this problem) \cite{darekar2026fundamentallimitversuspoint}, demonstrating a quantum advantage in the sub-Rayleigh regime.

\section{Conclusion}
\label{sec:conclusion}

In this work, we developed a perturbation theory for quantum information quantities of Gaussian states directly in terms of perturbations of their moments. For reference covariance matrices whose symplectic eigenvalues are all greater than $1/2$, we derived compact second-order perturbative formulas for the VNE, QRE, quantum Chernoff quantity, and QF using primary matrix functions and Fréchet derivatives. We also showed that the perturbative quantum Chernoff quantity is maximized at $s=1/2$, yielding the perturbative QCE. Furthermore, the derivations establish general trace identities for Fréchet derivatives of primary matrix functions of diagonalizable matrices, providing analytical tools for perturbative evaluations of quantum information quantities.

We next identified slower asymptotic vanishing associated with symplectic eigenvalues equal to $1/2$ for all four information quantities. Specifically, for covariance-matrix perturbations of order $O(\epsilon)$, the VNE variation can scale as $O(\epsilon|\log\epsilon|)$ rather than $O(\epsilon)$, while the covariance-matrix contributions to the QRE, QCE, and quantum infidelity can scale as $O(\epsilon)$ rather than $O(\epsilon^2)$. Since these quantities govern fundamental limits across quantum-information tasks, slower asymptotic vanishing can improve performance, as illustrated by the larger quantum Chernoff exponent in state discrimination. The leading asymptotic behavior also depends on the symplectic eigenspaces affected by the perturbation, rather than only its magnitude.

For reference covariance matrices containing symplectic eigenvalues both equal to and greater than $1/2$, we identified sufficient conditions under which the modes with symplectic eigenvalues greater than $1/2$ can be traced out without changing the leading asymptotic coefficients of the covariance-matrix contributions. This reduction simplifies theoretical calculations and may also inform receiver design by identifying the modes relevant at leading order. 

Applying our framework to quantum illumination and sub-Rayleigh incoherent optical source discrimination, we recovered known asymptotic QCEs directly within the Gaussian-state formalism. In quantum illumination, we recovered the factor-of-four enhancement provided by TMSV probes. In sub-Rayleigh discrimination, the mode reduction yielded a single-mode calculation reproducing the known quadratic scaling and asymptotic coefficient in the QCE without invoking the weak-source approximation, while also identifying a corresponding simple quantum-optimal receiver design that uses a spatial-mode resolved photon detection.

\section{Acknowledgments}
This work was supported by the Office of Naval Research (ONR) under grant number N00014-19-1-2189.

\bibliographystyle{IEEEtran}
\bibliography{papers}

\appendix
\section{Derivation of Lemmas} \label{app:lemma}

\begin{lemma} \label{lemma: trace}
Let $A,E\in\mathbb{C}^{N\times N}$, with $A$ diagonalizable, and let $f$ and $g$ be $2N-1$ times continuously differentiable on a domain containing the eigenvalues of $A$. Then
\begin{align}
\trace\left(f(A) D^{(1)}_{g(A)}[E]\right)
&=
\trace\left(E f(A) g^{\prime}(A)\right). \label{eq: t fa d1}
\end{align}
\end{lemma}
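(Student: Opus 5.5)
The plan is to work directly with the spectral-projector representation \eqref{eq:D1f P} of the first Fr\'echet derivative, which is available for diagonalizable $A$ under the stated smoothness hypothesis. Write $A=\sum_j \lambda_j P_j$ as in \eqref{eq:A = P}. Since $f$ and $g'$ are primary matrix functions, $f(A)=\sum_m f(\lambda_m)P_m$ and $g'(A)=\sum_m g'(\lambda_m)P_m$. Hence
\begin{align}
f(A)g'(A)=\sum_m f(\lambda_m)g'(\lambda_m)P_m .
\end{align}

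Substituting \eqref{eq:D1f P}, the left-hand side of \eqref{eq: t fa d1} becomes
\begin{align}
\sum_{m,j,k} f(\lambda_m)\,g^{[1]}[\lambda_j,\lambda_k]\,\trace\left(P_m P_j E P_k\right).
\end{align}
Using $P_mP_j=\delta_{mj}P_j$, this collapses the $m$-sum. Then cyclicity of the trace together with $P_kP_j=\delta_{jk}P_j$ gives
\begin{align}
\trace(P_jEP_k)=\trace(EP_kP_j)=\delta_{jk}\trace(EP_j).
\end{align}
So only diagonal terms survive, leaving
\begin{align}
\sum_j f(\lambda_j)\,g^{[1]}[\lambda_j,\lambda_j]\,\trace(EP_j).
\end{align}

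By the definition \eqref{eq:first_dd}, the coincident divided difference is $g^{[1]}[\lambda_j,\lambda_j]=g'(\lambda_j)$. The sum is therefore
\begin{align}
\trace\Big(E\sum_j f(\lambda_j)g'(\lambda_j)P_j\Big)=\trace\big(Ef(A)g'(A)\big),
\end{align}
which is the claim. As a check, this is equivalent to the eigenbasis computation: $\trace(f(\Lambda)(g^{[1]}[\Lambda]\circ\tilde E))$ picks out only the diagonal entries of the Hadamard product.

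The only delicate point is repeated eigenvalues. There the projectors $P_j$ are defined through a fixed choice of eigenvector matrix $U$, so they are rank-one and not the full spectral projectors. The identities $P_jP_k=\delta_{jk}P_j$ and $\sum_j P_j=I_N$ still hold for this choice, and \eqref{eq:D1f P} is stated with this same decomposition. Distinct indices $j\ne k$ sharing an eigenvalue are killed by the trace, so no ambiguity arises. The smoothness assumption enters only to guarantee that the Fr\'echet derivative exists and that \eqref{eq:D1f P} holds.
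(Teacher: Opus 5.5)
Your proof is correct and follows the paper's argument essentially step for step. Both expand $D^{(1)}_{g(A)}[E]$ in the projectors $P_j$, use $P_jP_k=\delta_{jk}P_j$ with cyclicity of the trace so that only $j=k$ survives, replace $g^{[1]}[\lambda_j,\lambda_j]$ by $g'(\lambda_j)$, and reassemble $\sum_j f(\lambda_j)g'(\lambda_j)P_j=f(A)g'(A)$; your remark on repeated eigenvalues just makes explicit what the paper leaves implicit.
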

\begin{proof}
Since $g$ is $2N-1$ times continuously differentiable, the first-order Fr\'echet derivative $D^{(1)}_{g(A)}[E]$ exists and admits a
divided-difference representation.
Using $A = \sum_{j=1}^{N} \lambda_j P_j $ as in \eqref{eq:A = P}, we have
    \begin{align}
        &\trace\left(f(A) D^{(1)}_{g(A)}[E]\right)\nonumber\\
        & \qquad= \trace\left(f(A) \sum_{k,l} g^{[1]}[\lambda_k,\lambda_l] P_k E P_l \right) \label{eq:Lemma 1 e 1}\\ 
        & \qquad = \trace\left( \sum_{k} f(\lambda_k)g^{\prime}(\lambda_k) P_k E P_k \right) \label{eq:U_k and U_l}\\ 
        & \qquad = \trace\left( E \sum_{k} f(\lambda_k)g^{\prime}(\lambda_k) P_k  \right) \label{eq:Lemma 1 e 2}\\ 
       & \qquad = \trace\left( E f(A)g^{\prime}(A)  \right). \label{eq: t fa d1 1}
    \end{align}
    Here,  \eqref{eq:Lemma 1 e 1} follows directly from the first-order divided-difference  representation in  \eqref{eq:D1f P}. 
    To obtain \eqref{eq:U_k and U_l}, we use $f(A)P_k=f(\lambda_k)P_k$, and $P_lP_k=\delta_{lk}P_k$, so that only the terms with $l=k$ remain, for which $g^{[1]}[\lambda_k,\lambda_k]=g'(\lambda_k)$. Equation~\eqref{eq:Lemma 1 e 2} then follows from the cyclic property of the trace and $P_k^2=P_k$.  Finally, \eqref{eq: t fa d1 1} follows from the definition of the  corresponding primary matrix function
$\sum_k f(\lambda_k)g'(\lambda_k)P_k=f(A)g'(A)$. Thus,  \eqref{eq: t fa d1} follows.
\end{proof}
When $f = (g^{\prime})^{-1}$, Lemma~\ref{lemma: trace} reduces to \cite[Lemma~3]{Grace2022}.

\begin{lemma} \label{lemma: trace 2}
Let $A,E\in\mathbb{C}^{N\times N}$, with $A$ diagonalizable,  and let $ f $ and $g$   be $3N-1$ times continuously differentiable on a domain containing the eigenvalues of $A$. Then
\begin{align}
&\trace\left( f(A) D^{(2)}_{g(A)}[E]\right) \nonumber \\
&= \trace\left( E D_{(fg^{\prime})(A)}^{(1)} [E] -  D^{(1)}_{f(A)}[E]D^{(1)}_{g(A)}[E]\right) \label{eq:trace lemma2}
\end{align}
\end{lemma}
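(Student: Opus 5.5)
The plan is to reduce both sides of \eqref{eq:trace lemma2} to double sums over the spectral projectors of $A$, using the projector representations \eqref{eq:D1f P} and \eqref{eq:D2f P}. Both sums will carry the same weights $t_{jk}=\trace(P_jEP_kE)$. The identity then reduces to a scalar identity between divided differences, to be checked entrywise. The smoothness hypothesis ($3N-1$ times continuously differentiable) is used only to guarantee, via \cite[Th.~3.1]{Rubensson2024}, that the second Fr\'echet derivative of $g$ and the first Fr\'echet derivatives of $f$ and $fg'$ exist and admit these divided-difference representations.

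\emph{Step 1 (left-hand side).} Insert \eqref{eq:D2f P}, so that $\trace(f(A)D^{(2)}_{g(A)}[E])=2\sum_{j,k,l}g^{[2]}[\lambda_j,\lambda_k,\lambda_l]\trace(f(A)P_jEP_kEP_l)$. By cyclicity, each term contains $P_l f(A)P_j=f(\lambda_j)\delta_{lj}P_j$, exactly as in the proof of Lemma~\ref{lemma: trace}. This collapses the sum to $2\sum_{j,k}f(\lambda_j)g^{[2]}[\lambda_j,\lambda_k,\lambda_j]\,t_{jk}$. Since $t_{jk}=t_{kj}$ by cyclicity, I will symmetrize and write it as
\[
\sum_{j,k}\Big(f(\lambda_j)g^{[2]}[\lambda_j,\lambda_k,\lambda_j]+f(\lambda_k)g^{[2]}[\lambda_k,\lambda_j,\lambda_k]\Big)t_{jk}.
\]

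\emph{Step 2 (right-hand side).} By \eqref{eq:D1f P}, $\trace(E\,D^{(1)}_{(fg')(A)}[E])=\sum_{j,k}(fg')^{[1]}[\lambda_j,\lambda_k]\,t_{jk}$. For the product term, $\trace(P_jEP_kP_lEP_m)$ forces $l=k$ and $m=j$. Using the symmetry of first divided differences, this gives $\trace(D^{(1)}_{f(A)}[E]D^{(1)}_{g(A)}[E])=\sum_{j,k}f^{[1]}[\lambda_j,\lambda_k]\,g^{[1]}[\lambda_j,\lambda_k]\,t_{jk}$.

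\emph{Step 3 (scalar identity).} It then suffices to show, for every pair $(j,k)$,
\[
f(\lambda_j)g^{[2]}[\lambda_j,\lambda_k,\lambda_j]+f(\lambda_k)g^{[2]}[\lambda_k,\lambda_j,\lambda_k]=(fg')^{[1]}[\lambda_j,\lambda_k]-f^{[1]}[\lambda_j,\lambda_k]\,g^{[1]}[\lambda_j,\lambda_k].
\]
There are two cases, and both follow from the definitions \eqref{eq:first_dd}--\eqref{eq:second_dd}.
\begin{itemize}
\item If $\lambda_j\neq\lambda_k$, the second case of \eqref{eq:second_dd} gives $g^{[2]}[\lambda_j,\lambda_k,\lambda_j]=(g'(\lambda_j)-g^{[1]}[\lambda_j,\lambda_k])/(\lambda_j-\lambda_k)$, and similarly for the swapped pair. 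Adding the two terms gives $\big(f(\lambda_j)g'(\lambda_j)-f(\lambda_k)g'(\lambda_k)-g^{[1]}[\lambda_j,\lambda_k](f(\lambda_j)-f(\lambda_k))\big)/(\lambda_j-\lambda_k)$, which is exactly the right-hand side.
\item If $\lambda_j=\lambda_k$ (including $j=k$), the left side is $2f\cdot\tfrac12 g''=fg''$, while the right side is $(fg')'-f'g'=fg''$.
\end{itemize}

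I expect the main obstacle to be bookkeeping rather than analysis. Two points need care. First, the symmetrization in Step 1 must not double-count the diagonal terms $j=k$; counting them correctly is what produces the factor $2\cdot\tfrac12$. Second, repeated eigenvalues at distinct indices must be handled by the confluent branches of the divided differences, so the case split has to be on $\lambda_j=\lambda_k$ rather than on $j=k$.
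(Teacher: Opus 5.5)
Your proposal is correct and follows essentially the same route as the paper. Both arguments reduce each side, via the projector representations, to a double sum weighted by $\trace(P_jEP_kE)$, symmetrize the left side by splitting the factor of two, and match coefficients through the same two-case divided-difference identity. Your choice to split cases on $\lambda_j=\lambda_k$ rather than on $j=k$ is slightly more careful than the paper's index-based split, which implicitly needs the confluent branch whenever distinct indices share an eigenvalue, as they do for the Gaussian matrices $A=\Omega\sigma_0$ used later.
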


\begin{proof}
Since $f$ and $g$ are $3N-1$ times continuously differentiable, their first- and second-order Fréchet derivatives exist and admit divided-difference representations. We first simplify the left-hand side of \eqref{eq:trace lemma2}
 \begin{align}
       &\trace(f(A) D^{(2)}_{g(A)}[E]) \nonumber \\
  & = \trace\left(2 f(A)  \sum_{k,l,m} g^{[2]}(\lambda_k,\lambda_l,\lambda_m) P_k E P_l E P_m \right) \label{eq:Lemma 2 p1}\\
  & = \trace\left(2 \sum_{k,l} f(\lambda_k) g^{[2]}(\lambda_k,\lambda_l,\lambda_k) P_k E P_l E \right) \label{eq:Lemma 2 p2} \\
  & = \trace\left( \sum_{k,l} f(\lambda_k) g^{[2]}(\lambda_k,\lambda_l,\lambda_k) P_k E P_l E \right) \nonumber\\
&\qquad+\trace\left( \sum_{l,k} f(\lambda_l) g^{[2]}(\lambda_l,\lambda_k,\lambda_l) P_l E P_k E \right) \label{eq:Lemma 2 p3} \\
& = \trace\left( E \sum_{k,l}  h_{k,l} P_k E P_l  \right).\label{eq:lemma 2 left}
    \end{align}
   Here, \eqref{eq:Lemma 2 p1} follows from the second-order divided-difference representation in \eqref{eq:D2f P}. Equation \eqref{eq:Lemma 2 p2} follows from the spectral-projector representation  $A = \sum_{j=1}^{N}\lambda_j P_j$ in \eqref{eq:A = P},  $f(A)P_k = f(\lambda_k)P_k$, and $\trace\left(P_k E P_l E P_m\right) = \trace\left(P_m P_k E P_l E \right) = \trace\left(\delta_{mk} P_k E P_l E \right) $. Equation \eqref{eq:Lemma 2 p3} follows  by splitting the factor of two into two identical sums and interchanging the indices  $k$ and $l$ in the second sum. For \eqref{eq:lemma 2 left}, we define
    \begin{align}
     \!\! h_{k,l}  =   f(\lambda_k) g^{[2]}(\lambda_k,\lambda_l,\lambda_k)+ f(\lambda_l) g^{[2]}(\lambda_l,\lambda_k,\lambda_l).
    \end{align}
When $k \ne l$, $h_{k,l} $ yields
\begin{align}
     &h_{k,l}\nonumber\\
     &= f(\lambda_k) \frac{g^{\prime}(\lambda_k) - g^{[1]}\left( \lambda_k, \lambda_l \right) }{\lambda_k-\lambda_l} + f(\lambda_l) \frac{g^{\prime}(\lambda_l) - g^{[1]}\left( \lambda_k, \lambda_l \right) }{\lambda_l-\lambda_k} \\
     & = \frac{f(\lambda_k)g^{\prime}(\lambda_k) - f(\lambda_l)g^{\prime}(\lambda_l) }{\lambda_k-\lambda_l}  - \frac{f(\lambda_k) - f(\lambda_l)}{\lambda_k- \lambda_l} g^{[1]}\left( \lambda_k, \lambda_l \right) \\
     & = (fg^{\prime})^{[1]}(\lambda_k,\lambda_l) - f^{[1]}\left( \lambda_k, \lambda_l \right)g^{[1]}\left( \lambda_k, \lambda_l \right).\label{eq:h t 1}
\end{align}
    When $k = l$, $h_{k,l}$ yields
    \begin{align}
        h_{k,l} &= (fg^{\prime\prime})(\lambda_k) \\
        &= \left(\left(fg^{\prime}\right)^{\prime} - f^{\prime}g^{\prime}\right)(\lambda_k)\\
        & = (fg^{\prime})^{[1]}(\lambda_k,\lambda_l) - f^{[1]}\left( \lambda_k, \lambda_l \right)g^{[1]}\left( \lambda_k, \lambda_l \right). \label{eq:h t 2}
    \end{align}
Combining \eqref{eq:h t 1} and \eqref{eq:h t 2}, we obtain that, for all $k$ and $l$, $h_{k,l}$ yields
\begin{align}
    h_{k,l} =  (fg^{\prime})^{[1]}(\lambda_k,\lambda_l) - f^{[1]}\left( \lambda_k, \lambda_l \right)g^{[1]}\left( \lambda_k, \lambda_l \right). \label{eq:lemma 2 p8}
\end{align}
We next simplify the right-hand side of \eqref{eq:trace lemma2} and show that it reduces to the same expression as \eqref{eq:lemma 2 left}. Using the first-order Fr\'echet derivative representation in \eqref{eq:D1f P}, the first term of \eqref{eq:trace lemma2} yields
\begin{align}
    \trace\left( E D_{(fg^{\prime})(A)}^{(1)} [E] \right) = \trace\left( E \sum_{k,l} (fg^{\prime})^{[1]}(\lambda_k,\lambda_l) P_k E P_l  \right). \label{eq:lemma 2 p9}
\end{align}
Similarly, using \eqref{eq:D1f P},  the second term of \eqref{eq:trace lemma2} yields
\begin{align}
    &-\trace\left(   D^{(1)}_{f(A)}[E]D^{(1)}_{g(A)}[E]\right) \nonumber\\
    &= -\!\trace \!\! \left( \sum_{k,l} f ^{[1]}(\lambda_k,\lambda_l) P_k E P_l  \sum_{k',l'} g^{[1]}(\lambda_{k'},\lambda_{l'}) P_{k'} E P_{l'} \right) \\
    & = - \!\trace \!\! \left( E \sum_{k,l} f ^{[1]}(\lambda_k,\lambda_l)  g^{[1]}(\lambda_k,\lambda_l) P_k E P_l      \right), \label{eq:lemma 2 p10}
\end{align}
where the last equality follows from
$P_lP_{k'}=\delta_{lk'}P_l$, the cyclic property of the trace,
and the symmetry
$g^{[1]}(\lambda_k,\lambda_l)
=g^{[1]}(\lambda_l,\lambda_k)$.
Substituting \eqref{eq:lemma 2 p9} and \eqref{eq:lemma 2 p10} into the right-hand side of \eqref{eq:trace lemma2}, we obtain the same expression as that obtained by substituting \eqref{eq:lemma 2 left} and \eqref{eq:lemma 2 p8} into the left-hand side of \eqref{eq:trace lemma2}. Thus, \eqref{eq:trace lemma2} follows.

\end{proof}
When $f = 1/(g^{\prime})$, this lemma reduces to \cite[Lemma~4]{Grace2022}.
We now turn to the Fr\'echet derivatives of $1/f$.

\begin{lemma}
\label{lemma:Df-1}
Let $A\in\mathbb{C}^{N\times N}$ be diagonalizable, and let $f$ and $1/f$ be $2N-1$ times
continuously differentiable on a domain containing the eigenvalues of $A$. Then
\begin{align}
D_{(1/f)(A)}^{(1)}[E]
=
-(1/f)(A)D_{f(A)}^{(1)}[E](1/f)(A).
\label{eq:df-1}
\end{align}
\end{lemma}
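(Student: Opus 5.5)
We will prove the identity in the spectral-projector form, using the divided-difference representation \eqref{eq:D1f P} on both sides. Write $A=\sum_j\lambda_jP_j$ as in \eqref{eq:A = P}. The hypotheses ensure that $f(\lambda_j)\neq 0$ for every eigenvalue and that both Fr\'echet derivatives exist. Then
\begin{align}
D^{(1)}_{(1/f)(A)}[E]=\sum_{j,k}(1/f)^{[1]}[\lambda_j,\lambda_k]\,P_jEP_k .
\end{align}
For the right-hand side, we use $(1/f)(A)P_j=P_j(1/f)(A)=f(\lambda_j)^{-1}P_j$ together with $P_jP_k=\delta_{jk}P_j$. This gives
\begin{align}
-(1/f)(A)D^{(1)}_{f(A)}[E](1/f)(A)=-\sum_{j,k}\frac{f^{[1]}[\lambda_j,\lambda_k]}{f(\lambda_j)f(\lambda_k)}\,P_jEP_k .
\end{align}

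It therefore suffices to verify the scalar divided-difference identity
\begin{align}
(1/f)^{[1]}[\lambda_j,\lambda_k]=-\frac{f^{[1]}[\lambda_j,\lambda_k]}{f(\lambda_j)f(\lambda_k)}
\end{align}
for all pairs $j,k$. We split into the two cases of the definition \eqref{eq:first_dd}.

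\emph{Case $\lambda_j\neq\lambda_k$.} Here
\begin{align}
\frac{1/f(\lambda_j)-1/f(\lambda_k)}{\lambda_j-\lambda_k}=\frac{f(\lambda_k)-f(\lambda_j)}{f(\lambda_j)f(\lambda_k)(\lambda_j-\lambda_k)},
\end{align}
which is exactly the required right-hand side.

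\emph{Case $\lambda_j=\lambda_k$.} Here $(1/f)'(\lambda_j)=-f'(\lambda_j)/f(\lambda_j)^2$ by the quotient rule, which again matches.

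Since the coefficient of each $P_jEP_k$ agrees, the two operator sums coincide, and \eqref{eq:df-1} follows.

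The only delicate point is justifying that the representation \eqref{eq:D1f P} applies to $1/f$. This is where the assumption on $1/f$ is used: it guarantees that $1/f$ is defined and smooth near the spectrum, so $f$ does not vanish there and the divided-difference representation of \cite{higham2008functions} applies. As an alternative route, one could differentiate the identity $f(A+tE)\,(1/f)(A+tE)=I$ at $t=0$ using the product rule for Fr\'echet derivatives. This gives $D^{(1)}_{f(A)}[E](1/f)(A)+f(A)D^{(1)}_{(1/f)(A)}[E]=0$, and multiplying on the left by $(1/f)(A)$ yields \eqref{eq:df-1} directly. I would present the projector computation as the main argument and mention this route as a check.
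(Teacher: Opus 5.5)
Your proof is correct and matches the paper's argument essentially step for step. Both expand each side in the spectral projectors, reduce to the scalar identity $(1/f)^{[1]}[\lambda_j,\lambda_k]=-f^{[1]}[\lambda_j,\lambda_k]/\bigl(f(\lambda_j)f(\lambda_k)\bigr)$ (checked separately for distinct and coincident eigenvalues), and reassemble the result using $P_jP_k=\delta_{jk}P_j$; your product-rule remark is a valid extra check but is not needed.
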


\begin{proof}
Since $f$ and $(1/f)$ are $2N-1$ times continuously differentiable, the first-order Fr\'echet derivatives $D_{(1/f)(A)}^{(1)}[E]$ and $ D_{f(A)}^{(1)}[E] $  exist and admit a
divided-difference representation. Using the spectral-projector representation
$A=\sum_{k=1}^{N}\lambda_kP_k$,
we have
\begin{align}
D_{(1/f)(A)}^{(1)}[E]
=
\sum_{k,l}
(1/f)^{[1]}(\lambda_k,\lambda_l)
P_kEP_l.
\label{eq:df-1 0}
\end{align}
From the definition of the first divided difference representation in \eqref{eq:first_dd},
\begin{align}
&(1/f)^{[1]}(\lambda_k,\lambda_l)
\nonumber\\
&=
\begin{cases}
\dfrac{(1/f)(\lambda_k)-(1/f)(\lambda_l)}
{\lambda_k-\lambda_l},
& \lambda_k\neq\lambda_l,\\[6pt]
(1/f)'(\lambda_k),
& \lambda_k=\lambda_l,
\end{cases}
\nonumber\\
&=
\begin{cases}
-(1/f)(\lambda_k)
\dfrac{f(\lambda_k)-f(\lambda_l)}
{\lambda_k-\lambda_l} (1/f)(\lambda_l) ,
& \lambda_k\neq\lambda_l,\\[6pt]
-(1/f^2)(\lambda_k)f'(\lambda_k),
& \lambda_k=\lambda_l,
\end{cases}
\nonumber\\
&=
-(1/f)(\lambda_k)
f^{[1]}(\lambda_k,\lambda_l)(1/f)(\lambda_l).
\label{eq:df-1 1}
\end{align}
Substituting \eqref{eq:df-1 1} into \eqref{eq:df-1 0}, we obtain
\begin{align}
& D_{(1/f)(A)}^{(1)}[E] \nonumber\\
&=
-\sum_{k,l}
(1/f)(\lambda_k)
f^{[1]}(\lambda_k,\lambda_l)
(1/f)(\lambda_l)
P_kEP_l
 \\
&=
-\left(
\sum_{k'}(1/f)(\lambda_{k'})P_{k'}
\right)
\left(
\sum_{k,l}
f^{[1]}(\lambda_k,\lambda_l)P_kEP_l
\right)
\nonumber\\
&\qquad\times
\left(
\sum_{l'}(1/f)(\lambda_{l'})P_{l'}
\right)
\label{eq:df-1 21} \\
&=
-(1/f)(A)D_{f(A)}^{(1)}[E](1/f)(A). \label{eq:df-1 22}
\end{align}
Equation~\eqref{eq:df-1 21} follows from the projector relation $P_jP_k=\delta_{jk}P_j$. Equation~\eqref{eq:df-1 22} follows from the primary-matrix-function representation of $(1/f)(A)$ and the first-order divided-difference representation of $D^{(1)}_{f(A)}[E]$. Thus, \eqref{eq:df-1} follows.
\end{proof}

\begin{lemma}\label{lemma:Frechet log det A}[Fr\'echet derivatives of $\log\det A$]
Let $A\in\mathbb{C}^{N\times N}$ be diagonalizable and invertible.  For any perturbation direction $E\in\mathbb{C}^{N\times N}$, we have
\begin{align}
D^{(1)}_{\log\det (A)}[E] &= \trace(EA^{-1}), \label{eq: d1 log det A} \\
D^{(2)}_{\log\det(A)}[E] &= -\trace(E A^{-1}EA^{-1}).\label{eq: d2 log det A}
\end{align}
\end{lemma}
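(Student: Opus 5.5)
We prove Lemma~\ref{lemma:Frechet log det A} by reducing $\log\det$ to a trace of a primary matrix function and then applying Lemmas~\ref{lemma: trace}--\ref{lemma:Df-1}. Since $A$ is diagonalizable and invertible, its eigenvalues avoid zero. Choose a branch of $\log$ analytic on a neighborhood of the spectrum; if needed, take the branch cut along a ray missing all eigenvalues. With this branch, $\log\det(A+tE)=\trace\log(A+tE)$ for all small $t$, modulo a constant multiple of $2\pi i$ that does not depend on $t$. That constant disappears upon differentiation. The identity holds because the eigenvalues of $A+tE$ depend continuously on $t$ and stay inside the analyticity domain.

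Using the Gâteaux representation \eqref{eq: directional definition}, the Fréchet derivatives of the scalar map $A\mapsto\log\det A$ are the $t$-derivatives at $t=0$. By linearity of the trace, the first derivative is
\begin{align}
D^{(1)}_{\log\det(A)}[E]=\trace\left(D^{(1)}_{\log(A)}[E]\right).
\end{align}
We then apply Lemma~\ref{lemma: trace} with $f\equiv1$ and $g=\log$, which gives $\trace(E\,A^{-1})$ because $\log'(x)=1/x$. This is \eqref{eq: d1 log det A}.

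For the second derivative, $D^{(2)}_{\log\det(A)}[E]=\trace\left(D^{(2)}_{\log(A)}[E]\right)$. We apply Lemma~\ref{lemma: trace 2} with $f\equiv1$ and $g=\log$. Then $fg'=1/x$ and $D^{(1)}_{f(A)}[E]=0$, so the expression equals $\trace\left(E\,D^{(1)}_{A^{-1}}[E]\right)$.

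Lemma~\ref{lemma:Df-1}, taken with $f(x)=x$, gives $D^{(1)}_{(1/x)(A)}[E]=-A^{-1}EA^{-1}$. Substituting yields $-\trace(EA^{-1}EA^{-1})$, which is \eqref{eq: d2 log det A}. As a cross-check, one can differentiate Jacobi's formula $\frac{d}{dt}\log\det(A+tE)=\trace\left((A+tE)^{-1}E\right)$ once more and use $\frac{d}{dt}(A+tE)^{-1}=-(A+tE)^{-1}E(A+tE)^{-1}$.

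The smoothness hypotheses of the earlier lemmas hold automatically because $\log$, $1/x$ and constants are analytic away from $0$.

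The only delicate step is the branch and continuity issue in identifying $\log\det$ with $\trace\log$. Everything else is a direct substitution into earlier results. If the branch argument becomes awkward, the fallback is to use Jacobi's formula directly. It needs no logarithm branch for the derivative: write $\det(A+tE)=\det A\,\det(I+tA^{-1}E)$ and expand $\log\det(I+tA^{-1}E)=\trace\log(I+tA^{-1}E)$ with the principal branch near $I$.
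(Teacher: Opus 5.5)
Your proof is correct and follows the paper's route. You write $\log\det$ as $\trace\log$ and pass to Gâteaux derivatives. You then apply Lemma~\ref{lemma: trace} (with $f\equiv1$, $g=\log$) for the first derivative, and Lemma~\ref{lemma: trace 2} followed by Lemma~\ref{lemma:Df-1} (with $f(x)=x$) for the second; your branch-of-logarithm remark, that $\log\det(A+tE)$ and $\trace\log(A+tE)$ differ only by a $t$-independent multiple of $2\pi i$, adds a rigor point the paper leaves implicit.
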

\begin{proof} 
We first prove \eqref{eq: d1 log det A}.  $\log\det$ is twice Fr\'echet differentiable. Using the G\^ateaux-derivative representation in \eqref{eq: directional definition}, we have
    \begin{align}
        D^{(1)}_{\log\det(A)}[E] & = \frac{d}{dt}\Big|_{t = 0} \log \det (A+tE) \\
        & = \trace\left(\frac{d}{dt}\Big|_{t = 0}\log (A+tE)\right)\label{eq:lemma log det1}\\
        & =  \trace \left(  D^{(1)}_{\log(A)}[E] \right) \\
        & =\trace \left(  E A^{-1} \right) \label{eq:lemma log det2},
    \end{align}
 where \eqref{eq:lemma log det1} follows from $\log\det(A) = \trace(\log(A))$, and \eqref{eq:lemma log det2} follows from Lemma~\ref{lemma: trace} with $f(x) = 1$ and $g(x) = \log x$.
Now we prove \eqref{eq: d2 log det A}.  By G\^ateaux-derivative representation in \eqref{eq: directional definition}, \eqref{eq: d2 log det A} yields
\begin{align}
     D^{(2)}_{\log\det(A)}[E] & = \frac{d^2}{dt^2}\Big|_{t = 0} \log \det (A+tE) \\ 
     & =  \trace \left(\frac{d^2}{dt^2}\Big|_{t = 0}\log  (A+tE) \right) \label{eq:lemma 3 p1} \\
     & = \trace\left(D^{(2)}_{\log(A)}[E]\right) \\
     & = \trace\left(E D^{(1)}_{A^{-1}}[E]\right)  \label{eq:lemma 3 p2}\\
     & = -\trace\left(  E A^{-1}E A^{-1} \right) \label{eq:lemma 3 p3},
\end{align}
where \eqref{eq:lemma 3 p1} follows from   $\log\det(A) = \trace(\log(A))$, \eqref{eq:lemma 3 p2} follows from Lemma~\ref{lemma: trace 2} by taking $f(x) = 1$ and $g(x) = \log x$, and \eqref{eq:lemma 3 p3} follows from Lemma~\ref{lemma:Df-1} with $f(x) = x$.
 \end{proof}

We are now ready to derive the Fr\'echet derivatives of $\log \det f(A)$, which will be used repeatedly in the perturbative formulas.
\begin{lemma}[Fr\'echet derivatives of $\log\det f(A)$] \label{lemma:D logdet f}
Let $A\in\mathbb{C}^{N\times N}$ be diagonalizable, and let $f$ be $3N-1$ times continuously differentiable on a domain containing the eigenvalues of $A$. Assume that $f(A)$ is invertible.
Define
\begin{align}
    h(A)=\log\det f(A).
\end{align}
Letting $g(x)=\log(f(x))$, we have
\begin{align}
D^{(1)}_{h(A)}[E] &= \trace\left(E\,g'(A)\right), \label{eq:DF_trace_g}\\
D^{(2)}_{h(A)}[E] &= \trace\left(E\,D^{(1)}_{g'(A)}[E]\right). \label{eq:D2F_trace_g}
\end{align}
\end{lemma}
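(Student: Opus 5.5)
The plan is to reduce $h$ to a trace of a primary matrix function and then reuse the trace identities already proved. Since $f(A)$ is invertible, $\det f(A)=\prod_k f(\lambda_k)$, so (with a consistent branch of the logarithm near the spectrum)
\begin{align}
h(A)=\log\det f(A)=\sum_k \log f(\lambda_k)=\trace\, g(A),\qquad g=\log\circ f .
\end{align}
The same identity holds with $A$ replaced by $A+tE$ for small $t$. Indeed, the eigenvalues of $A+tE$ stay in the domain where $f$ is smooth and nonvanishing, and $h$ depends only on the spectrum. The smoothness of $f$ and the nonvanishing of $f$ on the spectrum make $g$ as smooth as $f$, so $g$ and $g'$ meet the differentiability hypotheses of Lemmas~\ref{lemma: trace} and \ref{lemma: trace 2}.

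\textbf{First derivative.} By the G\^ateaux representation \eqref{eq: directional definition} and linearity of the trace,
\begin{align}
D^{(1)}_{h(A)}[E]=\trace\, D^{(1)}_{g(A)}[E].
\end{align}
Lemma~\ref{lemma: trace} with the constant function $1$ in place of $f$ then gives $\trace(E\,g'(A))$, which is \eqref{eq:DF_trace_g}. This is exactly the pattern used in Lemma~\ref{lemma:Frechet log det A}, where $f(x)=x$.

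\textbf{Second derivative.} Again $D^{(2)}_{h(A)}[E]=\trace\, D^{(2)}_{g(A)}[E]$. Apply Lemma~\ref{lemma: trace 2} with the constant $1$ in the role of its $f$ and $g$ in the role of its $g$. The term $D^{(1)}_{1}[E]$ vanishes, because every divided difference of a constant is zero. What remains is $\trace\bigl(E\,D^{(1)}_{g'(A)}[E]\bigr)$, which is \eqref{eq:D2F_trace_g}.

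The only delicate point is justifying $\log\det f(A+tE)=\trace\,g(A+tE)$ uniformly in a neighborhood, including the branch of $\log$. For this I would argue via the spectral representation: both sides equal $\sum_k \log f(\lambda_k(t))$ with eigenvalues depending continuously on $t$. Alternatively, one can work directly with $\frac{d}{dt}\log\det M(t)=\trace(M^{-1}M')$ for $M(t)=f(A+tE)$, using Jacobi's formula. An additive branch constant does not affect derivatives, so the identity is needed only up to a locally constant term. Once that is settled, the rest is direct substitution into the earlier lemmas.
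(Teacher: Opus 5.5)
Your proof is correct. It differs from the paper's only in how you handle the second derivative.

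The first-derivative step is the same as in the paper. You rewrite $\log\det f(A)$ as $\trace\, g(A)$, pass to the G\^ateaux derivative, and apply Lemma~\ref{lemma: trace} with the constant function.

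For the second derivative, the paper does not invoke Lemma~\ref{lemma: trace 2}. It differentiates the first-order formula $\trace\left(E\,g'(A)\right)$ once more in the direction $E$ and reads off $\trace\left(E\,D^{(1)}_{g'(A)}[E]\right)$ from the linearity of $B\mapsto\trace(EB)$. You instead compute $\trace\, D^{(2)}_{g(A)}[E]$ and apply Lemma~\ref{lemma: trace 2} with the constant function, so the $D^{(1)}_{1}[E]$ term drops out. That is exactly the device the paper uses for the second derivative in Lemma~\ref{lemma:Frechet log det A}.

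What each route buys:
\begin{itemize}
\item The paper's route is a one-liner that needs no second-order trace identity. However, it tacitly uses $D^{(1)}_{h(A+tE)}[E]=\trace\left(E\,g'(A+tE)\right)$ at perturbed points $A+tE$, which need not be diagonalizable. Strictly, that requires extending Lemma~\ref{lemma: trace} beyond its stated hypotheses, for instance by continuity.
\item Your route uses the divided-difference machinery only at the diagonalizable point $A$. It needs exactly the $3N-1$ smoothness that Lemma~\ref{lemma: trace 2} demands, which is precisely the hypothesis in the statement.
\end{itemize}

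Your treatment of the logarithm branch is also more careful than the paper's bare use of $\log\det=\trace\log$. The identity only needs to hold up to a locally constant additive term, which does not affect derivatives.
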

\begin{proof}
 Using  $\log\det(A) = \trace(\log(A))$, we have
   \begin{align}
       D^{(1)}_{h(A)}[E] & = \frac{d}{dt}\Big|_{t=0} \log\det f(A+tE) \\
       & = \trace\left(  \frac{d}{dt}\Big|_{t=0} \log f(A+tE)\right) \\
       & = \trace\left(  D^{(1)}_{g(A)}[E]\right) \\
       & = \trace\left(  E g^{\prime}(A)\right),\label{eq:lemma log det f1}
   \end{align} 
 where \eqref{eq:lemma log det f1} follows from Lemma~\ref{lemma: trace} with $f(x) =1$, which proves \eqref{eq:DF_trace_g}. Taking another Fr\'echet derivative in the same direction $E$, \eqref{eq:lemma log det f1} yields
   \begin{align}
      D^{(2)}_{h(A)}[E] &= \trace\left(E\,D^{(1)}_{g'(A)}[E]\right).
      \end{align}
\end{proof}

\section{Quantum relative entropy} \label{appendix:gaussian_qi frechet QRE}
Throughout the appendices, $\simeq$ denotes equality up to the second-order of $E_1$ and $E_2$ in the perturbative expansion of the quantum information quantities.
\subsection{Gaussian covariance-matrix representation}
\label{appendix sub:QRE CM}
We first write \eqref{eq:QRE general} in terms of
$\Omega\sigma_k$, $k = 1,2$. For the first term of \eqref{eq:QRE general},
we have
\begin{align}
\log\left(\frac{Z_1}{Z_2}\right)
&=
\log\left(
\frac{
\det\left(\sigma_1+\frac{i}{2}\Omega\right)}
{\det\left(\sigma_2+\frac{i}{2}\Omega\right)}
\right)
\\
&=
\log\left(
\frac{
\det\Omega\,
\det\left(\sigma_1+\frac{i}{2}\Omega\right)}
{\det\Omega\,
\det\left(\sigma_2+\frac{i}{2}\Omega\right)}
\right)
\\
&=
\log\left(
\frac{
\det\left(\Omega\sigma_1-\frac{i}{2}I\right)}
{\det\left(\Omega\sigma_2-\frac{i}{2}I\right)}
\right),
\label{eq:QRE t1}
\end{align}
where \eqref{eq:QRE t1} follows from $\Omega^2 = -I$ and $\det(AB) = \det(A)\det(B)$.
Next, we rewrite $G_k$ in \eqref{eq:QRE general} as a primary matrix function of 
$\Omega\sigma_k$:
\begin{align}
G_k
&=
2i\Omega\arccoth(2i\,\sigma_k\Omega)
\\
&=
2i\Omega\arccoth
\left(
2i\Omega^{-1}(\Omega\sigma_k)\Omega
\right)
\\
&=
f_{\rm RE}(\Omega\sigma_k)\Omega,
\label{eq:QRE G}
\end{align}
where we used the similarity invariance of primary matrix functions and defined
\begin{align}
f_{\rm RE}(x)
=
2i\arccoth(2ix).
\end{align}
Thus, the second term in \eqref{eq:QRE general} becomes
\begin{align}
&\trace\left((G_1-G_2)\sigma_2\right)
\nonumber\\
&\qquad=
\trace\left[
\left(
f_{\rm RE}(\Omega\sigma_1)
-
f_{\rm RE}(\Omega\sigma_2)
\right)
\Omega\sigma_2
\right].
\label{eq:QRE t2}
\end{align}
The mean-vector term in \eqref{eq:QRE general} becomes
\begin{align}
\delta_d^T G_1\delta_d
=
\delta_d^T
f_{\rm RE}(\Omega\sigma_1)
\Omega\delta_d.
\label{eq:QRE t3}
\end{align}
Combining \eqref{eq:QRE t1}, \eqref{eq:QRE t2}, and
\eqref{eq:QRE t3}, we obtain the QRE in terms of $\Omega \sigma_1$ and $\Omega \sigma_2$:
\begin{align}
D(\hat{\rho}_2\|\hat{\rho}_1)
&=
\frac{1}{2}
\left[
\log\left(
\frac{
\det\left(\Omega\sigma_1-\frac{i}{2}I\right)}
{\det\left(\Omega\sigma_2-\frac{i}{2}I\right)}
\right)
\right.
\nonumber\\
&\quad
+\trace\left[
\left(
f_{\rm RE}(\Omega\sigma_1)
-
f_{\rm RE}(\Omega\sigma_2)
\right)
\Omega\sigma_2
\right]
\nonumber\\
&\qquad\left.
+
\delta_d^T
f_{\rm RE}(\Omega\sigma_1)
\Omega\delta_d
\right].
\label{eq:QRE omega sigma}
\end{align}
Finally, using 
$\Omega\sigma_1=A+E_1$ and
$\Omega\sigma_2=A+E_2$, \eqref{eq:QRE omega sigma} becomes
\begin{align}
D(\hat{\rho}_2\|\hat{\rho}_1)
&=
\frac{1}{2}
\left[
\log\left(
\frac{
\det\left(A+E_1-\frac{i}{2}I\right)}
{\det\left(A+E_2-\frac{i}{2}I\right)}
\right)
\right.
\nonumber\\
&\
+\trace\left[
\left(
f_{\rm RE}(A+E_1)
-
f_{\rm RE}(A+E_2)
\right)
(A+E_2)
\right]
\nonumber\\
&\qquad\left.
+
\delta_d^T
f_{\rm RE}(A+E_1)
\Omega\delta_d
\right].
\label{eq:QRE Perturb}
\end{align}
Since $f_{\rm RE}'(A)$ is singular when $\sigma_0$ has a symplectic
eigenvalue equal to $1/2$, we first restrict the analysis to $\sigma_0$ whose symplectic eigenvalues are all greater than
$1/2$.

\subsection{Fr\'echet expansion}
We now derive the Fr\'echet expansion of \eqref{eq:QRE Perturb}
around $A$ in the directions $E_1$ and $E_2$. Up to the second order,
we have
\begin{align}
D_{\rm pert}^{(2)}(\hat{\rho}_2\|\hat{\rho}_1)
=
D_{\rm RE,CM}^{(1)}
+
D_{\rm RE,CM}^{(2)}
+
D_{\rm RE,d},
\label{eq:QRE perturb 2}
\end{align}
where $D_{\rm RE,CM}^{(1)}$ and $D_{\rm RE,CM}^{(2)}$ denote the
first- and second-order covariance-matrix contributions, respectively,
and $D_{\rm RE,d}$ denotes the mean-vector contribution. The zeroth order term is zero since the QRE vanishes when $\hat{\rho}_1$ and $\hat{\rho}_2$ are identical.
The first two terms of \eqref{eq:QRE Perturb} are the covariance-matrix
contribution. The Fr\'echet expansion of the first term of \eqref{eq:QRE Perturb} up to the
second-order is
\begin{align}
&
\log\left(
\frac{
\det\left(A+E_1-\frac{i}{2}I\right)}
{\det\left(A+E_2-\frac{i}{2}I\right)}
\right)
\nonumber\\
&\simeq
D^{(1)}_{\log\det(A-\frac{i}{2}I)}[E_1-E_2]
\nonumber\\
&\quad
+\frac{1}{2}
\left(
D^{(2)}_{\log\det(A-\frac{i}{2}I)}[E_1]
-
D^{(2)}_{\log\det(A-\frac{i}{2}I)}[E_2]
\right) \label{eq:QRE 0 expansion}
 \\
&=
\trace\left[
\left(A-\frac{i}{2}I\right)^{-1}(E_1-E_2)
\right]
\nonumber\\
&\quad
-\frac{1}{2}
\trace\left[
E_1\left(A-\frac{i}{2}I\right)^{-1}
E_1\left(A-\frac{i}{2}I\right)^{-1}
\right]
\nonumber\\
&\quad
+\frac{1}{2}
\trace\left[
E_2\left(A-\frac{i}{2}I\right)^{-1}
E_2\left(A-\frac{i}{2}I\right)^{-1}
\right],
\label{eq:QRE 1 expansion}
\end{align}
where \eqref{eq:QRE 0 expansion} follows from the second-order Fr\'echet expansion in \eqref{eq:Frechet expansion} with the zeroth order contribution vanishing because $\log\left(
\frac{
\det\left(A-\frac{i}{2}I\right)}
{\det\left(A-\frac{i}{2}I\right)}
\right)=0$, and \eqref{eq:QRE 1 expansion} follows from
Lemma~\ref{lemma:Frechet log det A}.

We next expand the second term of
\eqref{eq:QRE Perturb} up to the
second-order:
\begin{align}
&
\trace\left[
\left(
f_{\rm RE}(A+E_1)-f_{\rm RE}(A+E_2)
\right)(A+E_2)
\right]
\nonumber\\
&\simeq
\trace\left(
D_{f_{\rm RE}(A)}^{(1)}[E_1-E_2]A
\right)
+
\trace\left(
D_{f_{\rm RE}(A)}^{(1)}[E_1-E_2]E_2
\right)
\nonumber\\
&\quad
+\frac{1}{2}
\trace\left[
\left(
D_{f_{\rm RE}(A)}^{(2)}[E_1]
-
D_{f_{\rm RE}(A)}^{(2)}[E_2]
\right)A
\right]
\label{eq: QRE 2 expansion 2}\\
&=
\trace\left[
(E_1-E_2)A f'_{\rm RE}(A)
\right]
+
\trace\left(
D_{f_{\rm RE}(A)}^{(1)}[E_1-E_2]E_2
\right)
\nonumber\\
&\quad
+\frac{1}{2}
\trace\left[
E_1D_{(xf'_{\rm RE})(A)}^{(1)}[E_1]
-
E_2D_{(xf'_{\rm RE})(A)}^{(1)}[E_2]
\right]
\nonumber\\
&\quad
-\frac{1}{2}
\trace\left[
E_1D_{f_{\rm RE}(A)}^{(1)}[E_1]
-
E_2D_{f_{\rm RE}(A)}^{(1)}[E_2]
\right].
\label{eq:QRE 2 expansion}
\end{align}
Here, \eqref{eq: QRE 2 expansion 2} follows from the second-order  Fr\'echet expansion in \eqref{eq:Frechet expansion}, with the zeroth order term vanishing. In obtaining \eqref{eq:QRE 2 expansion}, the first and third terms of
\eqref{eq: QRE 2 expansion 2} are simplified using
Lemmas~\ref{lemma: trace} and \ref{lemma: trace 2}, respectively.

Collecting the first term in \eqref{eq:QRE 1 expansion} and \eqref{eq:QRE 2 expansion}, we obtain the first-order covariance-matrix contribution
\begin{align}
&D_{\rm RE,CM}^{(1)} \nonumber \\
&=
\frac{1}{2}
\trace\left[
\left(
\left(A-\frac{i}{2}I\right)^{-1}
+
Af'_{\rm RE}(A)
\right)
(E_1-E_2)
\right].
\label{eq:QRE D1}
\end{align}
Collecting the remaining terms in \eqref{eq:QRE 1 expansion} and \eqref{eq:QRE 2 expansion} gives the second-order covariance-matrix contribution
\begin{align}
&D_{\rm RE,CM}^{(2)} \nonumber \\
&=
-\frac{1}{4}
\trace\left[
E_1\left(A-\frac{i}{2}I\right)^{-1}
E_1\left(A-\frac{i}{2}I\right)^{-1}
\right]
\nonumber\\
&\quad
+\frac{1}{4}
\trace\left[
E_2\left(A-\frac{i}{2}I\right)^{-1}
E_2\left(A-\frac{i}{2}I\right)^{-1}
\right]
\nonumber\\
&\quad
+\frac{1}{2}
\trace\left(
D_{f_{\rm RE}(A)}^{(1)}[E_1]E_2
\right)
\nonumber\\
&\quad
+\frac{1}{4}
\trace\left[
E_1D_{(xf'_{\rm RE})(A)}^{(1)}[E_1]
-
E_2D_{(xf'_{\rm RE})(A)}^{(1)}[E_2]
\right]
\nonumber\\
&\quad
-\frac{1}{4}
\trace\left[
E_1D_{f_{\rm RE}(A)}^{(1)}[E_1]
+
E_2D_{f_{\rm RE}(A)}^{(1)}[E_2]
\right].
\label{eq:QRE D2}
\end{align}
Finally, the last term of \eqref{eq:QRE Perturb} gives the mean-vector
contribution. Since $\delta_d^T \delta_d$ is already second order in $ \|\delta_d \|$, its leading contribution is obtained by evaluating the matrix function at the reference matrix $A$:
\begin{align}
D_{\rm RE,d}
=
\frac{1}{2}
\delta_d^T f_{\rm RE}(A)\Omega\delta_d.
\label{eq:QRE 3 expansion}
\end{align}

\subsection{First-order cancellation}
In this subsection, we show that the first-order term in the Fr\'echet expansion of the covariance-matrix contribution vanishes
\begin{align}
    D_{\rm RE,CM}^{(1)}=0, \label{eq:QRE first p = 0}
\end{align}
where $D_{\rm RE,CM}^{(1)}$ is given in \eqref{eq:QRE D1}.
Recall that
\begin{align}
f_{\rm RE}(x)
=
2i\,\operatorname{arccoth}(2ix),
\end{align}
we have
\begin{align}
f_{\rm RE}'(A)
=
-\left(A-\frac{i}{2}I\right)^{-1}
\left(A+\frac{i}{2}I\right)^{-1}.
\end{align}
Therefore,
\begin{align}
&\left(A-\frac{i}{2}I\right)^{-1}
+
A f_{\rm RE}'(A)
\nonumber\\
&=
\left(A-\frac{i}{2}I\right)^{-1}
-
A
\left(A-\frac{i}{2}I\right)^{-1}
\left(A+\frac{i}{2}I\right)^{-1}
 \\
&=
\left(A-\frac{i}{2}I\right)^{-1}
\left(A+\frac{i}{2}I\right)^{-1}
\left(A+\frac{i}{2}I-A\right)
 \\
&=
\frac{i}{2}
\left(A^2+\frac{1}{4}I\right)^{-1}.
\label{eq: QRE t1 e1}
\end{align}
Substituting \eqref{eq: QRE t1 e1} into \eqref{eq:QRE D1},
we obtain
\begin{align}
D_{\rm RE,CM}^{(1)}
&=
\frac{i}{4}
\trace\left[
\left(A^2+\frac{1}{4}I\right)^{-1}
(E_1-E_2)
\right]
\nonumber\\
&=
\frac{i}{4}
\trace\left[
\left(A^2+\frac{1}{4}I\right)^{-1}
\Omega(\Delta\sigma_1-\Delta\sigma_2)
\right],
\label{eq:QRE t1 = 0}
\end{align}
where we used $E_k=\Omega\Delta\sigma_k$.
The matrix $\Delta\sigma_1-\Delta\sigma_2$ is symmetric, and we further
show that
$\left(A^2+\frac{1}{4}I\right)^{-1}\Omega$
is antisymmetric.
Using $A=\Omega\sigma_0$ and $\sigma_0^T=\sigma_0$, we have
\begin{align}
\left(A^2+\frac{1}{4}I\right)\Omega
&=
\Omega\sigma_0\Omega\sigma_0\Omega
+\frac{1}{4}\Omega
\nonumber\\
&=
\Omega\left((A^T)^2+\frac{1}{4}I\right).
\label{eq:QRE step 1}
\end{align}
Taking the inverse of both sides of \eqref{eq:QRE step 1}, and multiplying by $\Omega$ from both left and right, we have
\begin{align}
\left(A^2+\frac{1}{4}I\right)^{-1}\Omega
&=
\Omega
\left((A^T)^2+\frac{1}{4}I\right)^{-1}
\nonumber\\
&=
-\Omega^T
\left[
\left(A^2+\frac{1}{4}I\right)^{-1}
\right]^T
\nonumber\\
&=
-\left[
\left(A^2+\frac{1}{4}I\right)^{-1}\Omega
\right]^T.
\end{align}
Therefore,
$\left(A^2+\frac{1}{4}I\right)^{-1}\Omega$
is antisymmetric. Since the trace of the product of an antisymmetric
matrix and a symmetric matrix is zero, we obtain
\begin{align}
\trace\left[
\left(A^2+\frac{1}{4}I\right)^{-1}
\Omega(\Delta\sigma_1-\Delta\sigma_2)
\right]
=0.
\label{eq: QRE t1 0}
\end{align}
Substituting \eqref{eq: QRE t1 0} into \eqref{eq:QRE t1 = 0} yields \eqref{eq:QRE first p = 0}.

\subsection{Second-order simplification}
In this subsection, we simplify the second-order term $D_{\rm RE,CM}^{(2)}$.
The eigen-decomposition of $A = \Omega \sigma_0$ satisfies
\begin{align}
    A=U\Gamma U^{-1}, \label{eq:QRE t2=0 80}
\end{align}
where
\begin{align}
    \Gamma=\operatorname{diag}\{a_k,-a_k\}_{k=1}^{n}, \label{eq:QRE t2=0 81}
\end{align}
and $n$ is the number of modes.
Define the perturbation in the eigenbasis of $A$ as
\begin{align}
    \tilde{E}_k=U^{-1}E_kU. \label{eq:QRE t2=0 82}
\end{align}
Using the linearity of the Fréchet derivative with respect to the matrix function, and
\begin{align}
x f'_{\rm RE}(x)
=
-\frac{4x}{1+4x^2}
=
-\frac{1}{2x+i}
-\frac{1}{2x-i},
\end{align}
we obtain
\begin{align}
D^{(1)}_{(xf'_{\rm RE})(A)}[E_k]
&=
D^{(1)}_{-\frac{1}{2A+iI}}[E_k]
+
D^{(1)}_{-\frac{1}{2A-iI}}[E_k]. \label{eq:QRE t2=0 88}
\end{align}
Applying Lemma~\ref{lemma:Df-1} and using $D^{(1)}_{- (2A\pm iI)}[E_k] = 2 E_k $, the terms in \eqref{eq:QRE t2=0 88} yield
\begin{align}
D^{(1)}_{-\frac{1}{2A\pm iI}}[E_k]=
\left(2A\pm iI\right)^{-1}
(2E_k)
\left(2A\pm iI\right)^{-1}. \label{eq:QRE t2=0 89}
\end{align}
For $k'=1,2$, we have
\begin{align}
&\trace\left[
E_{k'}
D^{(1)}_{-\frac{1}{2A+iI}}[E_{k'}]
\right]
\nonumber\\
&=
\frac{1}{2}
\trace\left[
E_{k'}
\left(A+\frac{i}{2}I\right)^{-1}
E_{k'}
\left(A+\frac{i}{2}I\right)^{-1}
\right] \label{eq:QRE t2=0 90}
\\
&=
\frac{1}{2}
\sum_{k,l=1}^{n}
(\widetilde E_{k'})_{kl}
(\widetilde E_{k'})_{lk}
\left(a_l+\frac{i}{2}\right)^{-1}
\left(a_k+\frac{i}{2}\right)^{-1}
\nonumber\\
&\quad+
\frac{1}{2}
\sum_{k,l=1}^{n}
(\widetilde E_{k'})_{kl}
(\widetilde E_{k'})_{lk}
\left(-a_l+\frac{i}{2}\right)^{-1}
\left(-a_k+\frac{i}{2}\right)^{-1} \label{eq:QRE t2=0 91}
\\
&=
\frac{1}{2}
\sum_{k,l=1}^{n}
(\widetilde E_{k'})_{kl}
(\widetilde E_{k'})_{lk}
\left(-a_l-\frac{i}{2}\right)^{-1}
\left(-a_k-\frac{i}{2}\right)^{-1}
\nonumber\\
&\quad+
\frac{1}{2}
\sum_{k,l=1}^{n}
(\widetilde E_{k'})_{kl}
(\widetilde E_{k'})_{lk}
\left(a_l-\frac{i}{2}\right)^{-1}
\left( a_k-\frac{i}{2}\right)^{-1}
\\
&=
\frac{1}{2}
\trace\left[
E_{k'}
\left(A-\frac{i}{2}I\right)^{-1}
E_{k'}
\left(A-\frac{i}{2}I\right)^{-1}
\right]\label{eq:QRE t2=0 100} \\
&=\trace\left[
E_{k'}
D^{(1)}_{-\frac{1}{2A-iI}}[E_{k'}]
\right].\label{eq:QRE t2=0 101}
\end{align}
Here, \eqref{eq:QRE t2=0 90} and \eqref{eq:QRE t2=0 101}
follow from \eqref{eq:QRE t2=0 89}. Equation
\eqref{eq:QRE t2=0 91} follows by expressing the trace in the
eigenbasis of $A$ using \eqref{eq:QRE t2=0 80}--\eqref{eq:QRE t2=0 82}.
Combining \eqref{eq:QRE t2=0 88}, \eqref{eq:QRE t2=0 100}, and
\eqref{eq:QRE t2=0 101}, we obtain
\begin{align}
&\trace\left[
E_{k'}
D^{(1)}_{(xf'_{\rm RE})(A)}[E_{k'}]
\right]
\nonumber\\
&=
\trace\left[
E_{k'}
\left(A-\frac{i}{2}I\right)^{-1}
E_{k'}
\left(A-\frac{i}{2}I\right)^{-1}
\right].
\end{align}
Thus, the first two terms of
\eqref{eq:QRE D2} cancel the fourth term of
\eqref{eq:QRE D2}.
Furthermore, we consider the third and fifth terms of \eqref{eq:QRE D2}:
\begin{align}
&\frac{1}{2}
\trace\left(
D^{(1)}_{f_{\rm RE}(A)}[E_1]E_2
\right)
\nonumber\\
&\quad
-\frac{1}{4}
\trace\left[
E_1D^{(1)}_{f_{\rm RE}(A)}[E_1]
+
E_2D^{(1)}_{f_{\rm RE}(A)}[E_2]
\right]
\nonumber\\
&=
\frac{1}{4}
\sum_{k,l}
\left(f_{\rm RE}^{[1]}(\Gamma)\right)_{kl}
\Big[
2(\widetilde E_1)_{kl}(\widetilde E_2)_{lk}
\nonumber\\
&\qquad
-(\widetilde E_1)_{kl}(\widetilde E_1)_{lk}
-(\widetilde E_2)_{kl}(\widetilde E_2)_{lk}
\Big] \label{eq:QRE t2=0 111}
\\
&=
\frac{1}{4}
\sum_{k,l}
\left(f_{\rm RE}^{[1]}(\Gamma)\right)_{kl}
\Big[
(\widetilde E_1)_{kl}(\widetilde E_2)_{lk}
+
(\widetilde E_2)_{kl}(\widetilde E_1)_{lk}
\nonumber\\
&\qquad
-(\widetilde E_1)_{kl}(\widetilde E_1)_{lk}
-(\widetilde E_2)_{kl}(\widetilde E_2)_{lk}
\Big] \label{eq:QRE t2=0 112}
\\
&=
-\frac{1}{4}
\sum_{k,l}
\left(f_{\rm RE}^{[1]}(\Gamma)\right)_{kl}
(\widetilde E_1-\widetilde E_2)_{kl}
(\widetilde E_1-\widetilde E_2)_{lk}
\\
&=
-\frac{1}{4}
\trace\left[
(E_1-E_2)
D^{(1)}_{f_{\rm RE}(A)}[E_1-E_2]
\right],\label{eq:QRE t2=0 113}
\end{align}
where \eqref{eq:QRE t2=0 111} is obtained  by the definition of the divided difference representation in \eqref{eq:First Frechet derivative Hadmard product} and expressing the trace in the eigenbasis of $A$ using \eqref{eq:QRE t2=0 80}--\eqref{eq:QRE t2=0 82}. Since $f_{\rm RE}^{[1]}(\Gamma)$ is symmetric, \eqref{eq:QRE t2=0 112} follows from splitting the first term of \eqref{eq:QRE t2=0 111} into two identical sums and relabeling the second one.

Therefore, after the cancellation of the first, second, and fourth
terms of \eqref{eq:QRE D2}, substituting \eqref{eq:QRE t2=0 113} into \eqref{eq:QRE D2} yields
\begin{align}
D_{\rm RE,CM}^{(2)}
=
-\frac{1}{4}
\trace\left[
(E_1-E_2)
D^{(1)}_{f_{\rm RE}(A)}[E_1-E_2]
\right].
\label{eq:QRE D2 final}
\end{align}

\subsection{Perturbative formula}

Combining the first-order cancellation, the second-order
covariance-matrix contribution, and the mean-vector contribution,
the Fr\'echet expansion of the QRE up to the second-order is
\begin{align}
D_{\rm pert}^{(2)}
\left(\hat{\rho}_2\|\hat{\rho}_1\right)
&=
-\frac{1}{4}
\trace\left[
(E_1-E_2)
D^{(1)}_{f_{\rm RE}(A)}[E_1-E_2]
\right]
\nonumber\\
&\quad+
\frac{1}{2}
\delta_d^T f_{\rm RE}(A)\Omega\delta_d.
\end{align}

\section{Quantum Chernoff quantity}\label{appendix:gaussian_qi frechet QCE}

\subsection{Gaussian covariance-matrix representation}
\label{appendix:sub gaussian AE form}
We first rewrite the quantum Chernoff quantity in terms of
$\Omega\sigma_1$ and $\Omega\sigma_2$. From \eqref{eq:QCE}, the
quantum Chernoff quantity is
\begin{align}
\xi_s(\rho_1,\rho_2)
=
-\log\left(Q_s^{(1)}Q_s^{(2)}\right),
\end{align}
where
\begin{align}
Q_s^{(1)}
&=
\frac{
2^n
\prod_{k=1}^n
G^{(s)}(\nu_k^{(1)})
G^{(1-s)}(\nu_k^{(2)})
}{
\sqrt{\det[V_1(s)+V_2(1-s)]}
},
\label{eq:app QCE 1}
\\
Q_s^{(2)}
&=
\exp\left[
-\delta_d^T
\left(V_1(s)+V_2(1-s)\right)^{-1}
\delta_d
\right].
\label{eq:app QCE 2}
\end{align}
We first consider $-\log Q_s^{(1)}$ and recall that
\begin{align}
f_{\rm CE}^{(s)}(x)
=
G^{(s)}\left(\sqrt{-x^2}\right),
\end{align}
where $G^{(s)}$ is defined in \eqref{eq: QCE Gs}.
The first product in the numerator of \eqref{eq:app QCE 1} can then
be written as
\begin{align}
\prod_{k=1}^n G^{(s)}(\nu_k^{(1)})
&=
\sqrt{
\det\left[
G^{(s)}\left(\sqrt{-(\Omega\sigma_1)^2}\right)
\right]
}
\label{eq:QCE AE Q1 1}
\\
&=
\sqrt{
\det\left[
f_{\rm CE}^{(s)}(\Omega\sigma_1)
\right]
},
\end{align}
where \eqref{eq:QCE AE Q1 1} follows from the fact that
$\sqrt{-(\Omega\sigma_1)^2}$ has eigenvalues
$\{\nu_k^{(1)}\}_{k=1}^n$, each with multiplicity two.
Similarly,
\begin{align}
\prod_{k=1}^n G^{(1-s)}(\nu_k^{(2)})
=
\sqrt{
\det\left[
f_{\rm CE}^{(1-s)}(\Omega\sigma_2)
\right]
}.
\end{align}
Therefore, the numerator of
\eqref{eq:app QCE 1} to $-\log Q_s^{(1)}$ yields
\begin{align}
&-n\log 2
-\frac{1}{2}
\log\det\left[
f_{\rm CE}^{(s)}(\Omega\sigma_1) 
f_{\rm CE}^{(1-s)}(\Omega\sigma_2)
\right].
\label{eq:QCE 1 numerator}
\end{align}
We next consider the denominator of \eqref{eq:app QCE 1}. For the symplectic eigen-decomposition of $\Omega \sigma_1$ in \eqref{eq:Williamson decomposition 1} and \eqref{eq:Williamson decomposition 2},
since
$\Lambda_1
=
-\Omega\Lambda_1\Omega$, 
we have
\begin{align}
\Lambda_1
=
\sqrt{\Lambda_1^2}
=
\sqrt{-\Lambda_1\Omega\Lambda_1\Omega}
=
\sqrt{-(\Lambda_1\Omega)^2}.
\label{eq:QCE V1 -3}
\end{align}
Consequently,
\begin{align}
S_1\Lambda_1S_1^{-1}
&=
S_1\sqrt{-(\Lambda_1\Omega)^2}S_1^{-1}
\label{eq:QCE V1 -2}
\\
&=
\sqrt{
-\left(S_1\Lambda_1\Omega S_1^{-1}\right)^2
}
\label{eq:QCE V1 -1}
\\
&=
\sqrt{
-\left(S_1\Lambda_1S_1^T\Omega\right)^2
}
\label{eq:QCE V1 0}
\\
&=
\sqrt{-(\sigma_1\Omega)^2}.
\label{eq:QCE V1 00}
\end{align}
Here, \eqref{eq:QCE V1 -2} follows from
\eqref{eq:QCE V1 -3}, \eqref{eq:QCE V1 -1} follows from the
similarity-invariance of primary matrix functions, and
\eqref{eq:QCE V1 0} follows from
$\Omega=S_1^T \Omega S_1$.
We can now rewrite $V_1(s)$ defined in \eqref{eq: QCE V_1} as
\begin{align}
V_1(s)
&=
S_1F^{(s)}(\Lambda_1)S_1^T
\\
&=
S_1F^{(s)}(\Omega\Lambda_1\Omega^{-1})S_1^T
\label{eq: V1 1}
\\
&=
S_1\Omega F^{(s)}(\Lambda_1)\Omega^{-1}S_1^T
\label{eq: V1 2}
\\
&=
-S_1\Omega F^{(s)}(\Lambda_1)\Omega S_1^T
\label{eq: V1 3}
\\
&=
-S_1\Omega F^{(s)}(\Lambda_1)S_1^{-1}\Omega
\label{eq: V1 4}
\\
&=
-S_1\Omega S_1^{-1}
S_1F^{(s)}(\Lambda_1)S_1^{-1}\Omega
\label{eq: V1 5}
\\
&=
-S_1\Omega S_1^{-1}
F^{(s)}(S_1\Lambda_1S_1^{-1})\Omega
\label{eq: V1 6}
\\
&=
-S_1\Omega S_1^{-1}
F^{(s)}\left(\sqrt{-(\sigma_1\Omega)^2}\right)\Omega,
\label{eq: V1 7}
\end{align}
where \eqref{eq: V1 1} follows from $\Omega\Lambda_1 \Omega^{-1} = \Lambda_1$, \eqref{eq: V1 2} follows from the similarity-invariance, \eqref{eq: V1 3} follows from $\Omega^{-1} = -\Omega$, \eqref{eq: V1 4} follows from $\Omega S_1^T = S_1^{-1}\Omega$, \eqref{eq: V1 5} follows from inserting $S^{-1}_1 S_1 = I$, \eqref{eq: V1 6} follows from similarity invariance, and \eqref{eq: V1 7} follows from \eqref{eq:QCE V1 00}.

It remains to rewrite $S_1\Omega S_1^{-1}$  in \eqref{eq: V1 7} in terms of
$\sigma_1\Omega$. We have
\begin{align}
S_1\Omega S_1^{-1}
&=
S_1\Lambda_1\Omega\Lambda_1^{-1}S_1^{-1}
\label{eq:QCE V1 6}
\\
&=
S_1\Lambda_1S_1^T
\Omega
S_1\Lambda_1^{-1}S_1^{-1}
\label{eq:QCE V1 7}
\\
&=
\sigma_1\Omega
\left(S_1\Lambda_1S_1^{-1}\right)^{-1}
\label{eq:V1 6}
\\
&=
\sigma_1\Omega
\left(\sqrt{-(\sigma_1\Omega)^2}\right)^{-1},
\label{eq:V1 7}
\end{align}
where \eqref{eq:QCE V1 6} follows from $ \Omega = \Lambda_1\Omega \Lambda_1^{-1}$, \eqref{eq:QCE V1 7} follows from $S_1^T\Omega S_1 = \Omega $, and \eqref{eq:V1 7} follows from \eqref{eq:QCE V1 00}.
Substituting \eqref{eq:V1 7} into \eqref{eq: V1 7} gives
\begin{align}
V_1(s)
=
-(\sigma_1\Omega)
\left(\sqrt{-(\sigma_1\Omega)^2}\right)^{-1}
F^{(s)}\left(\sqrt{-(\sigma_1\Omega)^2}\right)
\Omega.
\label{eq:QCE V1}
\end{align}
Define the scalar function 
\begin{align}
g_{\rm CE}^{(s)}(x)
=
-x
\left(\sqrt{-x^2}\right)^{-1}
F^{(s)}\left(\sqrt{-x^2}\right).
\end{align}
Then, \eqref{eq:QCE V1} becomes
\begin{align}
V_1(s)
&=
g_{\rm CE}^{(s)}(\sigma_1\Omega)\Omega
\\
&=
g_{\rm CE}^{(s)}
\left(
\Omega(\Omega\sigma_1)\Omega^{-1}
\right)\Omega
\\
&=
\Omega g_{\rm CE}^{(s)}(\Omega\sigma_1).
\label{eq:QCE V1 2}
\end{align}
Similarly,
\begin{align}
V_2(1-s)
=
\Omega g_{\rm CE}^{(1-s)}(\Omega\sigma_2).
\label{eq:QCE V2}
\end{align}
Substituting \eqref{eq:QCE V1 2} and \eqref{eq:QCE V2} into the denominator of \eqref{eq:app QCE 1} yields
\begin{align}
 \sqrt{\det\left[
g_{\rm CE}^{(s)}(\Omega\sigma_1)
+
g_{\rm CE}^{(1-s)}(\Omega\sigma_2)
\right]}.
\label{eq: QCE denominator}
\end{align}
Combining \eqref{eq:QCE 1 numerator} and
\eqref{eq: QCE denominator}, we obtain the covariance-matrix contribution in terms of $\Omega \sigma_1$ and $\Omega \sigma_2$:
 \begin{align}
     -\log Q_s^{(1)} 
    & = - n\log 2- \frac{1}{2}\left( \log \det\left( f^{(s)}_{\rm CE}(\Omega\sigma_1) \right)  \right.\nonumber \\
    &\left.\quad -  \log \det\left( f^{(1-s)}_{\rm CE}(\Omega\sigma_2)  \right) \right.\nonumber \\
    & \left.\quad + \log \det[  g_{\rm CE}^{(s)}( \Omega\sigma_1)  + g_{\rm CE}^{(1-s)}( \Omega\sigma_2)  ] \right).\label{eq:QCE CM omega}
\end{align}
For the mean-vector contribution, substituting 
\eqref{eq:QCE V1 2}  and \eqref{eq:QCE V2} into \eqref{eq:app QCE 2}, we obtain
\begin{align}
-\log Q_s^{(2)}
&=
\delta_d^T
\left[
V_1(s)+V_2(1-s)
\right]^{-1}
\delta_d
\nonumber\\
&=
\delta_d^T
\left[
\Omega
\left(
g_{\rm CE}^{(s)}(\Omega\sigma_1)
+
g_{\rm CE}^{(1-s)}(\Omega\sigma_2)
\right)
\right]^{-1}
\delta_d.
\label{eq:QCE d omega}
\end{align}
Finally, writing
$\Omega\sigma_1=A+E_1$, 
$\Omega\sigma_2=A+E_2$, 
the covariance-matrix contribution becomes
\begin{align}
-\log Q_s^{(1)}
&= \frac{1}{2}
\log\det\left[
g_{\rm CE}^{(s)}(A+E_1)
+
g_{\rm CE}^{(1-s)}(A+E_2)
\right]\nonumber\\
&\ -\frac{1}{2}
\log\det\left[
f_{\rm CE}^{(1-s)}(A+E_2)\right]
\nonumber\\
&\
-\frac{1}{2}
\log\det\left[
f_{\rm CE}^{(s)}(A+E_1)
\right]-n\log 2,
\label{eq:log Q bar}
\end{align}
while the mean-vector contribution becomes
\begin{align}
-\log Q_s^{(2)}
&=
\delta_d^T
\left[
\Omega
\left(
g_{\rm CE}^{(s)}(A+E_1) \right.\right.\nonumber\\
&\qquad\qquad\left.\left.+
g_{\rm CE}^{(1-s)}(A+E_2)
\right)
\right]^{-1}
\delta_d.
\label{eq:log Q bar 2}
\end{align}
Since $(g_{\rm CE}^{(s)})'(A)$ is singular when $\sigma_0$ has a symplectic
eigenvalue equal to $1/2$, we first restrict the analysis to $\sigma_0$ whose symplectic eigenvalues are all greater than
$1/2$.

\subsection{Fr\'echet expansion}

We now derive the Fr\'echet expansion of the quantum Chernoff quantity
up to the second order. We write
\begin{align}
\xi_{\rm pert,s}^{(2)}(\hat{\rho}_1,\hat{\rho}_2)
=
\sum_{k=0}^{2}\xi_{\rm CE,CM,s}^{(k)}
+
\xi_{\rm CE,d,s},
\label{eq:QCE perturb expansion}
\end{align}
where $\xi_{\rm CE,CM,s}^{(k)}$ denotes the $k$th-order
covariance-matrix contribution,
and $\xi_{\rm CE,d,s}$ denotes the mean-vector contribution.

We first consider the mean-vector contribution $-\log Q_s^{(2)} $ in \eqref{eq:log Q bar 2}. Since the
mean-vector term is already second order in $\|\delta_d\|$, we retain
only the leading-order covariance matrix term,
\begin{align}
\xi_{\rm CE,d,s}
=  
\delta_d^T
\left[
\Omega
\left(
g_{\rm CE}^{(s)}(A)
+
g_{\rm CE}^{(1-s)}(A)
\right)
\right]^{-1}
\delta_d.
\label{eq:QCE perturb mean-vector}
\end{align}
We next consider the covariance-matrix contribution in
\eqref{eq:log Q bar}. The second and third terms of \eqref{eq:log Q bar} admit the following second-order 
Fr\'echet expansions
\begin{align}
&-\frac{1}{2}
\left[
\log\det\!\left(f_{\rm CE}^{(s)}(A+E_1)\right)
\!+\!
\log\det\!\left(f_{\rm CE}^{(1-s)}(A+E_2)\right)
\right]
\nonumber\\
&\simeq
-\frac{1}{2}
\left[
\log\det\left(f_{\rm CE}^{(s)}(A)\right)
+
\log\det\left(f_{\rm CE}^{(1-s)}(A)\right)
\right]
\nonumber\\
&\
-\frac{1}{2}
\left[
D^{(1)}_{\log\det(f_{\rm CE}^{(s)}(A))}[E_1]
+
D^{(1)}_{\log\det(f_{\rm CE}^{(1-s)}(A))}[E_2]
\right]
\nonumber\\
&\
-\frac{1}{4}  \!
\left[
D^{(2)}_{\log\det(f_{\rm CE}^{(s)}(A))}[E_1]
\!+\!
D^{(2)}_{\log\det(f_{\rm CE}^{(1-s)}(A))}[E_2]
\right].
\label{eq:QCE t1 0}
\end{align}
For the first term of \eqref{eq:log Q bar}, we first expand
$g_{\rm CE}^{(s)}(A+E_1)+g_{\rm CE}^{(1-s)}(A+E_2)$ and then
apply Lemma~\ref{lemma:Frechet log det A}. Up to the second order,
\begin{align}
&g_{\rm CE}^{(s)}(A+E_1)
+
g_{\rm CE}^{(1-s)}(A+E_2)
\nonumber\\
&\simeq
g_{\rm CE}^{(s)}(A)
+
D^{(1)}_{g_{\rm CE}^{(s)}(A)}[E_1]
+
\frac{1}{2}
D^{(2)}_{g_{\rm CE}^{(s)}(A)}[E_1]
\nonumber\\
&\quad
+
g_{\rm CE}^{(1-s)}(A)
+
D^{(1)}_{g_{\rm CE}^{(1-s)}(A)}[E_2]
+
\frac{1}{2}
D^{(2)}_{g_{\rm CE}^{(1-s)}(A)}[E_2]
\nonumber\\
&=
R_{\rm CE}^{(s)}(A)
+
\Delta_{\rm CE}^{(1)}
+
\Delta_{\rm CE}^{(2)},
\label{eq:QCE t1 1}
\end{align}
where
\begin{align}
R_{\rm CE}^{(s)}(x)
&=
g_{\rm CE}^{(s)}(x)
+
g_{\rm CE}^{(1-s)}(x),
\label{eq:QCE RCB}\\
\Delta_{\rm CE}^{(1)}
&=
D^{(1)}_{g_{\rm CE}^{(s)}(A)}[E_1]
+
D^{(1)}_{g_{\rm CE}^{(1-s)}(A)}[E_2],
\label{eq:QCE delta 1}\\
\Delta_{\rm CE}^{(2)}
&=
\frac{1}{2}
D^{(2)}_{g_{\rm CE}^{(s)}(A)}[E_1]
+
\frac{1}{2}
D^{(2)}_{g_{\rm CE}^{(1-s)}(A)}[E_2].
\label{eq:QCE delta 2}
\end{align}
Substituting \eqref{eq:QCE t1 1} into the first term of
\eqref{eq:log Q bar} and applying
Lemma~\ref{lemma:Frechet log det A}, we obtain
\begin{align}
&\frac{1}{2}\log\det\left[
R_{\rm CE}^{(s)}(A)
+
\Delta_{\rm CE}^{(1)}
+
\Delta_{\rm CE}^{(2)}
\right]
\nonumber\\
&\simeq\frac{1}{2}\left(
\log\det\left[R_{\rm CE}^{(s)}(A)\right]
+
\trace\left[
(1/R_{\rm CE}^{(s)})(A)
\Delta_{\rm CE}^{(1)}
\right] \right.
\nonumber\\
&\left.\quad
-\frac{1}{2}
\trace\left[
(1/R_{\rm CE}^{(s)})(A)
\Delta_{\rm CE}^{(1)}
(1/R_{\rm CE}^{(s)})(A)
\Delta_{\rm CE}^{(1)}
\right]\right.
\nonumber\\
&\left.\quad
+
\trace\left[
(1/R_{\rm CE}^{(s)})(A)
\Delta_{\rm CE}^{(2)}
\right]\right).
\label{eq:QCE t4}
\end{align}
We now collect terms according to their perturbative order.
The zeroth-order covariance-matrix contribution consists of the
first two terms in \eqref{eq:QCE t1 0} and the first term in \eqref{eq:QCE t4},
together with the constant $-n\log 2$ in \eqref{eq:log Q bar}. Since the two covariance
matrices coincide at zeroth order, the corresponding quantum
Chernoff exponent vanishes. Therefore,
\begin{align}
\xi_{\rm CE,CM,s}^{(0)}=0.
\end{align}
The first-order covariance-matrix contribution consists of the
third and the fourth terms in \eqref{eq:QCE t1 0} and the second term in \eqref{eq:QCE t4},
which gives
\begin{align}
&\xi_{\rm CE,CM,s}^{(1)}
=
\frac{1}{2}
\left[
\trace\left(
(1/R_{\rm CE}^{(s)})(A)
\Delta_{\rm CE}^{(1)}
\right)
\right.
\nonumber\\
&\qquad\qquad \qquad\left.
-
D^{(1)}_{\log\det(f_{\rm CE}^{(s)}(A))}[E_1]\right.
\nonumber\\
&\qquad \qquad \qquad \left.
-
D^{(1)}_{\log\det(f_{\rm CE}^{(1-s)}(A))}[E_2]
\right].
\label{eq:QCE term 1}
\end{align}
Similarly, the second-order covariance-matrix contribution consists
of the fifth and the sixth terms in \eqref{eq:QCE t1 0} and third and fourth terms in 
\eqref{eq:QCE t4}, yielding
\begin{align}
\xi_{\rm CE,CM,s}^{(2)}
&=
\frac{1}{2}
\trace\left[
(1/R_{\rm CE}^{(s)})(A)
\Delta_{\rm CE}^{(2)}
\right] \nonumber\\
&\quad-\frac{1}{4}
D^{(2)}_{\log\det(f_{\rm CE}^{(s)}(A))}[E_1]
\nonumber\\
&\quad
-\frac{1}{4}
D^{(2)}_{\log\det(f_{\rm CE}^{(1-s)}(A))}[E_2]
\nonumber\\
&\quad
-\frac{1}{4}
\trace\left[
\left(
(1/R_{\rm CE}^{(s)})(A)
\Delta_{\rm CE}^{(1)}
\right)^2
\right].
\label{eq:QCE term 2}
\end{align}

\subsection{First-order cancellation}

We now show that the first-order covariance-matrix contribution
vanishes, i.e.,
\begin{align}
\xi_{\rm CE,CM,s}^{(1)}=0.
\end{align}
Using Lemma~\ref{lemma:D logdet f}, the second and third terms of
\eqref{eq:QCE term 1} can be written as
\begin{align}
D^{(1)}_{\log\det(f_{\rm CE}^{(s)}(A))}[E_1]
&=
\trace\left[
E_1
\left(\log f_{\rm CE}^{(s)}\right)'(A)
\right], \label{eq:QCE t1-2}
\\
D^{(1)}_{\log\det(f_{\rm CE}^{(1-s)}(A))}[E_2]
&=\trace\left[
E_2
\left(\log f_{\rm CE}^{(1-s)}\right)'(A)
\right].  \label{eq:QCE t1-1}
\end{align}
Next, substituting \eqref{eq:QCE delta 1} into \eqref{eq:QCE term 1}, the first term of \eqref{eq:QCE term 1} yields
\begin{align}
&
\frac{1}{2}
\Bigg[
\trace\left[
(1/R_{\rm CE}^{(s)})(A)
\left(
D^{(1)}_{g_{\rm CE}^{(s)}(A)}[E_1]
+
D^{(1)}_{g_{\rm CE}^{(1-s)}(A)}[E_2]
\right)
\right] \nonumber \\
& =\frac{1}{2}
\left[
\trace\left[
E_1
(1/R_{\rm CE}^{(s)})(A)
\left(g_{\rm CE}^{(s)}\right)'(A)
\right]\right.
\nonumber\\
&\left.\qquad
+
\trace\left[
E_2
(1/R_{\rm CE}^{(s)})(A)
\left(g_{\rm CE}^{(1-s)}\right)'(A)
\right]\right] \label{eq:QCE proof t11} \\
& =\frac{1}{2}
\left[
\trace\left[
E_1
\left(\log f_{\rm CE}^{(s)}\right)'(A)
\right]\right.
\nonumber\\
&\left.\qquad
+
\trace\left[
E_2
\left(\log f_{\rm CE}^{(1-s)}\right)'(A)
\right]\right], \label{eq:QCE proof t12}
\end{align}
where \eqref{eq:QCE proof t11} follows from applying Lemma~\ref{lemma: trace} to the first two terms. Eq.~\eqref{eq:QCE proof t12} follows from
\begin{align}
(1/R_{\rm CE}^{(s)})(x)
\left(g_{\rm CE}^{(s)}\right)'(x)
&=
\left(\log f_{\rm CE}^{(s)}\right)'(x),
\label{eq:QCE equal 1}
\\
(1/R_{\rm CE}^{(s)})(x
\left(g_{\rm CE}^{(1-s)}\right)'(x)
&=
\left(\log f_{\rm CE}^{(1-s)}\right)'(x).
\label{eq:QCE equal 2}
\end{align}
Substituting \eqref{eq:QCE t1-2}, \eqref{eq:QCE t1-1}, and \eqref{eq:QCE proof t12} into \eqref{eq:QCE term 1}, the four terms are canceled pairwise. Therefore, 
\begin{align}
\xi_{\rm CE,CM,s}^{(1)}=0.
\end{align}

\subsection{Second-order simplification}
We now simplify the second-order covariance-matrix contribution
$\xi_{\rm CE,CM,s}^{(2)}$ given in \eqref{eq:QCE term 2}.
Using Lemma~\ref{lemma:D logdet f}, the second and third terms of
\eqref{eq:QCE term 2} become
\begin{align}
&-\frac{1}{4}
D^{(2)}_{\log\det(f_{\rm CE}^{(s)}(A))}[E_1] \nonumber \\
&\qquad\qquad\qquad =
-\frac{1}{4}
\trace\left[
E_1
D^{(1)}_{(\log f_{\rm CE}^{(s)})'(A)}[E_1]
\right],
\label{eq:QCE t7}
\\
&-\frac{1}{4}
D^{(2)}_{\log\det(f_{\rm CE}^{(1-s)}(A))}[E_2] \nonumber \\
&\qquad\qquad\qquad=
-\frac{1}{4}
\trace\left[
E_2
D^{(1)}_{(\log f_{\rm CE}^{(1-s)})'(A)}[E_2]
\right].
\label{eq:QCE t8}
\end{align}
Next, substituting \eqref{eq:QCE delta 2} into the first term of
\eqref{eq:QCE term 2} and applying Lemma~\ref{lemma: trace 2}, we have
\begin{align}
&\frac{1}{2}\trace\left[
(1/R_{\rm CE}^{(s)})(A)
\Delta_{\rm CE}^{(2)}
\right]
\nonumber\\
&=
\frac{1}{4}
\trace\left[
(1/R_{\rm CE}^{(s)})(A)
\left(
D^{(2)}_{g_{\rm CE}^{(s)}(A)}[E_1]
+
D^{(2)}_{g_{\rm CE}^{(1-s)}(A)}[E_2]
\right)
\right]
\nonumber\\
&=
\frac{1}{4}
\trace\left[
E_1
D^{(1)}_{
\left(
(1/R_{\rm CE}^{(s)})
(g_{\rm CE}^{(s)})'
\right)(A)
}[E_1]
\right]
\nonumber\\
&\quad
-\frac{1}{4}
\trace\left[
D^{(1)}_{(1/R_{\rm CE}^{(s)})(A)}[E_1]
D^{(1)}_{g_{\rm CE}^{(s)}(A)}[E_1]
\right]
\nonumber\\
&\quad
+\frac{1}{4}
\trace\left[
E_2
D^{(1)}_{
\left(
(1/R_{\rm CE}^{(s)})
(g_{\rm CE}^{(1-s)})'
\right)(A)
}[E_2]
\right]
\nonumber\\
&\quad
-\frac{1}{4}
\trace\left[
D^{(1)}_{(1/R_{\rm CE}^{(s)})(A)}[E_2]
D^{(1)}_{g_{\rm CE}^{(1-s)}(A)}[E_2]
\right].
\label{eq:QCE t9}
\end{align}
 Applying
Lemma~\ref{lemma:Df-1} to the second
and fourth terms of \eqref{eq:QCE t9}, we have
\begin{align}
   & -\frac{1}{4}
\trace\left[
D^{(1)}_{(1/R_{\rm CE}^{(s)})(A)}[E_1]
D^{(1)}_{g_{\rm CE}^{(s)}(A)}[E_1]
\right]  \nonumber\\
& = \frac{1}{4}
\trace\left[
(1/R_{\rm CE}^{(s)})(A)
\left(
D^{(1)}_{g_{\rm CE}^{(s)}(A)}[E_1]
+
D^{(1)}_{g_{\rm CE}^{(1-s)}(A)}[E_1]
\right)
\right.
\nonumber\\
&\left.
\qquad\qquad\times
(1/R_{\rm CE}^{(s)})(A)
D^{(1)}_{g_{\rm CE}^{(s)}(A)}[E_1]
\right], \label{eq:QCE t2-1}
\end{align}
and
\begin{align}
&-\frac{1}{4}
\trace\left[
D^{(1)}_{(1/R_{\rm CE}^{(s)})(A)}[E_2]
D^{(1)}_{g_{\rm CE}^{(1-s)}(A)}[E_2]
\right] \nonumber \\
& = \frac{1}{4}
\trace\left[
(1/R_{\rm CE}^{(s)})(A)
\left(
D^{(1)}_{g_{\rm CE}^{(s)}(A)}[E_2]
+
D^{(1)}_{g_{\rm CE}^{(1-s)}(A)}[E_2]
\right)
\right.
\nonumber\\
&\left.
\qquad\qquad\times
(1/R_{\rm CE}^{(s)})(A)
D^{(1)}_{g_{\rm CE}^{(1-s)}(A)}[E_2]
\right]. \label{eq:QCE t2-2}
\end{align}
Furthermore, substituting
\eqref{eq:QCE RCB} and \eqref{eq:QCE delta 1} into the fourth term of
\eqref{eq:QCE term 2}, we obtain
\begin{align}
    &-\frac{1}{4}
\trace\left[
\left(
(1/R_{\rm CE}^{(s)})(A)
\Delta_{\rm CE}^{(1)}
\right)^2
\right] \nonumber\\
& \quad =-\frac{1}{4}
\trace\left[
\left(
(1/R_{\rm CE}^{(s)})(A)
\left(
D^{(1)}_{g_{\rm CE}^{(s)}(A)}[E_1] \right.\right.\right.\nonumber\\
&\left.\left.\left.\qquad\qquad\qquad\qquad+
D^{(1)}_{g_{\rm CE}^{(1-s)}(A)}[E_2]
\right)
\right)^2
\right] \label{eq:QCE t2-3}
\end{align}
Using \eqref{eq:QCE equal 1} and \eqref{eq:QCE equal 2}, the first
and third terms of \eqref{eq:QCE t9} cancel \eqref{eq:QCE t7} and
\eqref{eq:QCE t8}, respectively. Thus, combining \eqref{eq:QCE t2-1}, \eqref{eq:QCE t2-2} and \eqref{eq:QCE t2-3}, we simplify the second-order covariance contribution by
\begin{align}
\!\!\!\xi_{\rm CE,CM,s}^{(2)} 
&=
\mathcal{B}_s(E_1,E_1)
+
\mathcal{B}_s(E_2,E_2)
-
2\mathcal{B}_s(E_1,E_2),
\label{eq:QCE t10}
\end{align}
where
\begin{align}
\mathcal{B}_s(X,Y)
&=
\frac{1}{4}
\trace\left[
(1/R_{\rm CE}^{(s)})(A)
D^{(1)}_{g_{\rm CE}^{(s)}(A)}[X]
\right.
\nonumber\\
&\qquad\left.
\times
(1/R_{\rm CE}^{(s)})(A)
D^{(1)}_{g_{\rm CE}^{(1-s)}(A)}[Y]
\right].
\label{eq:QCE symmetry 0}
\end{align}
To further simplify \eqref{eq:QCE t10}, we show that
$\mathcal{B}_s(X,Y)$ is symmetric in $X$ and $Y$, i.e.,
\begin{align}
\mathcal{B}_s(X,Y)
=
\mathcal{B}_s(Y,X).
\label{eq:QCE symmetry}
\end{align}
Let the eigen-decomposition of $(1/R_{\rm CE}^{(s)})(A)$ be
\begin{align}
(1/R_{\rm CE}^{(s)})(A)
=
\sum_k q_k P_k.
\label{eq:QCE symmetry 1}
\end{align}
Using the divided-difference representation in \eqref{eq:D1f P},
we also write
\begin{align}
D^{(1)}_{g_{\rm CE}^{(s)}(A)}[X]
&=
\sum_{k,l}
(g_{\rm CE}^{(s)})^{[1]}(\lambda_k,\lambda_l)
P_kXP_l
\nonumber\\
&=
\sum_{k,l}
g_s^{(kl)}X_{kl},
\label{eq:QCE symmetry 2}
\end{align}
where $ g_s^{(kl)}=
(g_{\rm CE}^{(s)})^{[1]}(\lambda_k,\lambda_l) $, and $X_{kl}=
P_kXP_l$.
Substituting \eqref{eq:QCE symmetry 1} and
\eqref{eq:QCE symmetry 2} into \eqref{eq:QCE symmetry 0} gives
\begin{align}
\mathcal{B}_s(X,Y)
&=
\frac{1}{4}
\trace\left[
\sum_{k,l}
q_lq_k
g_s^{(kl)}
g_{1-s}^{(kl)}
X_{lk}Y_{kl}
\right]
\label{eq:QCE symmetry 3}
\\
&=
\frac{1}{4}
\trace\left[
\sum_{k,l}
q_kq_l
g_s^{(lk)}
g_{1-s}^{(lk)}
X_{kl}Y_{lk}
\right]
\label{eq:QCE symmetry 4}
\\
&=
\mathcal{B}_s(Y,X).
\label{eq:QCE symmetry 5}
\end{align}
Here, \eqref{eq:QCE symmetry 3} follows from the orthogonality of the
spectral projectors  $ P_kP_l = \delta_{kl}P_k $, while \eqref{eq:QCE symmetry 4} follows by
relabeling $k$ and $l$ and using the symmetry property of the first divided
difference,
$g_s^{(kl)}=g_s^{(lk)}$.
Using \eqref{eq:QCE symmetry}, \eqref{eq:QCE t10} reduces to
\begin{align}
\xi_{\rm CE,CM,s}^{(2)}
=
\mathcal{B}_s(E_1-E_2,E_1-E_2).
\label{eq:QCE perturb t2}
\end{align}

\subsection{Perturbative formula}

Combining the zeroth- and first-order cancellations, the second-order
covariance-matrix contribution in \eqref{eq:QCE perturb t2}, and the
mean-vector contribution in \eqref{eq:QCE perturb mean-vector},
the quantum Chernoff quantity up to the second order is
\begin{align}
\xi_{\rm{pert},s}^{(2)}(\hat{\rho}_1,\hat{\rho}_2)
=
\xi_{\rm CE,CM}^{(2)}
+
\xi_{\rm CE,d,s},
\end{align}
where
\begin{align}
\xi_{\rm CE,CM,s}^{(2)}
&=
\mathcal{B}_s(E_1-E_2,E_1-E_2), \label{eq:QCE xi 2 CB CM}
\\
\xi_{\rm CE,d,s}
&=
\delta_d^T
\left[
\Omega
\left(
g_{\rm CE}^{(s)}(A)
+
g_{\rm CE}^{(1-s)}(A)
\right)
\right]^{-1}
\delta_d,
\end{align}
with $\mathcal{B}_s(X,Y)$ defined in
\eqref{eq:QCE symmetry 0}.

\subsection{Optimal Chernoff parameter}
\label{app:QCE concavity}

In this subsection, we show that the leading perturbative quantum
Chernoff exponent is maximized at $s=\frac{1}{2}$.
We first establish its concavity with respect to $s$.

Let $\epsilon\rightarrow0$ denote the asymptotic parameter governing
the perturbations, and let $\xi_s(\epsilon)$ denote the exact quantum
Chernoff quantity. We
write its leading asymptotic behavior as
\begin{align}
    \xi_s(\epsilon)
    =
    a(\epsilon)\,
    \xi_{{\rm lead},s}
    +
    o\!\left(a(\epsilon)\right).
\end{align}
Here, $a(\epsilon)>0$ is a common asymptotic scale independent of $s$. Since the leading covariance-matrix and mean-vector contributions are quadratic in $E_1-E_2$ and $\delta d$, respectively, we choose $a(\epsilon)$ as the dominant of these two scales, allowing either contribution or both to determine the leading term. 
Therefore,
\begin{align}
    \xi_{{\rm lead},s}
    =
    \lim_{\epsilon\rightarrow0}
    \frac{\xi_s(\epsilon)}{a(\epsilon)}. \label{eq:QCE epsilon limit}
\end{align}
Since the exact quantum Chernoff quantity is concave in $s$
\cite[Lemma~20(i)]{Beigi2026}, for any
$s_1,s_2\in[0,1]$ and $\lambda\in[0,1]$,
\begin{align}
\xi_{\lambda s_1+(1-\lambda)s_2}(\epsilon)
\geq
\lambda\xi_{s_1}(\epsilon)
+
(1-\lambda)\xi_{s_2}(\epsilon).
\label{eq:QCE exact concavity}
\end{align}
Since $a(\epsilon)>0$, dividing both sides of
\eqref{eq:QCE exact concavity} by $a(\epsilon)$ gives
\begin{align}
\frac{
\xi_{\lambda s_1+(1-\lambda)s_2}(\epsilon)
}{a(\epsilon)}
&\geq
\lambda
\frac{\xi_{s_1}(\epsilon)}{a(\epsilon)}
+
(1-\lambda)
\frac{\xi_{s_2}(\epsilon)}{a(\epsilon)}.
\label{eq: QCE concavity divide e}
\end{align}
Taking the limit $\epsilon\rightarrow0$ and using
\eqref{eq:QCE epsilon limit}, we obtain
\begin{align}
\xi_{{\rm lead},
\lambda s_1+(1-\lambda)s_2}
&\geq
\lambda
\xi_{{\rm lead},s_1}
+
(1-\lambda)
\xi_{{\rm lead},s_2}.
\label{eq:QCE perturb concavity}
\end{align}
Here, the leading-order term $\xi_{\rm lead,s}$ is $\xi_{\rm pert,s}^{(2)}$.
Therefore, the leading perturbative quantum Chernoff quantity
$\xi_{\rm pert,s}^{(2)}$ is concave in $s$.

We next show that $s = \frac{1}{2}$ is a stationary point of $\xi_{{\rm pert},s}^{(2)}$. From the definition
\begin{align}
R_{\rm CE}^{(s)}(A)
=
g_{\rm CE}^{(s)}(A)
+
g_{\rm CE}^{(1-s)}(A),
\end{align}
we have
\begin{align}
R_{\rm CE}^{(s)}(A)
=
R_{\rm CE}^{(1-s)}(A).
\end{align}
Moreover, exchanging $s$ and $1-s$ in
$\mathcal{B}_s(X,X)$ only interchanges the two Fr\'echet-derivative
factors. By the cyclic property of the trace,
\begin{align}
\mathcal{B}_s(X,X)
=
\mathcal{B}_{1-s}(X,X).
\end{align}
The mean-vector contribution is also invariant under interchanging $s$ with $1-s$. Therefore,
\begin{align}
\xi_{{\rm pert},s}^{(2)}
=
\xi_{{\rm pert},1-s}^{(2)}.
\label{eq:QCE perturb symmetry s}
\end{align}
Differentiating \eqref{eq:QCE perturb symmetry s} with respect to $s$
gives
\begin{align}
\partial_s\xi_{{\rm pert},s}^{(2)}
=
-
\partial_s\xi_{{\rm pert},1-s}^{(2)}.
\end{align}
Setting $s=\frac{1}{2}$, we obtain
\begin{align}
\partial_s\xi_{{\rm pert},\frac{1}{2}}^{(2)}
=
-
\partial_s\xi_{{\rm pert},\frac{1}{2}}^{(2)} = 0
\end{align}
Since $\xi_{{\rm pert},s}^{(2)}$ is concave in $s$, this stationary
point is a global maximum. Hence,
\begin{align}
\max_{s\in[0,1]}
\xi_{{\rm pert},s}^{(2)}
=
\xi_{{\rm pert},1/2}^{(2)}.
\end{align}

\section{Quantum fidelity} \label{appendix:gaussian_qi frechet QF}
\subsection{Gaussian covariance-matrix representation} \label{appendix:sub gaussian_qi AE}

For two $n$-mode Gaussian states
$\hat{\rho}_1(\sigma_1,d_1)$ and
$\hat{\rho}_2(\sigma_2,d_2)$, the QF can be written as
\cite{LeonardoQF}
\begin{align}
F(\hat{\rho}_1,\hat{\rho}_2)
=
F_0
\exp\left[
-\frac{1}{4}
\delta_d^{T}
(\sigma_1+\sigma_2)^{-1}
\delta_d
\right],
\label{eq: app QF}
\end{align}
where $\delta_d=d_1-d_2$,
and $F_0$ depends only on the covariance matrices. In particular,
\begin{align}
F_0^4
=
\frac{
\det
\left[
\left(
\sqrt{4\mathbb{I}+(V_{\rm aux}\Omega)^{-2}}
+2\mathbb{I}
\right)
V_{\rm aux}
\right]
}{
\det(\sigma_1+\sigma_2)
},
\label{eq: app QF F_0}
\end{align}
where
\begin{align}
V_{\rm aux}
=
\Omega^T
(\sigma_1+\sigma_2)^{-1}
\left(
\frac{\Omega}{4}
+
\sigma_2\Omega\sigma_1
\right).
\label{eq: app QF V aux}
\end{align}
We first express $\Omega V_{\rm aux}$ in terms of
$\Omega\sigma_1 = A+E_1$ and $\Omega\sigma_2 = A+E_2$,
\begin{align}
\Omega V_{\rm aux}
&=
(\sigma_1+\sigma_2)^{-1}
\Omega^{-1}\Omega\left(
\frac{\Omega}{4}
+
\sigma_2\Omega\sigma_1
\right)
\nonumber\\
&=
(\Omega\sigma_1+\Omega\sigma_2)^{-1}
\left(
\Omega\sigma_2\Omega\sigma_1
-
\frac{\mathbb{I}}{4}
\right)
\label{eq: QF Omega Vaux} \\
&=
(2A+E_1+E_2)^{-1} \nonumber\\
&\qquad\times\Big[
A^2-\frac{\mathbb{I}}{4}
+
E_2A
+
AE_1
+
E_2E_1
\Big].
\label{eq: QF Vaux AE}
\end{align}
We next rewrite $F_0$ in terms of $A$, $E_1$, and $E_2$. Multiplying both numerator and denominator by $\det(\Omega)$, using $\det(A)\det(B) = \det(AB)$, and using the similarity invariance, \eqref{eq: app QF F_0} yields
\begin{align}
F_0^4
&=
\frac{
\det
\left[
\Omega
\left(
\sqrt{
4\mathbb{I}
+
\left[
\Omega^{-1}
(\Omega V_{\rm aux})
\Omega
\right]^{-2}
}
+
2\mathbb{I}
\right)
V_{\rm aux}
\right]
}{
\det(\Omega\sigma_1+\Omega\sigma_2)
}
\nonumber\\
&=
\frac{
\det
\left[
\left(
\sqrt{
4\mathbb{I}
+
(\Omega V_{\rm aux})^{-2}
}
+
2\mathbb{I}
\right)
\Omega V_{\rm aux}
\right]
}{
\det(\Omega\sigma_1+\Omega\sigma_2)
} \\
& = \frac{
\det\left[
f_{\rm F}(\Omega V_{\rm aux})
\right]
}{
\det(2A+E_1+E_2)
}, \label{eq: QF F_0^4}
\end{align}
where we define
\begin{align}
f_{\rm F}(x)
=
\left(
\sqrt{4+x^{-2}}+2
\right)x.
\label{eq: QF fF}
\end{align}
Since the fidelity is unity at zeroth order and the first-order contribution vanishes, the quantum infidelity is second-order. Therefore, $-\log F=1-F$ up to second order in the perturbations.
Taking the logarithm of \eqref{eq: QF F_0^4}, the covariance-matrix
contribution to the logarithmic fidelity becomes
\begin{align}
\log F_0
&=
\frac{1}{4}
\log\det
\left[
f_{\rm F}(\Omega V_{\rm aux})
\right] \nonumber\\
&\qquad\qquad
-
\frac{1}{4}
\log\det
\left(
2A+E_1+E_2
\right).
\label{eq: QF log F_0 0}
\end{align}
Finally, the mean-vector contribution to the logarithmic fidelity can
be written as
\begin{align}
\!\!\!-\frac{1}{4}
\delta_d^T
(\sigma_1+\sigma_2)^{-1}
\delta_d
&=
-\frac{1}{4}
\delta_d^T
(\Omega\sigma_1+\Omega\sigma_2)^{-1}
\Omega
\delta_d
\nonumber\\
&=
-\frac{1}{4}
\delta_d^T
(2A+E_1+E_2)^{-1}
\Omega
\delta_d.
\label{eq:QF Displace}
\end{align}
Since $h(A)$ in \eqref{eq:QF h0} is singular when $\sigma_0$ has a symplectic
eigenvalue equal to $1/2$, we first restrict the analysis to $\sigma_0$ whose symplectic eigenvalues are all greater than
$1/2$.

\subsection{Fr\'echet expansion}
We first expand $\Omega V_{\rm aux}$ in \eqref{eq: QF Vaux AE}
around $A$ up to the second order in terms of $E_1$ and $E_2$. First, using the matrix inverse perturbation expansion
\begin{align}
(2A+E_1+E_2)^{-1}
\simeq{}&
\frac{1}{2}A^{-1}
-\frac{1}{4}
A^{-1}(E_1+E_2)A^{-1}
\nonumber\\
&\ +
\frac{1}{8}
A^{-1}\left[(E_1+E_2)
A^{-1}\right]^2,
\end{align}
we show that \eqref{eq: QF Vaux AE} yields
\begin{align}
\Omega V_{\rm aux}
&\simeq
r_{\rm F}(A)
+
\Delta_{\rm F}^{(1)}
+
\Delta_{\rm F}^{(2)},
\label{eq: QF V aux pert}
\end{align}
where
\begin{align}
r_{\rm F}(A)
&=
\frac{1}{2}A
-
\frac{1}{8}A^{-1},
\label{eq:QF rF}
\\
\Delta_{\rm F}^{(1)}
&=
\frac{1}{2}
A^{-1}(E_2A+AE_1)
\nonumber\\
&\quad
-\frac{1}{4}
A^{-1}(E_1+E_2)A^{-1}
\left(
A^2-\frac{\mathbb{I}}{4}
\right),
\label{eq:QF Delta1}
\\
\Delta_{\rm F}^{(2)}
&=
\frac{1}{2}A^{-1}E_2E_1
\nonumber\\
&\quad
-\frac{1}{4}
A^{-1}(E_1+E_2)A^{-1}
(E_2A+AE_1)
\nonumber\\
&\quad
+\frac{1}{8}
A^{-1}\left[(E_1+E_2)
A^{-1}\right]^2
\left(
A^2-\frac{\mathbb{I}}{4}
\right).
\label{eq:QF Delta2}
\end{align}
Substituting \eqref{eq: QF V aux pert} into
\eqref{eq: QF log F_0 0}, and using the Fr\'echet expansion of
$\log\det f_{\rm F}(\cdot)$ in Lemma~\ref{lemma:D logdet f} around $r_{\rm F}(A)$, the first term of \eqref{eq: QF log F_0 0} yields
\begin{align}
&\frac{1}{4}\log\det
f_{\rm F}\!\left(
r_{\rm F}(A)
+
\Delta_{\rm F}^{(1)}
+
\Delta_{\rm F}^{(2)}
\right) \nonumber\\
&\qquad\simeq{}
\frac{1}{4}\log\det
f_{\rm F}\!\left(r_{\rm F}(A)\right)+
\frac{1}{4}\trace\left[
(\log f_{\rm F})'
\!\left(r_{\rm F}(A)\right)
\Delta_{\rm F}^{(1)}
\right]
\nonumber\\
&\qquad\qquad+
\frac{1}{4}\trace\left[
(\log f_{\rm F})'
\!\left(r_{\rm F}(A)\right)
\Delta_{\rm F}^{(2)}
\right]
\nonumber\\
&\qquad\qquad+
\frac{1}{8}
\trace\left[
\Delta_{\rm F}^{(1)}
D^{(1)}_{(\log f_{\rm F})'
(r_{\rm F}(A))}
[\Delta_{\rm F}^{(1)}]
\right].
\label{eq:QF logdet f expansion}
\end{align}
Similarly, using Lemma~\ref{lemma:D logdet f}, the second term of \eqref{eq: QF log F_0 0} yields
\begin{align}
&\frac{1}{4}\log\det(2A+E_1+E_2) \nonumber \\
&\quad\simeq{}
\frac{1}{4} \log\det(2A)
+
\frac{1}{8}
\trace\left[
A^{-1}(E_1+E_2)
\right]
\nonumber\\
&\qquad\qquad-
\frac{1}{32}
\trace\left[
A^{-1}(E_1+E_2)
A^{-1}(E_1+E_2)
\right].
\label{eq:QF logdet denominator expansion}
\end{align}
Substituting \eqref{eq:QF logdet f expansion} and \eqref{eq:QF logdet denominator expansion} into
\eqref{eq: QF log F_0 0}, we obtain
\begin{align}
\log F_0
&\simeq{}
\frac{1}{4}
\bigg[
\log\det
f_{\rm F}\!\left(r_{\rm F}(A)\right)
-
\log\det(2A)
\nonumber\\
&\
+
\trace\left(
(\log f_{\rm F})'
\!\left(r_{\rm F}(A)\right)
\Delta_{\rm F}^{(1)}
-
\frac{1}{2}
A^{-1}(E_1+E_2)
\right)
\nonumber\\
&\ 
+
\trace\left(
(\log f_{\rm F})'
\!\left(r_{\rm F}(A)\right)
\Delta_{\rm F}^{(2)}
\right)
\nonumber\\
&\ 
+
\frac{1}{2}
\trace\left[
\Delta_{\rm F}^{(1)}
D^{(1)}_{(\log f_{\rm F})'
(r_{\rm F}(A))}
[\Delta_{\rm F}^{(1)}]
\right]
\nonumber\\
&\ 
+
\frac{1}{8}
\trace\left[
A^{-1}(E_1+E_2)
A^{-1}(E_1+E_2)
\right]
\bigg].
\label{eq: QF log F0 expansion}
\end{align}
Accordingly, we write
\begin{align}
\log F_0
\simeq
F_{\rm CM}^{(0)}
+
F_{\rm CM}^{(1)}
+
F_{\rm CM}^{(2)},
\end{align}
where
\begin{align}
F_{\rm CM}^{(0)}
&=
\frac{1}{4}
\left[
\log\det
f_{\rm F}\!\left(r_{\rm F}(A)\right)
-
\log\det(2A)
\right] = 0,
\label{eq:QF perturb0 0}
\\
F_{\rm CM}^{(1)}
&=
\frac{1}{4}
\trace\left[
(\log f_{\rm F})'
\!\left(r_{\rm F}(A)\right)
\Delta_{\rm F}^{(1)} \right.\nonumber\\
&\left.\qquad\qquad\qquad\qquad
-
\frac{1}{2}
A^{-1}(E_1+E_2)
\right],
\label{eq:QF perturb1 0}
\\
F_{\rm CM}^{(2)}
&=
\frac{1}{4}
\trace\left[
(\log f_{\rm F})'
\!\left(r_{\rm F}(A)\right)
\Delta_{\rm F}^{(2)}
\right]
\nonumber\\
&\quad
+
\frac{1}{8}
\trace\left[
\Delta_{\rm F}^{(1)}
D^{(1)}_{(\log f_{\rm F})'
(r_{\rm F}(A))}
[\Delta_{\rm F}^{(1)}]
\right]
\nonumber\\
&\quad
+
\frac{1}{32}
\trace\left[
A^{-1}(E_1+E_2)
A^{-1}(E_1+E_2)
\right].
\label{eq:QF perturb2 0}
\end{align}
The zeroth-order covariance-matrix contribution $F_{\rm CM}^{(0)}$
vanishes, since the fidelity between two identical Gaussian states is unity and its logarithm therefore vanishes.
The mean-vector contribution in \eqref{eq:QF Displace} is
already small. Therefore, we retain only its leading-order term and obtain
\begin{align}
F_{\rm d}
=
-\frac{1}{4}
\delta_d^T
(2A)^{-1}
\Omega
\delta_d.
\label{eq:QF mean-vector pert}
\end{align}
Hence, the logarithmic fidelity up to the second order is
\begin{align}
\log F_{\rm pert}
=
F_{\rm CM}^{(1)}
+
F_{\rm CM}^{(2)}
+
F_{\rm d}.
\label{eq:QF total perturb}
\end{align}

\subsection{First-order cancellation}

We now show that the first-order covariance-matrix contribution  $ F_{\rm CM}^{(1)}$ in
\eqref{eq:QF perturb1 0} vanishes. From \eqref{eq: QF fF}, 
we have
\begin{align}
\left(\log f_{\rm F}\right)'(x)
=
\frac{2}{\sqrt{1+4x^2}}.
\end{align}
Define
\begin{align}
h(x)
=
\left(\log f_{\rm F}\right)'
\left(r_{\rm F}(x)\right) \label{eq:QF h0}.
\end{align}
Substituting \eqref{eq:QF rF} into \eqref{eq:QF h0}, $h(x)$ yields
\begin{align}
h(x)
=
\frac{8x}{1+4x^2}.
\label{eq:QF h}
\end{align}
Substituting \eqref{eq:QF h} and \eqref{eq:QF Delta1} into the first term in
\eqref{eq:QF perturb1 0}, we obtain
\begin{align}
&\frac{1}{4}
\trace\left[
h(A)\Delta_{\rm F}^{(1)}
\right]
\nonumber\\
&=
\frac{1}{4}
\trace\left[
\frac{8A}{1+4A^2}
\left(
\frac{1}{2}
A^{-1}(E_2A+AE_1)
\right.\right.
\nonumber\\
&\left.\left.
\qquad\qquad
-\frac{1}{4}
A^{-1}(E_1+E_2)A^{-1}
\left(
A^2-\frac{\mathbb{I}}{4}
\right)
\right)
\right] \\
&=
\trace\left[
\frac{2A(E_1+E_2)}{1+4A^2}
\left(
\frac{\mathbb{I}}{2}
-
\frac{1}{4}
A^{-2}
\left(
A^2-\frac{\mathbb{I}}{4}
\right)
\right)
\right] \label{eq:QF t11 10}
\\
&=
\trace\left[
\frac{2A(E_1+E_2)}{1+4A^2}
\frac{4A^2+\mathbb{I}}{16A^2}
\right]
\\
&=
\frac{1}{8}
\trace\left[
A^{-1}(E_1+E_2)
\right],
\label{eq:QF first term cancellation}
\end{align}
 where \eqref{eq:QF t11 10} follows from using the cyclic property of the trace multiple times.
Substituting \eqref{eq:QF first term cancellation} into
\eqref{eq:QF perturb1 0}, we obtain
\begin{align}
F_{\rm CM}^{(1)}=
0. \label{eq:QF t1 = 0}
\end{align}
Therefore, the first-order covariance-matrix contribution to the
logarithmic fidelity vanishes.

\subsection{Second-order simplification}

We now simplify the second-order covariance-matrix contribution
$F_{\rm CM}^{(2)}$ in \eqref{eq:QF perturb2 0}. Let the eigen-decomposition of $A$ be
\begin{align}
A
=
U\Gamma U^{-1}, \label{eq:QF eigen A}
\end{align}
where
\begin{align}
\Gamma
=
\operatorname{diag}\{a_k\}_{k=1}^{2n},
\end{align}
and define the perturbations in the eigenbasis of $A$ as
\begin{align}
\widetilde{E}_k
=
U^{-1}E_kU,
\qquad
k=1,2. \label{eq:QF eigen E}
\end{align}
Consider the first term in
\eqref{eq:QF perturb2 0}. Using  \eqref{eq:QF Delta2} and \eqref{eq:QF h}, we obtain
\begin{align}
&\frac{1}{4}
\trace\left[
h(A)\Delta_{\rm F}^{(2)}
\right] \nonumber \\
&=
\frac{1}{4}
\trace\left[
\frac{8A}{1+4A^2}
\Delta_{\rm F}^{(2)}
\right]\\
&=
\frac{1}{4}
\trace\left[
\frac{8\Gamma}{1+4\Gamma^2}
\left(
U^{-1}\Delta_{\rm F}^{(2)}U
\right)
\right]\\
&=
\trace\left[
\frac{2}{1+4\Gamma^2}
\left(
\frac{1}{2}
\widetilde{E}_2\widetilde{E}_1
-\frac{1}{4}
(\widetilde{E}_1+\widetilde{E}_2)
\Gamma^{-1}
(\widetilde{E}_2\Gamma+\Gamma\widetilde{E}_1)
\right.\right.
\nonumber\\
&\left.\left.
\qquad
+\frac{1}{8}
(\widetilde{E}_1+\widetilde{E}_2)
\Gamma^{-1}
(\widetilde{E}_1+\widetilde{E}_2)
\Gamma^{-1}
\left(
\Gamma^2-\frac{\mathbb{I}}{4}
\right)
\right)
\right]\label{eq:QF perturb2 t-1}\\
&=
\sum_{k,l}
\Bigg[
\frac{
(\widetilde{E}_2)_{kl}
(\widetilde{E}_1)_{lk}
}{
1+4a_k^2
}
\nonumber\\
&\
-
\frac{1}{2}
\frac{
(\widetilde{E}_1+\widetilde{E}_2)_{kl}
a_l^{-1}
}{
1+4a_k^2
}
\left[
(\widetilde{E}_2)_{lk}a_k
+
a_l(\widetilde{E}_1)_{lk}
\right]
\nonumber\\
&\
+
\frac{1}{4}
\frac{
(\widetilde{E}_1+\widetilde{E}_2)_{kl}
a_l^{-1}
}{
1+4a_k^2
}
(\widetilde{E}_1+\widetilde{E}_2)_{lk}
a_k^{-1}
\left(
a_k^2-\frac{1}{4}
\right)
\Bigg].
\label{eq:QF perturb2 t0}
\end{align}
Here, \eqref{eq:QF perturb2 t-1} is obtained by transforming $\Delta_{\rm F}^{(2)}$ into the eigenbasis of
$A$, and \eqref{eq:QF perturb2 t0} follows from using the diagonal structure of $\Gamma$ to write the trace
entrywise as a double summation.

We next consider the second term in
\eqref{eq:QF perturb2 0}. 
\begin{align}
&\frac{1}{8}
\trace\left[
\Delta_{\rm F}^{(1)}
D^{(1)}_{(\log f_{\rm F})'
(r_{\rm F}(A))}
[\Delta_{\rm F}^{(1)}]
\right]
\nonumber\\
&=
\frac{1}{8}
\trace\left[
\left(
U^{-1}\Delta_{\rm F}^{(1)}U
\right)
U^{-1}
D^{(1)}_{(\log f_{\rm F})'
(r_{\rm F}(A))}
[\Delta_{\rm F}^{(1)}]
U
\right]
\\
&=
\frac{1}{8}
\sum_{k,l}
\left(
U^{-1}\Delta_{\rm F}^{(1)}U
\right)_{kl}
\nonumber\\
&\qquad\times
\left(
U^{-1}
D^{(1)}_{(\log f_{\rm F})'
(r_{\rm F}(A))}
[\Delta_{\rm F}^{(1)}]
U
\right)_{lk}
\\
&=
\frac{1}{8}
\sum_{k,l}
\left(
U^{-1}\Delta_{\rm F}^{(1)}U
\right)_{kl}
\nonumber\\
&\qquad\times
\left[
\left(
\left((\log f_{\rm F})'\right)^{[1]}
[r_{\rm F}(\Gamma)]
\right)
\circ
\left(
U^{-1}\Delta_{\rm F}^{(1)}U
\right)
\right]_{lk},
\label{eq:QF perturb2 t1}
\end{align}
where \eqref{eq:QF perturb2 t1} follows from the divided difference form of the first Fr\'echet derivative in \eqref{eq:First Frechet derivative Hadmard product}. Furthermore, from \eqref{eq:QF Delta1}, the $kl$-th entry of $\Delta_{\rm F}^{(1)}$ in the eigenbasis of $A$ is given by
\begin{align}
\left(
U^{-1}\Delta_{\rm F}^{(1)}U
\right)_{kl}
&=
\left[
\frac{1}{2}
\Gamma^{-1}
(\widetilde{E}_2\Gamma+\Gamma\widetilde{E}_1)
\right]_{kl}
\nonumber\\
&\quad
-
\left[
\frac{1}{4}
\Gamma^{-1}
(\widetilde{E}_1+\widetilde{E}_2)
\Gamma^{-1}
\left(
\Gamma^2-\frac{\mathbb{I}}{4}
\right)
\right]_{kl}
\nonumber\\
&=
\frac{
a_k^{-1}(\widetilde{E}_2)_{kl}a_l
+
(\widetilde{E}_1)_{kl}
}{2} \nonumber\\
&
\quad-
\frac{
(\widetilde{E}_1+\widetilde{E}_2)_{kl}
\left(
a_l^2-\frac{1}{4}
\right)
}{
4a_ka_l
}.
\label{eq:QF Delta1 eigenbasis}
\end{align}
The entries of the divided-difference matrix in \eqref{eq:QF perturb2 t1} are
\begin{align}
&\left(
\left((\log f_{\rm F})'\right)^{[1]}
[r_{\rm F}(\Gamma)]
\right)_{kl}
\nonumber\\
&=
\begin{cases}
\displaystyle
\frac{
h(a_k)-h(a_l)
}{
r_{\rm F}(a_k)-r_{\rm F}(a_l)
},
& k\neq l,
\\[10pt]
\displaystyle
(\log f_{\rm F})''
\left(
r_{\rm F}(a_k)
\right),
& k=l,
\end{cases}
\nonumber\\
&=
\begin{cases}
\displaystyle
\frac{
64a_ka_l(1-4a_ka_l)
}{
(1+4a_k^2)
(1+4a_l^2)
(1+4a_ka_l)
},
& k\neq l,
\\[12pt]
\displaystyle
\frac{
64a_k^2(1-4a_k^2)
}{
(1+4a_k^2)^3
},
& k=l,
\end{cases}
\nonumber\\
&=
\frac{
64a_ka_l(1-4a_ka_l)
}{
(1+4a_k^2)
(1+4a_l^2)
(1+4a_ka_l)
},
\label{eq:QF divided difference h}
\end{align}
where the last expression applies to all $k$ and $l$.
Consider the third term in \eqref{eq:QF perturb2 0}:
\begin{align}
&\frac{1}{32}
\trace\left[
A^{-1}(E_1+E_2)
A^{-1}(E_1+E_2)
\right]
\nonumber\\
&=
\frac{1}{32}
\trace\left[
\Gamma^{-1}
(\widetilde{E}_1+\widetilde{E}_2)
\Gamma^{-1}
(\widetilde{E}_1+\widetilde{E}_2)
\right]
\nonumber\\
&=
\frac{1}{32}
\sum_{k,l}
\frac{
(\widetilde{E}_1+\widetilde{E}_2)_{kl}
(\widetilde{E}_1+\widetilde{E}_2)_{lk}
}{
a_ka_l
}.
\label{eq:QF perturb2 t2}
\end{align}
Here, we again transform the perturbations $E_1$ and $E_2$ into the eigenbasis of $A$ using \eqref{eq:QF eigen A}--\eqref{eq:QF eigen E}, then write the trace as entrywise summation.
Substituting \eqref{eq:QF perturb2 t0},
\eqref{eq:QF perturb2 t1}, and
\eqref{eq:QF perturb2 t2} into
\eqref{eq:QF perturb2 0}, and using
\eqref{eq:QF Delta1 eigenbasis} and
\eqref{eq:QF divided difference h}, we obtain
\begin{align}
F_{\rm CM}^{(2)}
=
\sum_{k,l}
\frac{
C_1+C_2+C_3
}{
4a_k(1+4a_k^2)
a_l(1+4a_ka_l)
(1+4a_l^2)
},
\label{eq:QF perturb2 1}
\end{align}
where
\begin{align}
C_1
&=
\left[
a_k^2
-
a_k(1-4a_k^2)a_l
-
(1-16a_k^4)a_l^2
\right.
\nonumber\\
&\left.
\qquad
-
4a_k(3+4a_k^2)a_l^3
-
16a_k^2a_l^4
\right]
(\widetilde{E}_1)_{kl}
(\widetilde{E}_1)_{lk},
\\
C_2
&=
a_k(1+4a_l^2)
\left[
a_k+2a_l
+
4a_ka_l(2a_k+a_l)
\right]
\nonumber\\
&\qquad\times
(\widetilde{E}_2)_{kl}
(\widetilde{E}_1)_{lk},
\\
C_3
&=
-(1+4a_k^2)a_l
\left[
a_l(1+4a_k^2)
(\widetilde{E}_1)_{kl}
\right.
\nonumber\\
&\left.
\qquad\qquad
+
a_k(1+4a_l^2)
(\widetilde{E}_2)_{kl}
\right]
(\widetilde{E}_2)_{lk}.
\end{align}
The double sum in~\eqref{eq:QF perturb2 1} can be simplified by symmetrizing the summand. Let $s_{kl}$ denote the summand. Since the summation is invariant under interchanging $k$ and $l$,
\begin{align}
F_{\rm CM}^{(2)}
=
\sum_{k,l}s_{kl}
=
\frac{1}{2}
\sum_{k,l}
\left(
s_{kl}+s_{lk}
\right).
\label{eq:QF perturb2 2}
\end{align}
Simplifying the symmetrized summand gives
\begin{align}
F_{\rm CM}^{(2)}
&=
-\sum_{k,l}
\frac{
\left(
\widetilde{E}_1-\widetilde{E}_2
\right)_{kl}
\left(
\widetilde{E}_1-\widetilde{E}_2
\right)_{lk}
}{
4+16a_ka_l
} \label{eq:QF trace representation}\\
&=
-\frac{1}{4}
\trace\left[
\widetilde{X}
\left(
\widetilde{E}_1-\widetilde{E}_2
\right)
\right],
\label{eq:QF perturb2 3}
\end{align}
where $\widetilde{X}$ is defined entrywise as
\begin{align}
(\widetilde{X})_{kl}
=
\frac{
\left(
\widetilde{E}_1-\widetilde{E}_2
\right)_{kl}
}{
1+4a_ka_l
}.
\label{eq:QF Xtilde}
\end{align}
Equation \eqref{eq:QF Xtilde} is equivalently expressed entrywise as
\begin{align}
\left(
\widetilde{E}_1-\widetilde{E}_2
\right)_{kl}
&=
(\widetilde{X})_{kl}
+
4a_k(\widetilde{X})_{kl}a_l
\nonumber\\
&=
\left(
\widetilde{X}
+
4\Gamma\widetilde{X}\Gamma
\right)_{kl}.
\end{align}
This entry-wise relation is equivalent to the Lyapunov equation
\begin{align}
\widetilde{E}_1-\widetilde{E}_2
=
\widetilde{X}
+
4\Gamma\widetilde{X}\Gamma.
\label{eq:QF Xtilde equation}
\end{align}
Now we show that similar equality holds not only in the eigenbasis of $A$. Defining
\begin{align}
X
=
U\widetilde{X}U^{-1},
\end{align}
and transforming \eqref{eq:QF Xtilde equation} back to the original
basis gives
\begin{align}
E_1-E_2
&=
U
\left(
\widetilde{X}
+
4\Gamma\widetilde{X}\Gamma
\right)
U^{-1}
\nonumber\\
&=
X+4AXA.
\label{eq:QF X equation}
\end{align}
Therefore, the second-order covariance-matrix contribution depends only
on the difference $E_1-E_2$, 
\begin{align}
F_{\rm CM}^{(2)}
&=
-\frac{1}{4}
\trace\left[
U\widetilde{X}U^{-1}
U
\left(
\widetilde{E}_1-\widetilde{E}_2
\right)
U^{-1}
\right]
\nonumber\\
&=
-\frac{1}{4}
\trace\left[
X(E_1-E_2)
\right],
\label{eq:QF perturb2 final}
\end{align}
 where $X$ is determined by the linear
matrix equation \eqref{eq:QF X equation}.

\subsection{Perturbative formula}
From \eqref{eq:QF mean-vector pert} and \eqref{eq:QF perturb2 final}, the logarithmic fidelity up to the second order is
\begin{align}
\log F_{\rm pert}
=
-\frac{1}{4}
\trace\left[
X(E_1-E_2)
\right]
-
\frac{1}{4}
\delta_d^T
(2A)^{-1}
\Omega
\delta_d,
\label{eq:QF log final}
\end{align}
where $X$ satisfies
\begin{align}
E_1-E_2
=
X+4AXA.
\label{eq:QF X final}
\end{align}
Since the fidelity is unity at zeroth order and the first-order contribution vanishes, $\log F$ is second order in the perturbations. Therefore, up to second-order,
\begin{align}
F
=
\exp(\log F)
\simeq
1+\log F,
\end{align}
or equivalently,
\begin{align}
-\log F \simeq 1-F.
\end{align}
Here, $1-F$ denotes the quantum infidelity. 
Using \eqref{eq:QF log final}, the second-order perturbative fidelity is therefore
\begin{align}
F_{\rm pert}
=
1-\frac{1}{4}
\trace\left[
X(E_1-E_2)
\right]
-
\frac{1}{4}
\delta_d^T
(2A)^{-1}
\Omega
\delta_d.
\label{eq:QF final}
\end{align}

\section{Singularity analysis of the perturbative coefficients}
\label{app:singularity}

In this appendix, we derive the singular behavior of the
perturbative coefficients of the quantum information quantities that leads to the slower asymptotic vanishing
for a pure reference states summarized in
Sec.~\ref{subsec: pure scaling}. Since the behavior considered
here is determined by the covariance matrix, we set all mean vectors to
zero and focus on the covariance-matrix contributions.

 To characterize the
singularity of the perturbative coefficients when the reference state is a pure state, consider an $n$-mode auxiliary reference covariance
matrix
\begin{align}
    \sigma_0
    =
    \left(
    \frac{1}{2}+\tau
    \right)I_{2n},
    \qquad
    \tau\rightarrow0,
\end{align}
so that all symplectic eigenvalues approach $1/2$ with the same
auxiliary parameter $\tau$. The corresponding reference matrix is
\begin{align}
    A
    =
    \Omega\sigma_0
    =
    \left(
    \frac{1}{2}+\tau
    \right)\Omega,
\end{align}
whose eigenvalues are
\begin{align}
    a_{\pm}
    =
    \pm i
    \left(
    \frac{1}{2}+\tau
    \right),
\end{align}
each with multiplicity $n$.

We first determine how the perturbative coefficients diverge as
$\tau\rightarrow0$ for each quantum information quantity.
For the VNE, we write the covariance-matrix perturbation in the
$\Omega\sigma$ representation as
\begin{align}
    E
    =
    \Omega\Delta\sigma,
    \qquad
    \|E\|
    =
    \Theta(\epsilon).
\end{align}
For the QRE, quantum Chernoff quantity, and QF, we similarly define
\begin{align}
    E_k
    =
    \Omega\Delta\sigma_k,
    \qquad
    k=1,2,
    \qquad
    \|E_1-E_2\|
    =
    \Theta(\epsilon).
\end{align}
Throughout this appendix, the number of modes $n$ is fixed as
$\tau\rightarrow0$ and $\epsilon\rightarrow0$.

In the following derivations, we transform the perturbation matrices to
the eigenbasis of $A$. Since the eigenvectors of $A$ are independent of
$\tau$ and $\epsilon$, this similarity transformation does not change
the asymptotic order of the perturbations. Hence,
\begin{align}
    \|\widetilde E\|
    =
    \Theta(\epsilon),
    \qquad
    \| \widetilde{E}_1 - \widetilde{E}_2\|
    =
    \Theta(\epsilon).
\end{align}
For fixed matrix dimension, the singular behavior is therefore governed
by the asymptotic scaling of the relevant divided differences of the eigenvalues of $A$.

\subsection{Von Neumann entropy}
The perturbative expansion of the VNE is
\begin{align}
S_{\rm pert}^{(2)}
&=
\frac{1}{2}\trace[f_{\rm VN}(A)]
+
\frac{1}{2}\trace[
E f_{\rm VN}'(A)
]
\nonumber\\
&\quad+
\frac{1}{4}\trace\left[
E
D^{(1)}_{f_{\rm VN}'(A)}[E]
\right].
\end{align}
Recall that
\begin{align}
f_{\rm VN}(x)
=
g\left(\sqrt{-x^2}\right),
\end{align}
where $g(x)$ is given in \eqref{eq:g}.
We first consider the first-order contribution. Evaluating  $f'_{\rm VN}(x)$ at the eigenvalues $a_\pm$ gives
\begin{align}
f'_{\rm VN}(a_{\pm})
=
\pm i
\log\left(
\frac{\tau}{1+\tau}
\right)
=
\Theta\left(
|\log\tau|
\right).
\label{eq:VN first derivative singular}
\end{align}
Since $\|E\|=\Theta(\epsilon)$, the largest entry of $E$ is $\Theta(\epsilon)$, and the first-order contribution therefore satisfies
\begin{align}
S_{\rm VN}^{(1)} = \frac{1}{2}
\trace\left[
E f_{\rm VN}'(A)
\right]
=
O\left(
\epsilon|\log\tau|
\right).
\label{eq:app VN first singularity}
\end{align}
We next consider the second-order contribution. In the eigenbasis of
$A$, the first Fr\'echet derivative satisfies
\begin{align}
\left(
D^{(1)}_{f_{\rm VN}'(A)}[E]
\right)_{kl}
=
(f'_{\rm VN})^{[1]}[a_k,a_l] 
\left(\widetilde{E}\right)_{kl},
\label{eq:VN second divided difference}
\end{align}
where $\widetilde{E}=U^{-1}EU$, 
and $U$ diagonalizes $A$. Since $U$ is 
independent of $\tau$,  we have $\|\widetilde{E}\|=\Theta(\epsilon)$, and the largest entry of $\widetilde{E}$ is $\Theta(\epsilon)$.
For equal eigenvalues,
\begin{align}
\!\!\!(f'_{\rm VN})^{[1]}[a_\pm,a_\pm]
\!=\!
f''_{\rm VN}(a_\pm)
\!=\!
\frac{1}{\tau(1+\tau)}
=
\Theta\left(\frac{1}{\tau}\right),
\label{eq:VN diagonal DD singular}
\end{align}
whereas for opposite eigenvalues,
\begin{align}
(f'_{\rm VN})^{[1]}[a_\pm,a_\mp]
&=
\frac{
f'_{\rm VN}(a_\pm)-f'_{\rm VN}(a_\mp)
}{
a_\pm-a_\mp
} \\
& =\frac{2}{1+2\tau}\log\left( \frac{\tau}{1+\tau} \right) \\
&=
\Theta(|\log\tau|). \label{eq: VNE opposite}
\end{align}
Hence, the most singular divided-difference entries scale as
$\Theta(\tau^{-1})$. It follows from \eqref{eq:VN second divided difference} that for all $k$ and $l$,
\begin{align}
\left(
D^{(1)}_{f_{\rm VN}'(A)}[E]
\right)_{kl}
=
O\left(
\frac{\epsilon}{\tau}
\right).
\end{align}
For fixed matrix dimension, the second-order contribution therefore satisfies
\begin{align}
S_{\rm VN}^{(2)} = \frac{1}{4}
\trace\left[
E
D^{(1)}_{f_{\rm VN}'(A)}[E]
\right]
=
O\left(
\frac{\epsilon^2}{\tau}
\right).
\label{eq:app VN second singularity}
\end{align}
Therefore, the leading singular scalings of the first- and second-order VNE contributions are
\begin{align}
S_{\rm VN}^{(1)}
=
O\left(
\epsilon|\log\tau|
\right), \qquad
S_{\rm VN}^{(2)}
=
O\left(
\frac{\epsilon^2}{\tau}
\right).
\end{align}

\subsection{Quantum relative entropy}
For the QRE, we use the simplified second-order expression derived in
Theorem~\ref{thm:QRE-perturbation}. In the eigenbasis of
$A$,
\begin{align}
\left(
D^{(1)}_{f_{\rm RE}(A)}[E_1-E_2]
\right)_{kl}
=
f_{\rm RE}^{[1]}[a_k,a_l]\left(\widetilde{E}_1-\widetilde{E}_2 \right)_{kl}, \label{eq:QRE t0001}
\end{align}
where $\widetilde{E}_k=U^{-1}E_kU$, for $k = 1,2$,  
and $U$ diagonalizes $A$.
Since $\|E_1-E_2\|=\Theta(\epsilon)$ and the eigenbasis of
$A$ is independent of $\tau$, the transformed perturbation
satisfies
\begin{align}
\left(\widetilde E_1-\widetilde E_2\right)_{kl}
=
O(\epsilon) \quad \forall k,l.
\end{align}
The diagonal divided differences are
\begin{align}
f_{\rm RE}^{[1]}[a_\pm,a_\pm]
=
f_{\rm RE}'(a_\pm)
=
\frac{1}{\tau(1+\tau)}
=
\Theta(\tau^{-1}),
\label{eq:RE diagonal singularity}
\end{align}
while the off-diagonal divided differences are
\begin{align}
    f_{\rm RE}^{[1]}[a_\pm,a_\mp] = -\frac{4 \arccoth(1+2\tau)}{1+2\tau} = \Theta(|\ln \tau |). \label{eq:RE off diagonal singularity}
\end{align}
The diagonal divided differences therefore provide the dominant singular behavior. Hence, 
\begin{align}
\left(
D^{(1)}_{f_{\rm RE}(A)}[E_1-E_2]
\right)_{kl}
=
O\left(
\frac{\epsilon}{\tau}
\right).
\end{align}
Using the second-order QRE covariance-matrix contribution,
\begin{align}
D_{\rm RE,CM}^{(2)}
&=
-\frac{1}{4}
\trace\left[
(E_1-E_2)
D^{(1)}_{f_{\rm RE}(A)}
[E_1-E_2]
\right],
\end{align}
and noting that the trace is a finite sum over matrix entries, we obtain the leading singular scaling
\begin{align}
D_{\rm RE,CM}^{(2)}
=
O\left(
\frac{\epsilon^2}{\tau}
\right).
\label{eq:RE singular scaling}
\end{align}

\subsection{Quantum Chernoff quantity}

We next consider the second-order covariance-matrix contribution to
the quantum Chernoff quantity $\xi_{{\rm CE,CM},s}^{(2)}$ in \eqref{eq:QCE perturb CM}. In the eigenbasis of $A$, we have
\begin{align}
&\left(
D^{(1)}_{g_{\rm CE}^{(s)}(A)}[E_1-E_2]
\right)_{kl} \nonumber \\
&\qquad\qquad=
\left(g_{\rm CE}^{(s)}\right)^{[1]}[a_k,a_l]
\left(\widetilde E_1-\widetilde E_2\right)_{kl}, \label{eq: app QCE 0001}
\end{align}
where $\widetilde{E}_k=U^{-1}E_kU$, for $k = 1,2$,  
and $U$ diagonalizes $A$.
Since $\|E_1-E_2\|=\Theta(\epsilon)$ and the eigenbasis of
$A$ is independent of $\tau$, we have
\begin{align}
\left(\widetilde E_1-\widetilde E_2\right)_{kl}
=
O(\epsilon) \quad \forall k,l .
\end{align}
For fixed $s\in[0,1]$, the diagonal divided differences of
$g_{\rm CE}^{(s)}$ are
\begin{align}
\left(g_{\rm CE}^{(s)}\right)^{[1]}[a_\pm,a_\pm]
&=
g_{\rm CE}^{(s)\prime}(a_\pm)
\nonumber\\
&=
-\frac{
2s[\tau(1+\tau)]^{s-1}
}{
\left[(1+\tau)^s-\tau^s\right]^2
} \\
&=
\Theta(\tau^{s-1}).
\label{eq:CE diagonal s singularity}
\end{align}
Similarly,
\begin{align}
\left(g_{\rm CE}^{(1-s)}\right)^{[1]}[a_\pm,a_\pm]
&=
g_{\rm CE}^{(1-s)\prime}(a_\pm)
=
\Theta(\tau^{-s}).
\label{eq:CE diagonal 1-s singularity}
\end{align}
The corresponding off-diagonal divided differences remain nonsingular
\begin{align}
    \left(g_{\rm CE}^{(1-s)}\right)^{[1]}[a_\pm,a_\mp] &= \frac{2\left(\tau^s + (1+\tau)^s\right)}{(1+2\tau)(\tau^s-(1+\tau)^s)} \nonumber \\
    & = O(1). \label{eq:CE gs non singular 2}
\end{align}
Similarly, $ \left(g_{\rm CE}^{( s)}\right)^{[1]}[a_\pm,a_\mp]  = O(1)$. 
Furthermore, the zeroth-order matrix
$(1/R_{\rm CE}^{(s)})(A)$ remains nonsingular in the limit
$\tau\rightarrow0$:
\begin{align}
\left(
1/R_{\rm CE}^{(s)}
\right)
(A)
=
O(1). \label{eq:CE RA non singular}
\end{align}
Substituting these scalings into \eqref{eq:QCE perturb CM}, and noting that the trace is a finite
sum for fixed matrix dimension, the dominant singular contribution arises from the diagonal divided differences, giving
$\left(
\epsilon\tau^{s-1}
\right)
\left(
\epsilon\tau^{-s}
\right)
=
\frac{\epsilon^2}{\tau}$.
Hence,
\begin{align}
\xi_{{\rm CE,CM},s}^{(2)}
=
O\left(
\frac{\epsilon^2}{\tau}
\right).
\label{eq:CE singular scaling}
\end{align}
In particular, the same scaling holds at the perturbatively optimal
value $s=1/2$.

\subsection{Quantum fidelity}
Finally, consider the second-order covariance-matrix contribution to
the perturbative QF in \eqref{eq:QF trace representation},
\begin{align}
F_{\rm CM}^{(2)}
=
-\sum_{k,l}
\frac{
\left(
\widetilde{E}_1-\widetilde{E}_2
\right)_{kl}
\left(
\widetilde{E}_1-\widetilde{E}_2
\right)_{lk}
}{
4+16a_ka_l
},
\label{eq:QF singular starting}
\end{align}
where $a_k,a_l\in\{a_+,a_-\}$, $\widetilde{E}_k=U^{-1}E_kU$, for $k = 1,2$,  
and $U$ diagonalizes $A$.
Since $\|E_1-E_2\|=\Theta(\epsilon)$ and the eigenbasis of
$A$ is independent of $\tau$, the transformed perturbation
satisfies
\begin{align}
\left(
\widetilde E_1-\widetilde E_2
\right)_{kl}
=
O(\epsilon) \quad \forall k,l.
\end{align}
Therefore, each numerator in
 \eqref{eq:QF singular starting} is $O(\epsilon^2)$, and the
singular scaling is determined by the most singular denominator
$4+16a_ka_l$. For eigenvalues of opposite
sign,
\begin{align}
4+16a_+a_-
=
4+16\left(
\frac{1}{2}+\tau
\right)^2
=
O(1),
\end{align}
which is non-singular. The singular terms instead arise when the two eigenvalues have the
same sign:
\begin{align}
4+16a_\pm^2
&=
4-16\left(
\frac{1}{2}+\tau
\right)^2
\nonumber\\
&=
-16\tau(1+\tau)
=
\Theta(\tau).
\end{align}
Thus, the most singular terms in the finite sum
 \eqref{eq:QF singular starting} scale as
$\epsilon^2/\tau$. Therefore,
\begin{align}
F_{\rm CM}^{(2)}
=
O\left(
\frac{\epsilon^2}{\tau}
\right).
\label{eq:QF singular scaling}
\end{align}
\textit{Summary}---Thus, the second-order covariance-matrix contributions to the VNE, QRE,
quantum Chernoff quantity, and QF all exhibit a $\tau^{-1}$ singular
scaling as the reference state's symplectic eigenvalues approach $1/2$,
while the first-order VNE contribution scales only logarithmically
with $\tau$.

\section{Proof of Criterion~\ref{thm:leading pure mode perturbation}}
\label{app:proof mixed pure theorem}
In this appendix, we consider reference states whose covariance matrices contain symplectic eigenvalues both equal to and greater than $1/2$, and prove Criterion~\ref{thm:leading pure mode perturbation}.
Since Criterion~\ref{thm:leading pure mode perturbation} concerns the covariance-matrix contributions, we set
all mean vectors to zero throughout this appendix.

Let $n_{\rm g}$ and $n_{\rm e}$ denote the numbers of symplectic eigenvalues
strictly greater than and equal to $1/2$, respectively. As in
Appendix~\ref{app:singularity}, we introduce an auxiliary parameter
$\tau>0$ to characterize the singular behavior of the perturbative
coefficients associated with the eigenspaces that approach the
symplectic eigenvalue $1/2$. In this section, 
define 
\begin{align}
A_\tau = \Omega \sigma_0
=
U\Gamma_\tau U^{-1},
\end{align}
where
\begin{align}
\Gamma_\tau
&=
\operatorname{diag}
\left(
i\Lambda_{\rm g},
ir_\tau I_{n_{\rm e}},
-i\Lambda_{\rm g},
-ir_\tau I_{n_{\rm e}}
\right),
\label{eq:app theorem gamma}
\\
r_\tau
&=
\frac12+\tau , \\
\Lambda_{\rm g}
&=
\operatorname{diag}
(\nu_1,\ldots,\nu_{n_{\rm g}}),
\qquad
\nu_j>\frac12 .
\end{align}
Without loss of generality, we group the symplectic eigenvalues into those greater than $1/2$ and those equal to $1/2$.
We express the perturbation matrix in the eigenbasis of $A_\tau$ and
partition it according to the four eigenspaces appearing in \eqref{eq:app theorem gamma}:
\begin{align}
\widetilde E
=
U^{-1}EU
=
\begin{pmatrix}
\widetilde E_{11} & \widetilde E_{12}
& \widetilde E_{13} & \widetilde E_{14}\\
\widetilde E_{21} & \widetilde E_{22}
& \widetilde E_{23} & \widetilde E_{24}\\
\widetilde E_{31} & \widetilde E_{32}
& \widetilde E_{33} & \widetilde E_{34}\\
\widetilde E_{41} & \widetilde E_{42}
& \widetilde E_{43} & \widetilde E_{44}
\end{pmatrix},
\label{eq:app theorem E blocks}
\end{align}
and similarly, for the two-state quantities,
\begin{align}
\Delta\widetilde E
&=
U^{-1}(E_1-E_2)U
\nonumber\\
&=
\begin{pmatrix}
\Delta\widetilde E_{11} & \Delta\widetilde E_{12}
& \Delta\widetilde E_{13} & \Delta\widetilde E_{14}\\
\Delta\widetilde E_{21} & \Delta\widetilde E_{22}
& \Delta\widetilde E_{23} & \Delta\widetilde E_{24}\\
\Delta\widetilde E_{31} & \Delta\widetilde E_{32}
& \Delta\widetilde E_{33} & \Delta\widetilde E_{34}\\
\Delta\widetilde E_{41} & \Delta\widetilde E_{42}
& \Delta\widetilde E_{43} & \Delta\widetilde E_{44}
\end{pmatrix}.
\label{eq:app theorem Delta E blocks}
\end{align}
Since the matrix dimension is fixed, $U$ is independent of
$\epsilon$, and
\begin{align}
\|\widetilde E\|
=
\Theta(\epsilon),
\qquad
\|\Delta\widetilde E\|
=
\Theta(\epsilon),
\end{align}
 every block in \eqref{eq:app theorem E blocks} and \eqref{eq:app theorem Delta E blocks} is $O(\epsilon)$.

Before proving Criterion~\ref{thm:leading pure mode perturbation}, we show that the diagonal blocks associated with modes with symplectic eigenvalue $r_\tau$,
$\widetilde E_{22}$, $\widetilde E_{44}$,
$\Delta\widetilde E_{22}$, and
$\Delta\widetilde E_{44}$, are anti-Hermitian.
Since the covariance matrix of the reference state $\sigma_0$ is real
and symmetric, we have
\begin{align}
\left(
\sigma_0^{1/2}\Omega\sigma_0^{1/2}
\right)^\dagger
=
-\sigma_0^{1/2}\Omega\sigma_0^{1/2},
\end{align}
so $\sigma_0^{1/2}\Omega\sigma_0^{1/2}$ is anti-Hermitian.
Thus, it can be unitarily diagonalized as
\begin{align}
\sigma_0^{1/2}\Omega\sigma_0^{1/2}
=
V\Gamma V^\dagger,
\qquad
V^{-1}=V^\dagger , \label{eq:tm5 102}
\end{align}
where $\Gamma$ is diagonal and $V$ is the eigenbasis.
Since
\begin{align}
A_\tau
=
\Omega\sigma_0
=
\sigma_0^{-1/2}
\left(
\sigma_0^{1/2}\Omega\sigma_0^{1/2}
\right)
\sigma_0^{1/2}, \label{eq:tm5 103}
\end{align}
substituting \eqref{eq:tm5 102} into \eqref{eq:tm5 103}, we have
\begin{align}
A_\tau
=
\sigma_0^{-1/2}
V\Gamma V^\dagger
\sigma_0^{1/2}.
\end{align}
Thus, for diagonalizable matrix $A_\tau=U\Gamma_{\tau} U^{-1}$,
we take
\begin{align}
U
=
\sigma_0^{-1/2}V, \quad \Gamma = \Gamma_\tau, 
\quad
U^{-1}
=
V^\dagger\sigma_0^{1/2}. \label{eq:tm5 104}
\end{align}
Moreover, using \eqref{eq:tm5 104}, we have
\begin{align}
    U^{\dagger} = V^{\dagger} \sigma_0^{- \frac{1}{2}}  = V^{\dagger} \sigma_0^{ \frac{1}{2}} \sigma_0^{-1}    = U^{-1} \sigma_0^{-1}. \label{eq:app U inverse relation}
\end{align}
We now define the selection matrix that extracts the submatrix $\tilde{E}_{22}$ as
\begin{align}
e_+
=
\begin{bmatrix}
0_{n_{\rm g}\times n_{\rm e}}\\
I_{n_{\rm e}}\\
0_{n_{\rm g}\times n_{\rm e}}\\
0_{n_{\rm e}\times n_{\rm e}}
\end{bmatrix},
\end{align}
and
\begin{align}
U_+
=
Ue_+.
\end{align}
The columns of $U_+$ are the eigenvectors associated with the eigenvalue $ir_\tau$ since
\begin{align}
    A_\tau U_+ = A_\tau Ue_+ = U \Gamma_\tau e_+ = U ir_{\tau} e_+ = ir_{\tau} U_+.
\label{eq:app U plus eigen}
\end{align}
For $E=\Omega\Delta\sigma$, with
$\Delta\sigma=\Delta\sigma^\dagger$, we have
\begin{align}
\widetilde{E}_{22}  &= e_+^{\dagger} U^{-1} E U e_+ \\
& = (e_+^{\dagger} U^{\dagger} \sigma_0 \Omega) \Delta \sigma U e_+ \label{eq:app tm5 104.5} \\
& = -(A_\tau U_+)^{\dagger} \Delta \sigma U_+ \label{eq:app tm5 105} \\
& = ir_{\tau} U_+^{\dagger } \Delta \sigma U_+,\label{eq:app E22 Hermitian structure}
\end{align}
where \eqref{eq:app tm5 104.5} follows from \eqref{eq:app U inverse relation},  \eqref{eq:app tm5 105} follows from taking the adjoint of \eqref{eq:app U plus eigen} and using
$A_\tau^\dagger=-\sigma_0\Omega$, and \eqref{eq:app E22 Hermitian structure} follows from  \eqref{eq:app U plus eigen}.
Since $\Delta\sigma$ is Hermitian,
\begin{align}
\left(
U_+^\dagger\Delta\sigma U_+
\right)^\dagger
=
U_+^\dagger\Delta\sigma U_+,
\end{align}
thus, \eqref{eq:app E22 Hermitian structure} implies
\begin{align}
\widetilde E_{22}^\dagger
=
-\widetilde E_{22}.
\end{align}
Hence, $\widetilde E_{22}$ is anti-Hermitian.

The same argument applied to the eigenvalue $-ir_\tau$ shows that
$\widetilde E_{44}$ is anti-Hermitian. Replacing $\Delta\sigma$ by
$\Delta\sigma_1-\Delta\sigma_2$, which is also Hermitian, gives the
same result for $\Delta\widetilde E_{22}$ and
$\Delta\widetilde E_{44}$.
Consequently, for any of these blocks $B$,
\begin{align}
-\operatorname{tr}(B^2)
=
\operatorname{tr}(B^\dagger B)
=
\|B\|_F^2.
\label{eq:app anti Hermitian norm}
\end{align}
Since the matrix dimension is fixed, the Frobenius and operator norms have the same order of scaling. Therefore,
\begin{align}
\|B\|=\Theta(\epsilon)
\quad\Longleftrightarrow\quad
-\operatorname{tr}(B^2)=\Theta(\epsilon^2). \label{eq:app norm relation}
\end{align}
In the following subsections,
we show that the perturbative coefficients multiplying
$\widetilde E_{22}$ and $\widetilde E_{44}$ for the VNE, and
$\Delta\widetilde E_{22}$ and $\Delta\widetilde E_{44}$ for the
two-state quantities, exhibit the strongest singular behavior as
$\tau\rightarrow0$.

\subsection{Von Neumann entropy}

We first consider the VNE. Its first- and second-order
covariance-matrix contributions are
\begin{align}
S_{\rm VN}^{(1)}
&=
\frac12
\operatorname{tr}
\left[
E f'_{\rm VN}(A_\tau)
\right],
\label{eq:app F VNE first}
\\
S_{\rm VN}^{(2)}
&=
\frac14
\operatorname{tr}
\left[
E D^{(1)}_{f'_{\rm VN}(A_\tau)}[E]
\right].
\label{eq:app F VNE second}
\end{align}
We first consider the singular terms in the first-order contribution. In the eigenbasis of
$A_\tau$,
\begin{align}
S_{\rm VN}^{(1)}
=
\frac12
\operatorname{tr}
\left[
\widetilde E f'_{\rm VN}(\Gamma_\tau)
\right].
\end{align}
Since $f'_{\rm VN}(\Gamma_\tau)$ is diagonal, only the diagonal
blocks of $\widetilde E$ contribute:
\begin{align}
S_{\rm VN}^{(1)}
&=
\frac12
\operatorname{tr}
\left[
\widetilde E_{11}f'_{\rm VN}(i\Lambda_{\rm g})
\right]
+
\frac12
f'_{\rm VN}(ir_\tau)
\operatorname{tr}\widetilde E_{22}
\nonumber\\
&\quad+
\frac12
\operatorname{tr}
\left[
\widetilde E_{33}f'_{\rm VN}(-i\Lambda_{\rm g})
\right]
+
\frac12
f'_{\rm VN}(-ir_\tau)
\operatorname{tr}\widetilde E_{44}.
\label{eq:app F VNE first blocks}
\end{align}
For the symplectic eigenvalues $\nu_j>1/2$, we have
\begin{align}
f'_{\rm VN}(\pm i\nu_j)
=
O(1),
\end{align}
so the first and the third terms in \eqref{eq:app F VNE first blocks} are nonsingular. 
For the eigenvalues $\pm ir_\tau$, \eqref{eq:VN first derivative singular} in Appendix~\ref{app:singularity} gives
\begin{align}
f'_{\rm VN}(\pm ir_\tau) =
\Theta(|\log\tau|) .
\label{eq:app F VNE pure derivative}
\end{align}
with
\begin{align}
f'_{\rm VN}(-ir_\tau)
=
-f'_{\rm VN}(ir_\tau).
\end{align}
Therefore, the second and fourth terms in \eqref{eq:app F VNE first blocks} combine as
\begin{align}
\frac12
f'_{\rm VN}(ir_\tau)
\left(
\operatorname{tr}\widetilde E_{22}
-
\operatorname{tr}\widetilde E_{44}
\right).
\end{align}
Thus, by \eqref{eq:criterion 1} in Criterion~\ref{thm:leading pure mode perturbation},
\begin{align}
\left|
\operatorname{tr}\widetilde E_{22}
-
\operatorname{tr}\widetilde E_{44}
\right|
=
\Theta(\epsilon),
\label{eq:app F VNE trace condition}
\end{align}
then
\begin{align}
S_{\rm VN}^{(1)}
=
\Theta\!\left(
\epsilon|\log\tau|
\right).
\label{eq:app F VNE first scaling}
\end{align}
We next consider the singular terms in the second-order contribution in \eqref{eq:app F VNE second}. In the eigenbasis of
$A_\tau$,
\begin{align}
\left(
D^{(1)}_{f'_{\rm VN}(A_\tau)}[E]
\right)_{kl}
=
(f'_{\rm VN})^{[1]}[a_k,a_l]
\widetilde E_{kl},
\end{align}
so that
\begin{align}
S_{\rm VN}^{(2)}
=
\frac14
\sum_{k,l}
(f'_{\rm VN})^{[1]}[a_k,a_l]
\widetilde E_{lk}\widetilde E_{kl}.
\label{eq:app F VNE second sum}
\end{align}
Here, $a_k$ and $a_l$ are eigenvalues of $\Gamma_\tau$ in \eqref{eq:app theorem gamma}.
Equation \eqref{eq:VN diagonal DD singular} shows that
\begin{align}
(f'_{\rm VN})^{[1]}[\pm ir_\tau,\pm ir_\tau] =
\Theta(\tau^{-1}).
\label{eq:app F VNE same pure}
\end{align}
These coefficients multiply the entries within $\widetilde E_{22}$ and $\widetilde E_{44}$ blocks.
Furthermore, \eqref{eq: VNE opposite} shows
\begin{align}
(f'_{\rm VN})^{[1]}[\pm ir_\tau,\mp ir_\tau]
=
\Theta(|\log\tau|).
\label{eq:app F VNE pure mixed}
\end{align}
If one eigenvalue is of the form $\pm ir_\tau$ and the other is of
the form $\pm i\nu_j$ with $\nu_j>1/2$, the corresponding divided
difference scales as $f'_{\rm VN}(\pm ir_\tau)$, and therefore
\begin{align}
(f'_{\rm VN})^{[1]}[ ir_\tau,\pm i\nu_j], \  (f'_{\rm VN})^{[1]}[- ir_\tau,\pm i\nu_j] 
=
\Theta(|\log\tau|).
\end{align}
Finally, if both eigenvalues are of the form $\pm i\nu_j$ with $\nu_j>1/2$, the coefficients are nonsingular. 
Thus, the most singular coefficients in \eqref{eq:app F VNE second sum}  are those multiplying the $\widetilde E_{22}$ and $\widetilde E_{44}$ blocks, i.e., $k,l\in [n_{\rm g}+1,n_{\rm g}+n_{\rm e}]$ and $k,l\in [2n_{\rm g}+n_{\rm e}+1,2n_{\rm g}+2n_{\rm e}]$. 

Thus, 
\begin{align}
   & \frac{1}{4} \sum_{k,l\in [n_{\rm g}+1,n_{\rm g}+n_{\rm e}]}  (f'_{\rm VN})^{[1]}[a_k,a_l]
\widetilde E_{lk}\widetilde E_{kl} \nonumber \\
 = & \frac{1}{4} (f'_{\rm VN})^{[1]}[ir_\tau,ir_\tau] \sum_{k,l\in [n_{\rm g}+1,n_{\rm g}+n_{\rm e}]} 
\widetilde E_{lk}\widetilde E_{kl} \\
=&  -\frac{1}{4} (f'_{\rm VN})^{[1]}[ir_\tau,ir_\tau] \trace\left( \widetilde E_{22}  \widetilde E_{22}^{\dagger}\right) \label{eq:tm5 105} \\
=& -\frac{1}{4} (f'_{\rm VN})^{[1]}[ir_\tau,ir_\tau] \| \widetilde E_{22} \|_{F}^2 \label{eq:tm5 106}
\end{align}
where \eqref{eq:tm5 105} follows from the fact that $ \widetilde E_{22} $ is anti-Hermitian.
Similarly,
\begin{align}
       & \frac{1}{4} \sum_{k,l\in [2n_{\rm g}+n_{\rm e}+1,2n_{\rm g}+2n_{\rm e}] }  (f'_{\rm VN})^{[1]}[a_k,a_l]
\widetilde E_{lk}\widetilde E_{kl} \nonumber \\
=& -\frac{1}{4} (f'_{\rm VN})^{[1]}[-ir_\tau,-ir_\tau] \| \widetilde E_{44} \|_{F}^2. \label{eq:tm5 107}
\end{align}
From \eqref{eq:app F VNE trace condition}, at least one of
$|\operatorname{tr}\widetilde E_{22}|$ and
$|\operatorname{tr}\widetilde E_{44}|$ is $\Theta(\epsilon)$.
Since, for a fixed-dimensional matrix $B$,
\begin{align}
|\operatorname{tr}B|
\leq
\sqrt{n_{\rm e}}\,\|B\|_F,
\end{align}
and both blocks are $\Theta(\epsilon)$, at least one of
$\|\widetilde E_{22}\|_F$ and
$\|\widetilde E_{44}\|_F$ is $\Theta(\epsilon)$.
Furthermore, it follows from \eqref{eq:app F VNE same pure} that either 
\eqref{eq:tm5 106} or \eqref{eq:tm5 107} is $ \Theta(\epsilon^2 \tau^{-1} )$. Since $(f'_{\rm VN})^{[1]}[-ir_\tau,-ir_\tau] = (f'_{\rm VN})^{[1]}[ir_\tau,ir_\tau] $, \eqref{eq:tm5 106} and  \eqref{eq:tm5 107} have the
same sign and cannot cancel each other.
All remaining terms in \eqref{eq:app F VNE second sum} are 
asymptotically smaller than $\Theta(\epsilon^2\tau^{-1})$ as
$\tau\rightarrow0$.
Thus,
\begin{align}
    S_{\rm VN}^{(2)} = \Theta(\epsilon^2 \tau^{-1} ). \label{eq:app F VNE second scaling}
\end{align}
Therefore, under the condition \eqref{eq:app F VNE trace condition}, the leading VNE variation is determined by the modes with symplectic eigenvalue $r_\tau$. Tracing out the modes with symplectic eigenvalue $\nu_j>1/2$ therefore does not change the coefficients of the leading perturbative scaling of the VNE variation.

Using the scaling $\tau=\Theta(\epsilon)$ from
Sec.~\ref{subsec: pure scaling}, the leading first-order contribution \eqref{eq:app F VNE first scaling} yields 
\begin{align}
S_{\rm VN}^{(1)}
=
\Theta(\epsilon|\log\epsilon|),
\end{align}
while the second-order contribution \eqref{eq:app F VNE second scaling} yields
\begin{align}
S_{\rm VN}^{(2)}
=
\Theta(\epsilon).
\end{align}

\subsection{Quantum relative entropy}

We next consider the QRE. As shown in
Sec.~\ref{sec: perturbative formula}, the leading
covariance-matrix contribution is
\begin{align}
D_{\rm RE,CM}^{(2)}
&=
-\frac14
\operatorname{tr}
\left[
\Delta E
D^{(1)}_{f_{\rm RE}(A_\tau)}[\Delta E]
\right] \nonumber\\
& = -\frac14
\operatorname{tr}
\left[
\Delta \tilde{E}
D^{(1)}_{f_{\rm RE}(\Gamma_\tau)}[\Delta \tilde{E}]
\right]
\label{eq:app F QRE second}
\end{align}
where $\Delta E = E_1-E_2$.
In the eigenbasis of $A_\tau$,
\begin{align}
\left(D^{(1)}_{f_{\rm RE}(\Gamma_\tau)}[\Delta \tilde{E}]
\right)_{kl}
=
f_{\rm RE}^{[1]}[a_k,a_l]
\Delta\widetilde E_{kl},
\end{align}
so that
\begin{align}
D_{\rm RE,CM}^{(2)}
=
-\frac14
\sum_{k,l}
f_{\rm RE}^{[1]}[a_k,a_l]
\Delta\widetilde E_{lk}
\Delta\widetilde E_{kl}.
\label{eq:app F QRE second sum}
\end{align}
Here, $a_k$ and $a_l$ are eigenvalues of $\Gamma_\tau$ in \eqref{eq:app theorem gamma}.

We now identify the most singular perturbative coefficients in $ f_{\rm RE}^{[1]}[a_k,a_l] $.
In \eqref{eq:RE diagonal singularity}, we show
\begin{align}
f_{\rm RE}^{[1]}[\pm ir_\tau,\pm ir_\tau]
=
\Theta(\tau^{-1}).
\label{eq:app F QRE same pure}
\end{align}
These coefficients multiply the entries within
$\Delta\widetilde E_{22}$ and
$\Delta\widetilde E_{44}$ blocks.
In \eqref{eq:RE off diagonal singularity}, we show
\begin{align}
f_{\rm RE}^{[1]}[\pm ir_\tau,\mp ir_\tau]
=
\Theta(|\log\tau|).
\label{eq:app F QRE pure mixed}
\end{align}
If one argument is $\pm ir_\tau$ and the other is $\pm i\nu_j$ with $\nu_j>1/2$, the corresponding divided
difference scales as
\begin{align}
f_{\rm RE}^{[1]}[ ir_\tau,\pm i\nu_j], f_{\rm RE}^{[1]}[- ir_\tau,\pm i\nu_j] 
=
\Theta(|\log\tau|).
\end{align}
Finally, if both arguments are of the form $\pm i\nu_j$ with $\nu_j>1/2$, the coefficient is nonsingular.
Therefore, 
the most singular QRE coefficients are those multiplying the entries within 
$\Delta\widetilde E_{22}$ and
$\Delta\widetilde E_{44}$ blocks, i.e., $k,l\in [n_{\rm g}+1,n_{\rm g}+n_{\rm e}]$ and $k,l\in [2n_{\rm g}+n_{\rm e}+1,2n_{\rm g}+2n_{\rm e}]$.

The contributions
from the $\Delta\widetilde E_{22} $ and  $\Delta\widetilde E_{44}$ blocks to \eqref{eq:app F QRE second sum} are 
\begin{align}
    &-\frac14
\sum_{k,l\in [n_{\rm g}+1,n_{\rm g}+n_{\rm e}]}
f_{\rm RE}^{[1]}[a_k,a_l]
\Delta\widetilde E_{lk}
\Delta\widetilde E_{kl} \nonumber \\
=& -\frac14  f_{\rm RE}^{[1]}[i r_{\tau},i r_{\tau}]  \sum_{k,l\in [n_{\rm g}+1,n_{\rm g}+n_{\rm e}]}  \Delta\widetilde E_{lk}
\Delta\widetilde E_{kl}  \\
= & - \frac14  f_{\rm RE}^{[1]}[i r_{\tau},i r_{\tau}]  \trace\left( (\Delta\widetilde E_{22})^2  \right), \label{eq:app thm 5 QRE 101}
\end{align}
and 
\begin{align}
    &-\frac14
\sum_{k,l\in [2n_{\rm g}+n_{\rm e}+1,2n_{\rm g}+2n_{\rm e}]}
f_{\rm RE}^{[1]}[a_k,a_l]
\Delta\widetilde E_{lk}
\Delta\widetilde E_{kl} \nonumber \\
=& - \frac14  f_{\rm RE}^{[1]}[-i r_{\tau},-i r_{\tau}]  \sum_{k,l\in [2n_{\rm g}+n_{\rm e}+1,2n_{\rm g}+2n_{\rm e}]}  \Delta\widetilde E_{lk}
\Delta\widetilde E_{kl}  \\
= & - \frac14  f_{\rm RE}^{[1]}[-i r_{\tau},-i r_{\tau}]  \trace\left( (\Delta\widetilde E_{44})^2  \right). \label{eq:app thm 5 QRE 102}
\end{align}
Therefore, under the condition \eqref{eq:criterion 2}
\begin{align}
\max\left\{
\|\Delta\widetilde E_{22}\|,
\|\Delta\widetilde E_{44}\|
\right\}
=
\Theta(\epsilon),
\label{eq:app F QRE condition}
\end{align}
\eqref{eq:app norm relation} implies that at least one of
$-\operatorname{tr}
\left[
(\Delta\widetilde E_{22})^2
\right]$ or $-
\operatorname{tr}
\left[
(\Delta\widetilde E_{44})^2
\right]$
is $\Theta(\epsilon^2)$.
Together with  \eqref{eq:app F QRE same pure}, at least one of \eqref{eq:app thm 5 QRE 101} or \eqref{eq:app thm 5 QRE 102} is $\Theta(\epsilon^2\tau^{-1})$. Furthermore, \eqref{eq:RE diagonal singularity} gives   $f_{\rm RE}^{[1]}[-i r_{\tau},-i r_{\tau}]  = f_{\rm RE}^{[1]}[i r_{\tau},i r_{\tau}]   $. Thus, 
the two contributions \eqref{eq:app thm 5 QRE 101} and \eqref{eq:app thm 5 QRE 102} have the same sign and cannot cancel.
All remaining terms in \eqref{eq:app F QRE second sum} are at most
$O(\epsilon^2|\log\tau|)$ or $O(\epsilon^2)$ and are therefore
asymptotically subleading. Therefore,
\begin{align}
D_{\rm RE,CM}^{(2)}
=
\Theta(\epsilon^2\tau^{-1}).
\end{align}
Therefore, under the condition \eqref{eq:app F QRE condition}, the leading QRE is determined by the modes with symplectic eigenvalue $r_\tau$. Tracing out the modes with symplectic eigenvalue $\nu_j>1/2$ does not change the coefficients of the leading perturbative scaling of QRE.

Using the scaling $\tau=\Theta(\epsilon)$ from
Sec.~\ref{subsec: pure scaling}, the 
covariance-matrix contribution satisfies 
\begin{align}
D_{\rm RE,CM}^{(2)}
=
\Theta(\epsilon).
\end{align}

\subsection{Quantum Chernoff quantity}

We next consider the quantum Chernoff quantity for a fixed
$s\in[0,1]$. As shown in Sec.~\ref{sec: perturbative formula}, the
leading covariance-matrix contribution is
\begin{align}
\xi_{{\rm CE},{\rm CM},s}^{(2)}
&=
\frac14
\operatorname{tr}
\left[
\left(\frac{1}{R_{\rm CE}^{(s)}}\right)(A_\tau)
D^{(1)}_{g_{\rm CE}^{(s)}(A_\tau)}[\Delta E]
\right.
\nonumber\\
&\left.
\qquad\times
\left(\frac{1}{R_{\rm CE}^{(s)}}\right)(A_\tau)
D^{(1)}_{g_{\rm CE}^{(1-s)}(A_\tau)}[\Delta E]
\right],
\label{eq:app F QCE second}
\end{align}
where $\Delta E=E_1-E_2$.
In the eigenbasis of $A_\tau$, let
\begin{align}
H_s
=
\left(1/R_{\rm CE}^{(s)}\right)(\Gamma_\tau).
\end{align}
Since $H_s$ is diagonal, \eqref{eq:app F QCE second} becomes
\begin{align}
\xi_{{\rm CE},{\rm CM},s}^{(2)}
& =
\frac14
\sum_{k,l}
(H_s)_{kk}
(H_s)_{ll}
\left(g_{\rm CE}^{(s)}\right)^{[1]}[a_k,a_l] \nonumber\\
&\qquad \times
\left(g_{\rm CE}^{(1-s)}\right)^{[1]}[a_l,a_k]
\Delta\widetilde E_{kl}
\Delta\widetilde E_{lk}.
\label{eq:app F QCE second sum}
\end{align}
Here, $a_k$ and $a_l$ are eigenvalues of $\Gamma_\tau$ in \eqref{eq:app theorem gamma}.
We now identify the most singular perturbative coefficients in
\eqref{eq:app F QCE second sum}. 
Equation \eqref{eq:CE RA non singular} shows that the factors 
\begin{align}
     \left(1/R_{\rm CE}^{(s)}\right)(\pm ir_\tau)
\end{align}
 remain finite and nonzero as $\tau\rightarrow0$. Thus, $(H_s)_{kk}$ and $(H_s)_{ll}$ are nonsingular.
Equations \eqref{eq:CE diagonal s singularity} and  \eqref{eq:CE diagonal 1-s singularity} give 
\begin{align}
\left(g_{\rm CE}^{(s)}\right)'(\pm ir_\tau)
&=
\Theta(\tau^{s-1}),
\\
\left(g_{\rm CE}^{(1-s)}\right)'(\pm ir_\tau)
&=
\Theta(\tau^{-s}).
\end{align}
Therefore,
\begin{align}
\left(g_{\rm CE}^{(s)}\right)'(\pm ir_\tau)
\left(g_{\rm CE}^{(1-s)}\right)'(\pm ir_\tau)
=
\Theta(\tau^{-1}).
\label{eq:app F QCE same pure}
\end{align}
 Hence, the
coefficients multiplying the entries within the
$\Delta\widetilde E_{22}$ and $\Delta\widetilde E_{44}$ blocks scale
as $\Theta(\tau^{-1})$.
Furthermore, \eqref{eq:CE gs non singular 2} gives that for
\begin{align}
a_k=\pm ir_\tau,
\qquad
a_l=\mp ir_\tau,
\end{align}
the corresponding divided differences remain finite as
$\tau\rightarrow0$.
Similarly, if one eigenvalue is $\pm ir_\tau$ and the
other is $\pm i\nu_j$ with $\nu_j>1/2$, the corresponding 
coefficients are nonsingular.
Finally, if both eigenvalues are of the form $\pm i\nu_j$ with
$\nu_j>1/2$, the coefficients are nonsingular.

Therefore, the most singular coefficients in \eqref{eq:app F QCE second sum} are those multiplying the
$\Delta\widetilde E_{22}$ and $\Delta\widetilde E_{44}$ blocks, i.e., 
$k,l\in[n_{\rm g}+1,n_{\rm g}+n_{\rm e}]$
and 
$k,l\in
[2n_{\rm g}+n_{\rm e}+1,2n_{\rm g}+2n_{\rm e}]$,
respectively.
Define
\begin{align}
C_s^{+}(\tau)
&\!=\!
\left[
\left( 1/{R_{\rm CE}^{(s)}}\right)(ir_\tau)
\right]^2
\left(g_{\rm CE}^{(s)}\right)'(ir_\tau)
\left(g_{\rm CE}^{(1-s)}\right)'(ir_\tau),
\\
C_s^{-}(\tau)
&=
\left[
\left(1/{R_{\rm CE}^{(s)}}\right)(-ir_\tau)
\right]^2
\left(g_{\rm CE}^{(s)}\right)'(-ir_\tau) \nonumber\\
&\qquad\qquad\qquad\times
\left(g_{\rm CE}^{(1-s)}\right)'(-ir_\tau).
\end{align}
Using
$\left(1/R_{\rm CE}^{(s)}\right)(ir_\tau)
=
-\left(1/R_{\rm CE}^{(s)}\right)(-ir_\tau)$, together with \eqref{eq:CE diagonal s singularity} and
\eqref{eq:CE diagonal 1-s singularity}, we obtain
\begin{align}
C_s^+(\tau)
=
C_s^-(\tau)
=
\Theta(\tau^{-1}).
\label{eq:app tm5 100}
\end{align}
The contribution from the $\Delta\widetilde E_{22}$ block is
\begin{align}
&\frac14
\sum_{k,l\in[n_{\rm g}+1,n_{\rm g}+n_{\rm e}]}
(H_s)_{kk}(H_s)_{ll}
\left(g_{\rm CE}^{(s)}\right)^{[1]}[a_k,a_l] \nonumber\\
& \qquad\qquad \qquad \times 
\left(g_{\rm CE}^{(1-s)}\right)^{[1]}[a_l,a_k]
\Delta\widetilde E_{kl}\Delta\widetilde E_{lk}
\nonumber\\
=&
\frac14
C_s^+(\tau)
\operatorname{tr}
\left[
(\Delta\widetilde E_{22})^2 \right] \nonumber\\
=&
-\frac14
C_s^+(\tau)
\|\Delta\widetilde E_{22}\|_F^2,
\label{eq:app thm 5 QCE 101}
\end{align}
where the last equality follows from the anti-Hermiticity of
$\Delta\widetilde E_{22}$. Similarly, the contribution from the
$\Delta\widetilde E_{44}$ block is
\begin{align}
-\frac14
C_s^-(\tau)
\|\Delta\widetilde E_{44}\|_F^2.
\label{eq:app thm 5 QCE 102}
\end{align}
Under the condition \eqref{eq:criterion 2},
\begin{align}
\max\left\{
\|\Delta\widetilde E_{22}\|,
\|\Delta\widetilde E_{44}\|
\right\}
=
\Theta(\epsilon),
\label{eq:app F QCE condition}
\end{align}
\eqref{eq:app norm relation} implies that at least one of $\|\Delta\widetilde E_{22}\|_F^2 $ or $\|\Delta\widetilde E_{44}\|_F^2$
is $\Theta(\epsilon^2)$. Therefore, it follows from \eqref{eq:app tm5 100} that at least one of \eqref{eq:app thm 5 QCE 101} and
\eqref{eq:app thm 5 QCE 102} is
$\Theta(\epsilon^2\tau^{-1})$.
Since $C_s^+(\tau)=C_s^-(\tau)$ as in \eqref{eq:app tm5 100}, the contributions in \eqref{eq:app thm 5 QCE 101} and
\eqref{eq:app thm 5 QCE 102} cannot cancel. All remaining terms in
\eqref{eq:app F QCE second sum} are at most $O(\epsilon^2)$ and
are therefore asymptotically subleading. Hence,
\begin{align}
\xi_{{\rm CE},{\rm CM},s}^{(2)}
=
\Theta(\epsilon^2\tau^{-1}).
\end{align}
Therefore, under the condition \eqref{eq:app F QCE condition}, the leading quantum Chernoff
quantity is determined by the modes with symplectic eigenvalue $r_\tau$.
Tracing out the modes whose symplectic eigenvalues satisfy
$\nu_j>1/2$ does not change the coefficients of the leading perturbative scaling. 

Using the scaling $\tau=\Theta(\epsilon)$ from
Sec.~\ref{subsec: pure scaling}, the covariance-matrix contribution
satisfies
\begin{align}
\xi_{{\rm CE},{\rm CM},s}^{(2)}
=
\Theta(\epsilon).
\end{align}

\subsection{Quantum fidelity}

We finally consider the QF. As shown in
Sec.~\ref{sec: perturbative formula}, the leading covariance-matrix
contribution is
\begin{align}
F_{\rm CM}^{(2)}
=
-\sum_{k,l}
\frac{
\Delta\widetilde E_{kl}
\Delta\widetilde E_{lk}
}{
4+16a_ka_l
}.
\label{eq:app F QF second}
\end{align}
Here, $a_k$ and $a_l$ are eigenvalues of $\Gamma_\tau$ in
\eqref{eq:app theorem gamma}.

We now identify the most singular perturbative coefficients in
\eqref{eq:app F QF second}. For $a_k$ and $a_l$ of the same sign
$\pm ir_\tau$, the denominator of \eqref{eq:app F QF second} satisfies
\begin{align}
4+16(\pm ir_\tau)^2
&=
4-16r_\tau^2
\nonumber\\
&=
-16\tau(1+\tau)
=
\Theta(\tau).
\label{eq:app F QF same}
\end{align}
Therefore, the corresponding coefficients scale as
$\Theta(\tau^{-1})$.
For eigenvalues $\pm ir_\tau$ and $\mp ir_\tau$, the denominator is
\begin{align}
4+16(ir_\tau)(-ir_\tau)
=
4+16r_\tau^2,
\end{align}
which remains nonzero as $\tau\rightarrow0$.
Furthermore, if at least one eigenvalue is of the form $\pm i\nu_j$ with
$\nu_j>1/2$, the corresponding coefficient remains nonsingular as
$\tau\rightarrow0$.

Therefore, the most singular coefficients in
\eqref{eq:app F QF second} are those multiplying the entries within
$\Delta\widetilde E_{22}$ and $\Delta\widetilde E_{44}$, i.e.,
$k,l\in[n_{\rm g}+1,n_{\rm g}+n_{\rm e}]$ and
$k,l\in[2n_{\rm g}+n_{\rm e}+1,2n_{\rm g}+2n_{\rm e}]$,
respectively.
The contribution from the $\Delta\widetilde E_{22}$ block to
\eqref{eq:app F QF second} is
\begin{align}
\sum_{k,l\in[n_{\rm g}+1,n_{\rm g}+n_{\rm e}]} -
\frac{
\Delta\widetilde E_{kl}
\Delta\widetilde E_{lk}
}{
4+16a_ka_l
}
=&
\frac{\trace \left[
(\Delta\widetilde E_{22})^2
\right]}{16\tau(1+\tau)}
\nonumber\\
=&
-\frac{
\|\Delta\widetilde E_{22}\|_F^2}{16\tau(1+\tau)},
\label{eq:app thm 5 QF 101}
\end{align}
where the last equality follows from the anti-Hermiticity of
$\Delta\widetilde E_{22}$. Similarly, the contribution from the
$\Delta\widetilde E_{44}$ block is
\begin{align}
-\frac{\|\Delta\widetilde E_{44}\|_F^2}{16\tau(1+\tau)}.
\label{eq:app thm 5 QF 102}
\end{align}
Under the condition \eqref{eq:criterion 2},
\begin{align}
\max\left\{
\|\Delta\widetilde E_{22}\|,
\|\Delta\widetilde E_{44}\|
\right\}
=
\Theta(\epsilon),
\label{eq:app F QF condition}
\end{align}
\eqref{eq:app norm relation} implies
\begin{align}
\|\Delta\widetilde E_{22}\|_F^2
+
\|\Delta\widetilde E_{44}\|_F^2
=
\Theta(\epsilon^2).
\end{align}
Combining \eqref{eq:app thm 5 QF 101} and
\eqref{eq:app thm 5 QF 102},  we have
\begin{align}
-\frac{
\|\Delta\widetilde E_{22}\|_F^2
+
\|\Delta\widetilde E_{44}\|_F^2}{16\tau(1+\tau)} = \Theta(\epsilon^2\tau^{-1}), 
\end{align}
and cannot cancel.

All remaining terms in \eqref{eq:app F QF second} are at most
$O(\epsilon^2)$ and are therefore asymptotically subleading. Hence,
the leading covariance-matrix contribution to the quantum infidelity satisfies
\begin{align}
1-F
=
\Theta(\epsilon^2\tau^{-1}).
\end{align}
Therefore, under the condition
\eqref{eq:app F QF condition}, the leading infidelity is determined by
the modes with symplectic eigenvalue $r_\tau$. Tracing out the modes whose
symplectic eigenvalues satisfy $\nu_j>1/2$ does not change its leading
asymptotic scaling.

Using the scaling $\tau=\Theta(\epsilon)$ from
Sec.~\ref{subsec: pure scaling}, we obtain
\begin{align}
1-F
=
\Theta(\epsilon).
\end{align}

\subsection{Summary}

The analysis in this section shows that the strongest singular perturbative
coefficients are associated with the modes whose symplectic eigenvalues are $r_\tau$. For the VNE, under the condition
\begin{align}
\left|
\operatorname{tr}\widetilde E_{22}
-
\operatorname{tr}\widetilde E_{44}
\right|
=
\Theta(\epsilon), \label{eq:tm5 condtion 1}
\end{align}
the leading variation scales as
\begin{align}
S_{\rm VN}^{(1)}
=
\Theta(\epsilon|\log\epsilon|).
\end{align}
For the QRE, quantum Chernoff quantity, and QF, under the condition
\begin{align}
\max\left\{
\|\Delta\widetilde E_{22}\|,
\|\Delta\widetilde E_{44}\|
\right\}
=
\Theta(\epsilon), \label{eq:tm5 condtion 2}
\end{align}
the leading covariance-matrix contributions satisfy
\begin{align}
D_{\rm RE,CM}^{(2)}
&=
\Theta(\epsilon),\\
\xi_{{\rm CE},{\rm CM},s}^{(2)}
&=
\Theta(\epsilon),\\
1-F
&=
\Theta(\epsilon).
\end{align}
Here, $\tau=\Theta(\epsilon)$ has been used.

In each case, contributions involving eigenspaces associated with symplectic eigenvalues $\nu_j>1/2$ are asymptotically subleading. Therefore, under the corresponding conditions, \eqref{eq:tm5 condtion 1} and \eqref{eq:tm5 condtion 2}, tracing out the modes whose symplectic eigenvalues satisfy $\nu_j>1/2$ does not change the coefficients of the leading perturbative scaling. We emphasize that this conclusion is restricted to perturbative contributions through second order. Higher-order contributions are not considered in this analysis.

\section{Derivation of quantum illumination results} \label{appendix:Ali}
Throughout this appendix, we consider the asymptotic regime used in \cite{AliCox2024}, in which
$\kappa^{(i)},\bar{n}\rightarrow0$ and
$\bar{n}_{\rm B}\rightarrow\infty$ such that 
\begin{align}
\kappa^{(i)}
&\in
o\left(
\frac{\bar{n}}{\bar{n}_{\rm B}}
\right),
\quad
\bar{n}
\in
O\left(
e^{-\bar{n}_{\rm B}}
\right),
\quad
1\in o(\bar{n}_{\rm B}).
\label{eq:Ali asymptotic regime}
\end{align}
\subsection{Coherent-state input}
We first consider the coherent-state input and show that, in the asymptotic regime of \eqref{eq:Ali asymptotic regime}, the leading perturbative quantum Chernoff quantity arises from the mean-vector contribution. The mean-vector difference between the two hypotheses is given by
\cite[Eq.~(6)]{AliCox2024}
\begin{align}
\delta_{C,d}
&=
\mu_C^{(1)}-\mu_C^{(2)}
\nonumber\\
&=
\sqrt{2\bar{n}}
\begin{bmatrix}
\sqrt{\kappa^{(1)}}\,C^{(1)}
-
\sqrt{\kappa^{(2)}}\,C^{(2)}
\\
0_{n\times 1}
\end{bmatrix}.
\label{eq:Ali coherent mean-vector}
\end{align}
The covariance matrix under hypotheses $k = 1,2$ is   \cite[Eq.~(5)]{AliCox2024}
\begin{align}
    V_C^{(k)} &=  \begin{bmatrix}
   N_C  &  0_{n\times n } \\
        0_{n\times n } & N_C
    \end{bmatrix} \\
    N_C&=   \left( \Bar{n}_{\rm B}+\frac{1}{2} \right)I_n - \kappa^{(k)} \Bar{n}_{\rm B} C^{(k)}C^{(k)T}.  
\end{align}
We choose the reference covariance matrix as
\begin{align}
\sigma_{C,0}
=
\begin{bmatrix}
\left(\bar{n}_{\rm B}+\frac{1}{2}\right)I_n
&
0_{n\times n}
\\
0_{n\times n}
&
\left(\bar{n}_{\rm B}+\frac{1}{2}\right)I_n
\end{bmatrix},
\end{align}
such that
\begin{align}
A_C
=
\Omega\sigma_{C,0}.
\end{align}
For this reference matrix,
\begin{align}
\sqrt{-A_C^2}
=
\left(
\bar{n}_{\rm B}+\frac{1}{2}
\right)I_{2n}.
\end{align}
Thus,
\begin{align}
\Omega g_{\rm CE}^{(1/2)}(A_C)
&=
\sigma_{C,0}
\left(\sqrt{-A_C^2}\right)^{-1}
F^{(1/2)}
\left(\sqrt{-A_C^2}\right)
\nonumber\\
&=
F^{(1/2)}
\left(
\bar{n}_{\rm B}+\frac{1}{2}
\right)
I_{2n}
\nonumber\\
&=
\left(
\sqrt{\bar{n}_{\rm B}+1}
+
\sqrt{\bar{n}_{\rm B}}
\right)^2
I_{2n}.
\end{align}
Therefore, using
\begin{align}
R_{\rm CE}^{(1/2)}(A_C)
=
2g_{\rm CE}^{(1/2)}(A_C),
\end{align}
we obtain
\begin{align}
\!\!(1/ R_{\rm CE}^{(1/2)})(A_C)  \Omega^{-1}
\!=\!
\frac{1}{2}
\left(
\sqrt{\bar{n}_{\rm B}+1}
+
\sqrt{\bar{n}_{\rm B}}
\right)^{-2}
I_{2n}.
\end{align}
The mean-vector contribution to the perturbative QCE is thus
\begin{align}
\xi_{C,d}^{(2)}
&=
\delta_{C,d}^T
(1/ R_{\rm CE}^{(1/2)})(A_C)  \Omega^{-1} 
\delta_{C,d}
\nonumber\\
&=
\bar{n}
\left(
\sqrt{\bar{n}_{\rm B}+1}
+
\sqrt{\bar{n}_{\rm B}}
\right)^{-2}
\nonumber\\
&\qquad\times
\left\|
\sqrt{\kappa^{(1)}}\,C^{(1)}
-
\sqrt{\kappa^{(2)}}\,C^{(2)}
\right\|_F^2.
\label{eq:Ali coherent exact}
\end{align}
In the high-noise regime $\bar{n}_{\rm B}\gg1$,
\begin{align}
\left(
\sqrt{\bar{n}_{\rm B}+1}
+
\sqrt{\bar{n}_{\rm B}}
\right)^{-2}
\approx
\frac{1}{4\bar{n}_{\rm B}}.
\end{align}
Therefore, \eqref{eq:Ali coherent exact} yields
\begin{align}
\xi_{C,d}^{(2)}
\simeq
\frac{\bar{n}}{4\bar{n}_{\rm B}}
\left\|
\sqrt{\kappa^{(1)}}\,C^{(1)}
-
\sqrt{\kappa^{(2)}}\,C^{(2)}
\right\|_F^2.
\label{eq:Ali coherent mean-vector asymptotic}
\end{align}
We next consider the covariance-matrix contribution. The covariance
perturbation satisfies
\begin{align}
\| V_C^{(1)} -  V_C^{(2)}\|_F
=
O\left(
\kappa^{(k)}\bar{n}_{\rm B}
\right), \ k = 1,2.
\end{align}
Under the asymptotic condition in \eqref{eq:Ali asymptotic regime} that 
$\kappa^{(k)}\bar{n}_{\rm B}
=
o(\bar{n})$ as
$ \bar{n}\rightarrow 0 $, we have
\begin{align}
\|E_1-E_2\|_F
=
o(\bar{n}).
\end{align}
Since $\sigma_{C,0}$ has symplectic eigenvalues all greater than $1/2$, the covariance-matrix contribution is quadratic in
$E_1-E_2$. It follows that
\begin{align}
\xi_{C,{\rm CM}}^{(2)}
=
o(\bar{n}^2).
\end{align}
Hence, the covariance-matrix contribution is asymptotically  negligible compared with the mean-vector
contribution, which is $O(\bar{n})$ .
Therefore, the leading perturbative quantum Chernoff quantity for the
coherent-state input is
\begin{align}
\xi_C^{(2)}
\simeq
\frac{\bar{n}}{4\bar{n}_{\rm B}}
\left\|
\sqrt{\kappa^{(1)}}\,C^{(1)}
-
\sqrt{\kappa^{(2)}}\,C^{(2)}
\right\|_F^2,
\label{eq:Ali Coh}
\end{align}
which is quadratic in $\|\delta_d\|$, and reproduces the coherent-state result obtained in \cite{AliCox2024}.

\subsection{TMSV input}
We next consider the TMSV input. In contrast to the coherent-state case, the mean vectors are zero, so the perturbative quantum Chernoff quantity is determined entirely by the covariance-matrix contribution. We therefore begin by choosing the reference matrix corresponding to the zeroth-order covariance matrix in \cite[Eq.~(7)]{AliCox2024}
\begin{align}
A_Q = \Omega \sigma_{Q,0}
&=
\Omega\,
\diag\left(
\underbrace{\bar{n}_{\rm B}+\frac{1}{2},
\dots,
\bar{n}_{\rm B}+\frac{1}{2}}_{n},
\bar{n}+\frac{1}{2},
\right.
\nonumber\\
&\qquad\left.
\underbrace{\bar{n}_{\rm B}+\frac{1}{2},
\dots,
\bar{n}_{\rm B}+\frac{1}{2}}_{n},
\bar{n}+\frac{1}{2}
\right),
\end{align}
and for $k=1,2$, write the perturbation matrices
\begin{align}
E_k
=
\Omega
\begin{bmatrix}
E_+^{(k)} & 0_{(n+1)\times(n+1)}
\\
0_{(n+1)\times(n+1)} & E_-^{(k)}
\end{bmatrix},
\end{align}
where
\begin{align}
E_\pm^{(k)} 
&=
\begin{bmatrix}
-\kappa^{(k)}
(\bar{n}_{\rm B}-\bar{n})
C^{(k)}C^{(k)T}
&
\pm
D_Q C^{(k)}
\\
\pm
D_Q  C^{(k)T}
&
0
\end{bmatrix}\\
D_Q &= \sqrt{\kappa^{(k)}\bar{n}(\bar{n}+1)}.
\end{align}
Defining the difference of the perturbation matrices as
\begin{align}
\Delta E
=
E_1-E_2,
\end{align}
we have
\begin{align}
\Delta E
=
\Omega
\begin{bmatrix}
\Delta E_+ & 0
\\
0 & \Delta E_-
\end{bmatrix}
=
\begin{bmatrix}
0 & \Delta E_-
\\
-\Delta E_+ & 0
\end{bmatrix},
\label{eq:Ali TMSV Delta E}
\end{align}
where
\begin{align}
\Delta E_\pm
&=
\begin{bmatrix}
\Delta K
&
\pm \Delta C
\\
\pm \Delta C^T
&
0
\end{bmatrix},
\label{eq:Ali TMSV Delta Epm}
\end{align}
with
\begin{align}
\Delta K
&=
(\bar{n}-\bar{n}_{\rm B})
\left[
\kappa^{(1)}C^{(1)}C^{(1)T}
-
\kappa^{(2)}C^{(2)}C^{(2)T}
\right], \label{eq:QF dk}
\\
\Delta C
&=
\sqrt{\bar{n}(\bar{n}+1)}
\left[
\sqrt{\kappa^{(1)}}\,C^{(1)}
-
\sqrt{\kappa^{(2)}}\,C^{(2)}
\right]. \label{eq:QF dc}
\end{align}
Since the symplectic eigenvalues of $\sigma_{Q,0}$ are all greater than $1/2$, we use the perturbative formula for QCE in Theorem \ref{thm:QCE-perturbation}.
At $s=\frac{1}{2}$, using $A_Q^{-1}\sqrt{-A_Q^2} = A_Q^{-1} \sigma_{Q,0}
=
\Omega^{-1}$, we have
\begin{align}
\left(1/R_{\rm CE}^{(1/2)}\right) (A_Q)
&=
\frac{1}{2}\left(1/g_{\rm CE}^{(1/2)}\right) (A_Q)
\nonumber\\
&=
\frac{1}{2}
\Omega
\left(1/F^{(1/2)}\right) 
\left(\sqrt{-A_Q^2}\right). \label{eq:QF R 1/2}
\end{align}
Since $ \bar{n}_{\rm B} \gg 1$ and $\bar{n} \ll 1 $, we have
\begin{align}
\left(1/F^{(1/2)}\right)
\left(
\bar{n}_{\rm B}+\frac{1}{2}
\right)
&=
\left(
\sqrt{\bar{n}_{\rm B}+1}
+
\sqrt{\bar{n}_{\rm B}}
\right)^{-2}
\nonumber\\
&\approx
\frac{1}{4\bar{n}_{\rm B}},
\\
\left(1/F^{(1/2)}\right) 
\left(
\bar{n}+\frac{1}{2}
\right)
&=
\left(
\sqrt{\bar{n}+1}
+
\sqrt{\bar{n}}
\right)^{-2}
\nonumber\\
&\approx
1,
\end{align}
thus, \eqref{eq:QF R 1/2} yields
\begin{align}
\left(1/R_{\rm CE}^{(1/2)}\right) (A_Q)
&=
\frac{1}{2}
\Omega
\begin{bmatrix}
Q & 0
\\
0 & Q
\end{bmatrix},
\label{eq:Ali R inverse}
\end{align}
where
\begin{align}
Q
=
\diag\left(
\underbrace{
\frac{1}{4\bar{n}_{\rm B}},
\dots,
\frac{1}{4\bar{n}_{\rm B}}
}_{n},
1
\right).
\end{align}
We next evaluate the Fr\'echet derivative
$D^{(1)}_{g_{\rm CE}^{(1/2)}(A_Q)}[\Delta E]$.
Let
\begin{align}
A_Q
=
W\Lambda W^\dagger,
\end{align}
where $W = \frac{1}{\sqrt{2}} \begin{bmatrix}
    I_{n+1} &I_{n+1}\\
    iI_{n+1} &-iI_{n+1}\\
\end{bmatrix}$, and 
\begin{align}
\Lambda
&=
i \diag\left(
 \underbrace{
 \left(\bar{n}_{\rm B}+\frac{1}{2}\right),
\dots,
 \left(\bar{n}_{\rm B}+\frac{1}{2}\right)
}_{n},
 \left(\bar{n}+\frac{1}{2}\right),\right.
\nonumber\\
&\left.\underbrace{
- \left(\bar{n}_{\rm B}+\frac{1}{2}\right),
\dots,
- \left(\bar{n}_{\rm B}+\frac{1}{2}\right)
}_{n},
- \left(\bar{n}+\frac{1}{2}\right)
\right).
\label{eq:Ali Lambda}
\end{align}
In this eigenbasis, using \eqref{eq:First Frechet derivative Hadmard product}, we have
\begin{align}
D^{(1)}_{g_{\rm CE}^{(1/2)}(A_Q)}[\Delta E]
=
W
\left[
\left(g_{\rm CE}^{(1/2)}\right)^{[1]}[\Lambda]
\circ
\left(W^\dagger\Delta E W\right)
\right]
W^\dagger.
\label{eq:Ali Dg eig}
\end{align}
The divided-difference matrix can be written in block form as
\begin{align}
\left(g_{\rm CE}^{(1/2)}\right)^{[1]}[\Lambda]
=
\begin{bmatrix}
g_{+} & g_{-}
\\
g_{-} & g_{+}
\end{bmatrix},
\end{align}
where
\begin{align}
g_{+}
&=
\begin{bmatrix}
g_1\,\mathbbm{1}_{n\times n}
&
*
\\
*
&
*
\end{bmatrix},
\label{eq:Ali gpp}
\\
g_{-}
&=
\begin{bmatrix}
*
&
g_2\,\mathbbm{1}_{n\times 1}
\\
g_2\,\mathbbm{1}_{1\times n}
&
*
\end{bmatrix},
\label{eq:Ali gpm}
\end{align}
and the coefficients are obtained by the definition of the first divided difference in \eqref{eq:first_dd}
\begin{align}
g_1
&=
-
\frac{
\left(
\sqrt{\bar{n}_{\rm B}+1}
+
\sqrt{\bar{n}_{\rm B}}
\right)^2
}{
\sqrt{
\bar{n}_{\rm B}
(\bar{n}_{\rm B}+1)
}
},\label{eq:Ali g1}\\
g_2
&=
\frac{
2
\left(
\sqrt{\bar{n}}+\sqrt{\bar{n}+1}
\right)
\left(
\sqrt{\bar{n}_{\rm B}}+\sqrt{\bar{n}_{\rm B}+1}
\right)
}{
1+\bar{n}+\bar{n}_{\rm B}
}
\nonumber\\
&\qquad\times
\left[
\sqrt{\bar{n}\bar{n}_{\rm B}}
-
\sqrt{
(\bar{n}_{\rm B}+1)(\bar{n}+1)
}
\right].
\label{eq:Ali g4}
\end{align}
Here, $*$ denotes the trivial submatrices.
The perturbation in the eigenbasis of $A_Q$ is
\begin{align}
W^\dagger\Delta E W
=
\frac{1}{2}
\begin{bmatrix}
i P
&
iT
\\
-iT
&
-iP
\end{bmatrix}
\label{eq:Ali W DeltaE W}
\end{align}
where
\begin{align}
    P &= \Delta E_+ + \Delta E_- \\
    T &= \Delta E_+ - \Delta E_-.
\end{align}
Substituting \eqref{eq:Ali W DeltaE W} into
\eqref{eq:Ali Dg eig}, we obtain
\begin{align}
D^{(1)}_{g_{\rm CE}^{(1/2)}(A_Q)}[\Delta E]
&=
\frac{i}{2}
W
\begin{bmatrix}
P\circ g_{+}
&
T\circ g_{-}
\\
-T\circ g_{-}
&
-P\circ g_{+}
\end{bmatrix}
W^\dagger.
\label{eq:Ali Dg final}
\end{align}
Substituting \eqref{eq:Ali R inverse} and
\eqref{eq:Ali Dg final} into the second-order perturbative quantum
Chernoff exponent in \eqref{eq:QCE perturb CM}, we obtain
\begin{align}
 \xi_{Q,{\rm CM}}^{(2)} 
&=
\frac{1}{4}
\trace\left[
\left(1/R_{\rm CE}^{(1/2)}\right)(A_Q)
D^{(1)}_{g_{\rm CE}^{(1/2)}(A_Q)}[\Delta E]
\right.
\nonumber\\
&\qquad\qquad\left.
\times
\left(1/R_{\rm CE}^{(1/2)}\right)(A_Q)
D^{(1)}_{g_{\rm CE}^{(1/2)}(A_Q)}[\Delta E]
\right]
\nonumber\\
&=
\frac{1}{32}
\!\trace\!\left[
\left(Q
\left(
P\circ g_{+}
\right)\right)^2
+
\left(Q
\left(
T\circ g_{-}
\right)
\right)^2
\right].
\label{eq:Ali TMSV trace}
\end{align}
Using \eqref{eq:Ali TMSV Delta Epm}, we have
\begin{align}
P
=
2
\begin{bmatrix}
\Delta K & 0
\\
0 & 0
\end{bmatrix}, \ 
T
&=
2
\begin{bmatrix}
0 & \Delta C
\\
\Delta C^T & 0
\end{bmatrix},
\end{align}
Thus
\begin{align}
P\circ g_{+}
&=
2g_1
\begin{bmatrix}
\Delta K & 0
\\
0 & 0
\end{bmatrix},  \label{eq:QF P gp}
\\
T\circ g_{-}
&=
2g_2
\begin{bmatrix}
0 & \Delta C
\\
\Delta C^T & 0
\end{bmatrix}. \label{eq:QF T gm}
\end{align}
Substituting \eqref{eq:QF P gp} and \eqref{eq:QF T gm} into \eqref{eq:Ali TMSV trace}, we have
\begin{align}
& \xi_{Q,{\rm CM}}^{(2)} \\
&=
\frac{1}{8}
\trace\Bigg[
g_1^2
Q
\begin{bmatrix}
\Delta K & 0
\\
0 & 0
\end{bmatrix}
Q
\begin{bmatrix}
\Delta K & 0
\\
0 & 0
\end{bmatrix}
\nonumber\\
&\qquad
+
g_2^2
Q
\begin{bmatrix}
0 & \Delta C
\\
\Delta C^T & 0
\end{bmatrix}
Q
\begin{bmatrix}
0 & \Delta C
\\
\Delta C^T & 0
\end{bmatrix}
\Bigg]\\
&=
\frac{g_1^2}{8}
\trace\left[
\left(
\frac{\Delta K}{4\bar{n}_{\rm B}}
\right)^2
\right]
+
\frac{g_2^2}{16\bar{n}_{\rm B}}
\|\Delta C\|^2 \label{eq:Ali second} \\
&=
\frac{g_1^2}{8}
\trace\left[
\left( \frac{\bar{n}_{\rm B}-\bar{n}}{4\bar{n}_{\rm B}} 
\left[
\kappa^{(1)}C^{(1)}C^{(1)T} \right.\right.\right.\nonumber\\
&\left.\left.\left. \qquad\qquad\qquad\qquad\qquad
-
\kappa^{(2)}C^{(2)}C^{(2)T}
\right]
\right)^2
\right]
\nonumber\\
&\quad
+
\frac{
g_2^2\bar{n}(\bar{n}+1)
}{
16\bar{n}_{\rm B}
}
\left\|
\sqrt{\kappa^{(1)}}\,C^{(1)}
-
\sqrt{\kappa^{(2)}}\,C^{(2)}
\right\|_F^2
\label{eq:Ali second explicit}
\end{align}
where \eqref{eq:Ali second explicit} follows from the definition of $\Delta K$ in \eqref{eq:QF dk} and $\Delta C$ in \eqref{eq:QF dc}.
Under \eqref{eq:Ali asymptotic regime},
$\kappa^{(k)}\bar{n}_{\rm B}=o(\bar{n})$, so that the first term in
\eqref{eq:Ali second explicit} is $o(\bar{n}^2)$ and is asymptotically negligible. Moreover,
since $\bar{n}(\bar{n}+1)\approx \bar{n}$, the second term in \eqref{eq:Ali second explicit} yields
\begin{align}
\xi_{Q,{\rm CM}}^{(2)}
\approx
\frac{
\bar{n}g_2^2
}{
16\bar{n}_{\rm B}
}
\left\|
\sqrt{\kappa^{(1)}}\,C^{(1)}
-
\sqrt{\kappa^{(2)}}\,C^{(2)}
\right\|_F^2
\label{eq:Ali last}
\end{align}
Finally,
\begin{align}
\lim_{\bar{n}_{\rm B}\rightarrow\infty}
g_2^2
=
16,
\end{align}
which yields
\begin{align}
\xi_Q^{(2)}  = \xi_{Q,{\rm CM}}^{(2)}
\approx
\frac{\bar{n}}{\bar{n}_{\rm B}}
\left\|
\sqrt{\kappa^{(1)}}\,C^{(1)}
-
\sqrt{\kappa^{(2)}}\,C^{(2)}
\right\|_F^2.
\label{eq:Ali TMSV}
\end{align}
Thus, we recover the TMSV result obtained in \cite{AliCox2024}.
Combining \eqref{eq:Ali Coh} and \eqref{eq:Ali TMSV}, we obtain
\begin{align}
\xi_Q^{(2)}
=
4\xi_C^{(2)},
\end{align}
showing the factor-of-four enhancement in the quantum Chernoff
exponent achieved by the TMSV input in this asymptotic regime.

Finally, we show that the perturbative QCE retains the
quadratic scaling with the covariance-matrix perturbation.
From \eqref{eq:Ali TMSV Delta E} and \eqref{eq:Ali TMSV Delta Epm}, we have
\begin{equation}
    \|\Delta E\|_F^2
    = 2\|\Delta K\|_F^2 + 4\|\Delta C\|_F^2.
\end{equation}
Since $g_1$ and $g_2$ in \eqref{eq:Ali g1} and \eqref{eq:Ali g4} remain bounded
in the asymptotic regime \eqref{eq:limit regime}, \eqref{eq:Ali second} gives
\begin{align}
    \xi_Q^{(2)}
    &= \frac{g_1^2}{128\bar n_B^2}\|\Delta K\|_F^2
    + \frac{g_2^2}{64\bar n_B}\|\Delta C\|_F^2
    \nonumber\\
    &= O\left(\|\Delta K\|_F^2
    + \|\Delta C\|_F^2\right)
    \nonumber\\
    &= O\left(\|\Delta E\|_F^2\right) \nonumber \\
    &= O\left(\|\Delta E\|^2\right),
\end{align}
where the last equality follows from the equivalence of
the Frobenius and operator norms in fixed finite dimensions.
Thus, the quantum-illumination problem retains the
usual quadratic perturbative scaling for a reference
state whose symplectic eigenvalues are all greater than $1/2$.

\end{document}